\def\BibTeX{{\rm B\kern-.05em{\sc i\kern-.025em b}\kern-.08em T\kern-.1667em\lower.7ex\hbox{E}\kern-.125emX}}
\newcommand{\setX}{\mathbbmss{X}}
\newcommand{\setR}{\mathbbmss{R}}
\newcommand{\setZ}{\mathbbmss{Z}}
\newcommand{\setC}{\mathbbmss{C}}
\newcommand{\setK}{\mathbbmss{K}}
\newcommand{\alg}{\mathcal{A}}
\newcommand{\sinr}{\mathrm{SINR}}
\newcommand{\esnr}{\mathrm{ESNR}}
\newcommand{\zz}{\mathrm{z}}
\newcommand{\rmq}{\mathrm{q}}
\newcommand{\rmF}{\mathrm{F}}
\newcommand{\rmr}{\mathrm{r}}
\newcommand{\rmA}{\mathrm{A}}
\newcommand{\rms}{\mathrm{s}}
\newcommand{\rmj}{\mathrm{j}}
\newcommand{\rmm}{\mathrm{m}}
\newcommand{\ssr}{\mathrm{ssr}}
\newcommand{\ee}{\mathrm{e}}
\newcommand{\her}{\mathsf{H}}
\newcommand{\maxx}{\mathrm{max}}
\newcommand{\mar}{\mathcal{R}}
\newcommand{\maP}{\mathcal{P}}
\newcommand{\mam}{\mathcal{M}}
\newcommand{\maq}{\mathcal{Q}}
\newcommand{\mal}{\mathcal{L}}
\newcommand{\bxx}{\mathbf{x}}
\newcommand{\bff}{\mathbf{f}}
\newcommand{\bq}{\mathbf{q}}
\newcommand{\mpsi}{\boldsymbol{\psi}}
\newcommand{\bpi}{\boldsymbol{\varpi}}
\newcommand{\bss}{\mathbf{s}}
\newcommand{\bvv}{\mathbf{v}}
\newcommand{\buu}{\mathbf{u}}
\newcommand{\mh}{\mathbf{h}}
\newcommand{\bx}{{\boldsymbol{x}}}
\newcommand{\vv}{\mathrm{v}}
\newcommand{\yy}{\mathrm{y}}
\newcommand{\brr}{\mathbf{r}}
\newcommand{\set}[1]{\left\lbrace#1\right\rbrace}
\newcommand{\bw}{{\mathbf{w}}}
\newcommand{\diag}{{\mathrm{diag}}}
\newcommand{\bgg}{{\mathbf{g}}}
\newcommand{\btt}{{\mathbf{t}}}
\newcommand{\rd}{\mathrm{d}}
\newcommand{\bphi}{{\boldsymbol{\phi}}}
\newcommand{\trp}{\mathsf{T}}
\newcommand{\mI}{\mathbf{I}}
\newcommand{\mone}{\mathbf{1}}
\newcommand{\mG}{\mathbf{G}}
\newcommand{\mPhi}{\mathbf{\Phi}}
\newcommand{\mQ}{\mathbf{U}}
\newcommand{\mU}{\mathbf{U}}
\newcommand{\mM}{\mathbf{M}}
\newcommand{\mb}{\mathbf{b}}
\newcommand{\mT}{\mathbf{T}}
\newcommand{\mH}{\mathbf{H}}
\newcommand{\mW}{\mathbf{W}}
\newcommand{\malpha}{\boldsymbol{\alpha}}
\newcommand{\mbeta}{\boldsymbol{\beta}}
\newcommand{\mgamma}{\boldsymbol{\gamma}}
\newcommand{\Ex}[1]{\mathbb{E} \left\lbrace #1 \right\rbrace }
\newcommand{\norm}[1]{\left\lVert #1 \right\rVert}
\newcommand{\dbc}[1]{\left[ #1 \right]}
\newcommand{\brc}[1]{\left( #1 \right)}
\newcommand{\abs}[1]{\lvert #1 \rvert}
\theoremstyle{plain}
\newtheorem{theorem}{Theorem}
\newtheorem{definition}{Definition}
\theoremstyle{remark}
\newtheorem{remark}{Remark}
\theoremstyle{Lemma}
\newtheorem{lemma}{Lemma}
\newcommand\algorithmicinput{\textbf{Input:}}
\newcommand\INPUT{\item[\algorithmicinput]}
\DeclareMathOperator*{\argmin}{argmin}
\DeclareMathOperator*{\argmax}{argmax}
\newcounter{bar}
\newcommand{\mt}{\mathrm{t}}
\begin{document}

\title{Designing IRS-Aided MIMO Systems for Secrecy Enhancement}

\author{Saba Asaad, \IEEEmembership{Member, IEEE},
	Yifei Wu,
	Ali Bereyhi \IEEEmembership{Member, IEEE},\\
	Ralf R. M\"uller, \IEEEmembership{Senior Member, IEEE}, 
	Rafael F. Schaefer, \IEEEmembership{Senior Member, IEEE},
	and H. Vincent Poor, \IEEEmembership{Fellow, IEEE}
\thanks{Saba Asaad, Yifei Wu, Ali Bereyhi and Ralf R. M\"uller are with the Institute for Digital Communications, Friedrich-Alexander Universit\"at Erlangen-N\"urnberg, Germany, \texttt{\{saba.asaad, yifei.wu, ali.bereyhi, ralf.r.mueller\}@fau.de}. Rafael F. Schaefer is with the Chair of Communications Engineering and Security, University of Siegen, Germany, \texttt{rafael.schaefer@uni-siegen.de}. H.~Vincent~Poor is with the Department of Electrical and Computer Engineering, Princeton University, NJ 08544, USA, \texttt{poor@princeton.edu}.}
\thanks{This work was supported by Deutsche Forschungsgemeinschaft (DFG) under Grant MU 3735/7-1 Project-No. 409561515, and in part by the German Federal Ministry for Education and Research (BMBF) under Grant 16KIS1242.}
}

\IEEEoverridecommandlockouts
\maketitle

\begin{abstract}
Intelligent reflecting surfaces (IRSs) enable multiple-input multiple-output (MIMO) transmitters to modify the communication channels between the transmitters and receivers. In the presence of eavesdropping terminals, this degree of freedom can be used to effectively suppress the information leakage towards such malicious terminals. This leads to significant potential secrecy gains in IRS-aided MIMO systems. This work exploits these gains via a tractable \textit{joint} design of downlink beamformers and IRS phase-shifts. In this respect, we consider a generic IRS-aided MIMO wiretap setting and invoke fractional programming and alternating optimization techniques to iteratively find the beamformers and phase-shifts that maximize the achievable weighted secrecy sum-rate. Our design concludes two low-complexity algorithms for joint beamforming and phase-shift tuning. Performance of the proposed algorithms are numerically evaluated and compared to the benchmark. The results reveal that integrating IRSs into MIMO systems not only boosts the secrecy performance of the system, but also improves the robustness against passive eavesdropping.
\end{abstract}

\begin{IEEEkeywords}
Intelligent reflecting surfaces, physical layer security, fractional programming, block coordinate decent, majorization-maximization method, alternating optimization.
\end{IEEEkeywords}

\section{Introduction}
Over the past few years, the unprecedented growth of data traffic due to the popularity of smart devices has created many challenges in the design of next-generation wireless networks \cite{saad2019vision,yaacoub2020key,letaief2019roadmap}. To address these challenges, various advanced technologies have been proposed in recent years. Examples of such technologies are massive \ac{mimo} systems \cite{larsson2014massive,hoydis2013massive,yang20196g}, \ac{mmwave} communications \cite{wang2018millimeter,rappaport2013millimeter,wang20206g}, and ultra-dense heterogeneous networks \cite{an2017achieving,niu2017fast,zhang2020envisioning}, just to name a few. Although employing these key technologies can significantly enhance the spectral efficiency of wireless networks, practical limits related to energy consumption, hardware cost, and transceiver complexity are still considered burdensome \cite{malkowsky2017world,huo2019enabling}. \ac{irs}-aided \ac{mimo} communications has recently attracted considerable attention as an effective solution to these implementational issues \cite{wu2020intelligent,pan2020multicell}.

Recent advances in the design of reflecting surfaces have introduced \acp{irs} as efficient components which can help overcoming various challenges of earlier enabling technologies \cite{ozdogan2019intelligent}. In fact, the ability of \acp{irs} in beamforming and reflecting the received signals with neither noise amplification nor self-interference make them different from basic relaying components, e.g., conventional repeaters, in wireless networks \cite{bjornson2019intelligentRelay}. \acp{irs} are flexibly installed on room ceiling, interior walls, and building facades \cite{hu2018beyond}. Hence, from the implementational viewpoint, \ac{irs}-aided architectures can significantly reduce implementation cost without considerable performance degradation.

A primary application of \acp{irs} in wireless communications is to employ them for realizing a massive \ac{mimo} transmitter in a distributed fashion \cite{bjornson2019massive,wu2019intelligent}: Instead of gathering a vast number of antennas at a single \ac{bs}, an \ac{irs} is placed out of the location of a \ac{bs} with a limited number of antennas. Despite conceptual differences to massive \ac{mimo},~such settings are shown to sustain the fundamental features of massive \ac{mimo} systems while enjoying low implementational complexity \cite{bjornson2020power,yue2020analysis}. \acp{irs} can further enable cost-efficient realizations of hybrid analog digital architectures \cite{jamali2019intelligent,bereyhi2019papr,bereyhi2020single,karasik2021single}. In such architectures, the analog network is implemented via a set of \ac{rf} transmitters illuminating the digitally precoded signals towards an \ac{irs}, and the analog beamforming is performed by proper phase-shifting at the \ac{irs}. As the physical analog network is replaced via a \textit{noise-free wireless channel}, these architectures significantly reduce the power-loss. Interestingly, this gain is obtained at no considerable degradation in the quality of downlink transmission \cite{jamali2019intelligent}. 

In addition to the mentioned applications, \acp{irs} potentially provide benefits in terms of secrecy \cite{dong2020enhancing,lu2020robust,hong2020artificial}, network coverage \cite{cao2019intelligent} and spectrum efficiency \cite{huang2019reconfigurable, liu2020reconfigurable}. A comprehensive overview of \ac{irs}-aided  wireless communications is provided in \cite{wu2019towards} where the authors discuss main applications, channel estimation, deployment challenges and hardware issues.

\subsection{Secrecy in IRS-aided MIMO Communications}
In an \ac{irs}-aided wireless network, the phase-shifts of the elements at the \acp{irs} can be tuned such that the reflected signals from the \acp{irs} will be
\begin{itemize}
	\item \textit{constructively} combined with the copy of the transmitted signal received through the direct links between the \ac{bs} and the \textit{legitimate} terminals, and
	\item \textit{destructively} added to the signals received by \textit{eavesdroppers} via their respective direct links.
\end{itemize}
This property enables \ac{irs}s to greatly enhance the physical layer security of \ac{mimo} settings. This potential however comes at the price of design complexity. In fact, to employ effectively an \ac{irs}, one needs to jointly perform beamforming and phase-shift tuning. This often lead to a computationally intractable optimization; see for example discussions in \cite{shen2019secrecy}. As a result, most primary studies consider the design of \ac{irs}-aided \ac{mimo} systems for simplified settings; see for instance \cite{shen2019secrecy, cui2019secure, wang2020intelligent}. 

A fundamental study on the secrecy performance of \ac{irs}-aided settings is given in \cite{shen2019secrecy}, where the authors considered a simple wiretap setting with a single legitimate receiver and a single eavesdropper. The investigations are further extended to a slightly different setting with a superior eavesdropping channel in \cite{cui2019secure}. The study in \cite{yang2020secrecy} investigates an \ac{irs}-aided  \ac{miso} transmission in a basic wiretap channel with a single-antenna eavesdropper. Using deep learning for phase-shift tuning and downlink beamforming is considered in \cite{song2020truly}, and the performance is compared with the conventional methods in \cite{shen2019secrecy} and \cite{cui2019secure}.

In addition to the passive beamforming gain achieved by \acp{irs}, the extra degrees of freedom provided by these surfaces lead to secrecy enhancements in various other respects. For instance, the studies in \cite{guan2020intelligent,xu2019resource,hong2020artificial} demonstrate that generating artificial noise and applying controlled jamming by the \ac{bs} in \ac{irs}-assisted  \ac{miso} settings result in significantly better secrecy performance than in classic settings without \acp{irs}. Another example is the robustness of \ac{irs}-assisted \ac{mimo} settings against \textit{active} eavesdropping reported recently in \cite{bereyhi2020secure}. In this study, an \ac{irs}-assisted wiretap setting with active eavesdroppers is considered that contaminate the uplink training pilots of the legitimate \acp{ut} to increase their received information leakage. The results of this study reveal that the further degrees of freedom achieved by the \ac{irs} enables us to statistically blind the active eavesdroppers. This is in contrast to classical massive \ac{mimo} settings that are known to be non-robust against active eavesdropping; see \cite{bereyhi2018robustness,bereyhi2019robustness,kapetanovic2015physical} and the references therein.

The basic studies on secrecy performance of \ac{irs}-aided \ac{mimo} settings are further extended in multiple directions; see for instance \cite{wang2020energy, chen2019intelligent, yu2020robust}. In \cite{wang2020energy}, the authors investigate the secrecy performance of a cooperative jamming strategy applied by multiple eavesdroppers employing an \ac{irs} to attack a single-antenna legitimate \ac{ut}. Secure downlink transmission in a multiuser \ac{irs}-aided \ac{miso} system with multiple legitimate and malicious \acp{ut} is further studied in \cite{chen2019intelligent}. Despite extending the basic investigations to multiuser settings, the work considers a special scenario in which all \acp{ut} are located in the same direction of the transmitter and hence their channels are highly correlated. The work is extended in \cite{yu2020robust} with respect to the channel model. The latter study is still restricted to cases with blocked direct channels from the transmitter to the \acp{irs} and eavesdroppers.

\subsection{Contributions}
As mentioned, the results in the literature imply that integrating \acp{irs} into \ac{mimo} systems can lead to significant enhancements in terms of secrecy performance. Although this finding is concluded from several recent lines of work, a design of secure active and passive beamforming in \ac{irs}-aided \ac{mimo} systems is yet to be addressed. The main goal of this work is to design a secure joint active, i.e., precoding at the transmitter, and passive beamforming, i.e., phase-shift tuning at \acp{irs}, by which the potential secrecy gains of \ac{irs}-aided settings are exploited. This goal is reached through the following contributions:
\begin{itemize}
	\item First the secrecy performance of an \ac{irs}-aided multiuser \ac{mimo} wiretap setting is described in terms of the achievable weighted secrecy sum-rate. The beamforming design is then formulated as a non-convex optimization problem. To cope with the non-convexity, we initially derive a parameterized variational problem whose solution returns the optimal design. We then invoke \ac{fp} \cite{shen2018FP,shen2018FP2} to transform the variational problem into a quadratic problem. \Ac{fp} has been widely used for throughput maximization in communication systems. The most relevant instance is the work in \cite{guo2020weighted}, in which the authors use \ac{fp} for weighted  sum-rate maximization in an \ac{irs}-aided \ac{mimo} system. Although the derivations in \cite{guo2020weighted} consider a generic \ac{irs}-aided \ac{mimo} system, extending them to a wiretap setting is not straightforward. We address this issue by dividing the variational optimization into multiple \textit{marginal} problems and by \textit{conditionally} bounding the objective of each marginal problem. The final solution is then found by following the \ac{ao} strategy.
	\item Using \ac{ao}, the final solution is approximated via an iterative algorithm which deals with multiple inner loops. This leads to high computational complexity. We hence develop two algorithms with reduced complexity; namely a \textit{two-tiers} and a \textit{single-loop} algorithm. For each of these algorithms, we show that the weighted secrecy sum-rate increases monotonically as the algorithm iterates.  
	\item To confirm the validity of our derivations, we investigate the performance of the proposed algorithms by running several numerical experiments. The results are compared to multiple reference scenarios, e.g., the scenarios with no \ac{irs} and random phase shift, as well as the state-of-the-art. Our investigations confirm that the proposed scheme exploits the potential secrecy gains of \ac{irs}-aided systems efficiently, while imposing a tractable processing load on to the system.
\end{itemize}

\subsection{Notation and Organization}
Scalars, vectors and matrices are represented with non-bold, bold lower-case, and bold upper-case letters, respectively. $\mH^{\her}$ indicates the transposed conjugate of $\mH$ and $\mI_N$ is an $N\times N$ identity matrix. The $\ell_2$-norm of $\bx$ is denoted by $\norm{\bx}$, and $\norm{\mW}_F$ represents the Frobenius norm of matrix $\mW$. $\log$ indicates logarithm to the base $2$. $\setR$ and $\setC$ refer to the real axis and the complex plane, respectively. For $z\in\setC$, $z^*$, $\Re\set{z}$, and $\Im\set{z}$ denote the complex conjugate, real part, and imaginary part of $z$, respectively. $\mathcal{CN}\brc{\eta,\sigma^2}$ represents the complex Gaussian distribution with mean $\eta$ and variance $\sigma^2$. For sake of brevity, $\set{1,\ldots,N}$ is shown by $\dbc{N}$.

The remaining parts of this manuscript are organized as follows: The system model is described in Section~\ref{Sec:SysMod}. In Section~\ref{Sec:Joint}, the design problem is formulated and a variational problem with analytic objective is derived. Section~\ref{Sec:FP} gives a quick introduction to \ac{fp} and the key analytical tools used in the paper. A class of alternative algorithms based on the \ac{ao} strategy is then derived in Section~\ref{sec:ALG_1}. Section~\ref{Sec:Red} proposes algorithms with reduced complexity and discusses their convergence. Numerical investigations are presented in Section~\ref{Sec:Num}. Finally, the manuscript is concluded in Section~\ref{Sec:Conc}.
 
\section{Problem Formulation}
\label{Sec:SysMod}
Secure downlink transmission in an \ac{irs}-aided broadcast setting is considered. For sake of brevity, we focus on a single cell in which a \ac{bs} with $M$ transmit antennas serves $K$ single-antenna legitimate users. $J$ single-antenna eavesdroppers passively overhear the channel. To improve the downlink communication links, a passive \ac{irs} unit with $N$ programmable phase shifters is deployed. The elements on this unit receive copies of the signal transmitted by the \ac{bs} and reflect them after applying phase shifts. These phase shifts are adjustable and controlled by a central control unit. A schematic diagram of the setting is represented in Fig.~\ref{fig:1}.

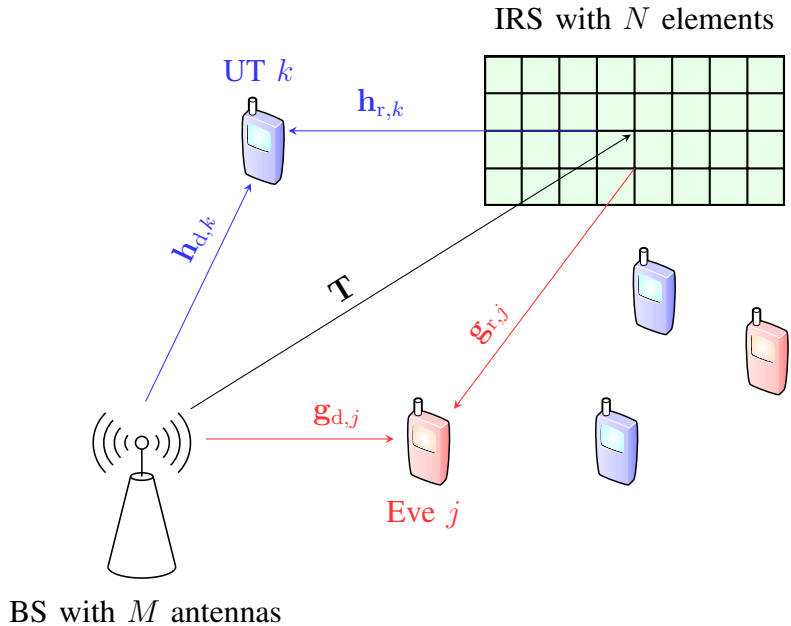
\begin{figure}
	\centering
	\begin{tikzpicture}
\tikzset{mobile phone/.pic={
		code={
			\begin{scope}[line join=round,looseness=0.25, line cap=round,scale=0.07, every node/.style={scale=0.07}]
				\begin{scope}
					\clip [preaction={left color=blue!10, right color=blue!30}] 
					(1/2,-1) to [bend left] (0,10)
					to [bend left] ++(1,1) -- ++(0,2)
					arc (180:0:3/4 and 1/2) -- ++(0,-2)
					to [bend left]  ++(5,-2) coordinate (A) to [bend left] ++(-1/2,-11)
					to [bend left] ++(-1,-1) to [bend left] cycle;
					\path [left color=blue!30, right color=blue!50]
					(A) to [bend left] ++(0,-11) to[bend left] ++(-3/2,-2)
					-- ++(0,12);
					\path [fill=blue!20, draw=white, line width=0.01cm]
					(0,10) to [bend left] ++(1,1) -- ++(0,2)
					arc (180:0:3/4 and 1/2) -- ++(0,-2)
					to [bend left]  (A) to [bend left] ++(-3/2,-5/4)
					to [bend right] cycle;
					\draw [line width=0.01cm, fill=white]
					(9/8,21/2) arc (180:360:5/8 and 3/8) --
					++(0,2.5) arc (0:180:5/8 and 3/8) -- cycle;
					\draw [line width=0.01cm, fill=white]
					(9/8,13) arc (180:360:5/8 and 3/8);
					\fill [white, shift=(225:0.5)] 
					(1,17/2) to [bend left] ++(4,-7/4)
					to [bend left] ++(0,-7/2) to [bend left] ++(-4, 6/4)
					to [bend left] cycle;
					\fill [black, shift=(225:0.25)] 
					(1,17/2) to [bend left] ++(4,-7/4)
					to [bend left] ++(0,-7/2) to [bend left] ++(-4, 6/4)
					to [bend left] cycle;
					\shade [inner color=white, outer color=cyan!20] 
					(1,17/2) to [bend left] ++(4,-7/4)
					to [bend left] ++(0,-7/2) to [bend left] ++(-4, 6/4)
					to [bend left] cycle;
					%
				\end{scope}
				\draw [line width=0.02cm] 
				(1/2,-1) to [bend left] (0,10)
				to [bend left] ++(1,1) -- ++(0,2)
				arc (180:0:3/4 and 1/2) -- ++(0,-2)
				to [bend left]  ++(5,-2) to [bend left] ++(-1/2,-11)
				to [bend left] ++(-1,-1) to [bend left] cycle;
			\end{scope}%
}}}

\tikzset{eve/.pic={
		code={
			\begin{scope}[line join=round,looseness=0.25, line cap=round,scale=0.07, every node/.style={scale=0.07}]
				\begin{scope}
					\clip [preaction={left color=red!10, right color=red!30}] 
					(1/2,-1) to [bend left] (0,10)
					to [bend left] ++(1,1) -- ++(0,2)
					arc (180:0:3/4 and 1/2) -- ++(0,-2)
					to [bend left]  ++(5,-2) coordinate (A) to [bend left] ++(-1/2,-11)
					to [bend left] ++(-1,-1) to [bend left] cycle;
					\path [left color=red!30, right color=red!50]
					(A) to [bend left] ++(0,-11) to[bend left] ++(-3/2,-2)
					-- ++(0,12);
					\path [fill=red!20, draw=white, line width=0.01cm]
					(0,10) to [bend left] ++(1,1) -- ++(0,2)
					arc (180:0:3/4 and 1/2) -- ++(0,-2)
					to [bend left]  (A) to [bend left] ++(-3/2,-5/4)
					to [bend right] cycle;
					\draw [line width=0.01cm, fill=white]
					(9/8,21/2) arc (180:360:5/8 and 3/8) --
					++(0,2.5) arc (0:180:5/8 and 3/8) -- cycle;
					\draw [line width=0.01cm, fill=white]
					(9/8,13) arc (180:360:5/8 and 3/8);
					\fill [white, shift=(225:0.5)] 
					(1,17/2) to [bend left] ++(4,-7/4)
					to [bend left] ++(0,-7/2) to [bend left] ++(-4, 6/4)
					to [bend left] cycle;
					\fill [black, shift=(225:0.25)] 
					(1,17/2) to [bend left] ++(4,-7/4)
					to [bend left] ++(0,-7/2) to [bend left] ++(-4, 6/4)
					to [bend left] cycle;
					\shade [inner color=white, outer color=orange!20] 
					(1,17/2) to [bend left] ++(4,-7/4)
					to [bend left] ++(0,-7/2) to [bend left] ++(-4, 6/4)
					to [bend left] cycle;
					%
				\end{scope}
				\draw [line width=0.02cm] 
				(1/2,-1) to [bend left] (0,10)
				to [bend left] ++(1,1) -- ++(0,2)
				arc (180:0:3/4 and 1/2) -- ++(0,-2)
				to [bend left]  ++(5,-2) to [bend left] ++(-1/2,-11)
				to [bend left] ++(-1,-1) to [bend left] cycle;
			\end{scope}%
}}}

\tikzset{radiation/.style={{decorate,decoration={expanding waves,angle=90,segment length=4pt}}},
	antenna/.pic={
		code={\tikzset{scale=3/10}
			\draw[semithick] (0,0) -- (1,4);
			\draw[semithick] (3,0) -- (2,4);
			\draw[semithick] (0,0) arc (180:0:1.5 and -0.5);
			\node[inner sep=4pt] (circ) at (1.5,5.5) {};
			\draw[semithick] (1.5,5.5) circle(8pt);
			\draw[semithick] (1.5,5.5cm-8pt) -- (1.5,4);
			\draw[semithick] (1.5,4) ellipse (0.5 and 0.166);
			\draw[semithick,radiation,decoration={angle=45}] (1.5cm+8pt,5.5) -- +(0:2);
			\draw[semithick,radiation,decoration={angle=45}] (1.5cm-8pt,5.5) -- +(180:2);
	}}
}

\tikzset{
	irs/.pic={\clip[postaction={shade,left color=green!10,right color = green!10}](0,0) rectangle (4,2);
		\draw[thick] (0,0) grid[step=0.5] (4,2);
		\draw[ultra thick](0,0) rectangle (4,2);}
}

	\path (0,-.8) pic {antenna};
	\node at (.5,-1.4) (UTK) {BS with $M$ antennas};
	
	\path (4,.5) pic {eve};
	\node[red!80] at (4.2,-.1) (EveJ) {Eve $j$};
	
	\path (8.5,1.7) pic {eve};
	
	\path (5,4) pic {irs};
	\node at (7,6.5) (UTK) {IRS with $N$ elements};
	
	\node[blue!80] at (2,5.8) (UTK) {UT $k$};
	
	\path (1.8,4.5) pic {mobile phone};
	\path (7,2.5) pic {mobile phone};
	\path (6.5,.5) pic {mobile phone};
	
	\draw [->,>=stealth] (1.1,1.3) -- (6.95,4.95) node [above, sloped,pos=.37] (d) {$\mT$};
	\draw [->,>=stealth,blue!80] (6.5,5) -- (2.4,5) node [above, sloped,pos=.7] (d) {$\mh_{\rmr, k}$};
	\draw [->,>=stealth,red!80] (1.3,.9) -- (3.8,.9) node [above, sloped,pos=.7] (d) {$\bgg_{\rd, j}$};
	\draw [->,>=stealth,red!80] (7,4.5) -- (4.6,1.3) node [above, sloped,pos=.7] (d) {$\bgg_{\rmr, j}$};
	\draw [->,>=stealth,blue!80] (.5,1.4) -- (1.9,4.3) node [above, sloped,pos=.7] (d) {$\mh_{\rd, k}$};
%

\end{tikzpicture}
	\caption{A schematic representation of the system model. In this diagram, the blue and red UTs denote the legitimate users and eavesdroppers, respectively. In this model, both the direct and reflection paths are available between the BS and a UT.}
		\label{fig:1}
\end{figure}

We consider transmission over a quasi-static slow fading channel which models either a narrow-band single-carrier system or a particular sub-channel of a wide-band multi-carrier system.
The system operates in the \ac{tdd} mode. This means that the uplink and downlink transmissions are performed at the same carrier frequency. As the result, the uplink and downlink channels are \textit{reciprocal}. Following this property, the \ac{bs} acquires the \ac{csi} in the uplink training phase and directly applies it to the downlink signal transmission. 

For channel estimation, we assume that the transmitter uses a classic channel estimation algorithm for an \ac{irs}-aided \ac{mimo} system, e.g., \cite{wang2019channel,he2019cascaded}.  It is further assumed that the eavesdroppers are registered users in the system. This means that during the uplink training phase, the \ac{bs} also acquires the \ac{csi} of the eavesdroppers. To keep the analysis tractable, we further ignore the channel estimation error and accordingly assume that the \ac{csi} of the receive terminals are perfectly available at the \ac{bs}.

\begin{remark}
It is worth mentioning that in this work we focus on \textit{passive} eavesdropping. Such a setting models broadcast scenarios with confidential messages in which the eavesdroppers are registered \acp{ut}. These \acp{ut} are supposed to receive the \textit{common} messages broadcasted in the network, but should receive no information regarding the \textit{confidential} messages; see for example studies in \cite{Liu2010multiple,Liu2013new} and references therein. The act of eavesdropping in such networks is conceptually different from overhearing the \textit{common} message. In the latter case, eavesdroppers are \textit{not} registered users and are purely employed to eavesdrop upon a message which is supposed to be received only by the registered users. In such a scenario, the eavesdroppers either do not participate in the uplink training phase\footnote{Hence, their \ac{csi} is not available at the \ac{bs}.} or perform an \textit{active} attack\footnote{And hence, they contaminate the acquired \ac{csi}.}, e.g., the active pilot attack; see for instance the system model in \cite{bereyhi2020secure}.
\end{remark}

\subsection{System Model}
To model the downlink transmission in this setting, we note that each \ac{ut} receives a superposition of two signals: 
\begin{enumerate}
\item A signal which is received through the direct path\footnote{Note that the direct path is not necessarily the \textit{line-of-sight}. It merely refers to the channel between the \ac{bs} and a \ac{ut} which may include a line-of-sight path and/or scatterings.} between the \ac{bs} and the \ac{ut}, and
\item a signal which is reflected via the \ac{irs}.
\end{enumerate}
Let $\bxx \in \setC^M $ contain the signal samples being transmitted in a particular symbol interval via the transmit antennas at the \ac{bs}. The received signal at legitimate  \ac{ut} $k\in \dbc{K}$ is hence given by
\begin{align}
	\yy_k=\mh_{\rd, k}^\her \bxx + \mh_{\rmr, k}^\her \mPhi^\her \brr + \vartheta_k. \label{eq:y_k}
\end{align}
In \eqref{eq:y_k}, $\mPhi$, $\brr$, $\vartheta_k$, $\mh_{\rd, k}$ and $\mh_{\rmr, k}$ are defined as follows:
\begin{itemize}
	\item $\mPhi\in \setC^{N\times N}$ is a diagonal matrix modeling the phase-shifts applied by the elements on the \ac{irs}-unit,
	\item $\brr$ is the signal received by the \ac{irs}-unit from the \ac{bs},
	\item $\vartheta_k$ is complex Gaussian noise with zero mean and variance $\sigma_{k}^2$, i.e., $\vartheta_k\sim \mathcal{CN}\brc{0, \sigma_{k}^2}$,
	\item $\mh_{\rd, k} \in \setC^{M}$ represents the conjugate of direct uplink channel between \ac{ut} $k$ and the \ac{bs}\footnote{Representing the uplink channel vectors via conjugated vectors is for sake of simplicity.},
	\item $\mh_{\rmr, k} \in \setC^{N}$ denotes the conjugate of uplink channel between \ac{ut} $k$ and the \ac{irs}-unit.
\end{itemize}
Following the fact that the reflecting elements are passive, we can write $\mPhi= \diag\set{\bphi} $, for some $\bphi=\dbc{\phi_1, \cdots, \phi_N}^\trp$ where $\phi_n$ is of the form
\begin{align}
	\phi_n=\beta_n e^{-\rmj \theta_n}
\end{align}
for some $\beta_n \in \dbc{ 0, 1 }$ and $\theta_n\in \left[0, 2\pi \right)$ denoting the attenuation coefficient and the phase shift applied by the $n$-th reflecting element, respectively. We assume that the attenuation coefficients of all \ac{irs} elements are identical. In other words, we consider a case in which $\beta_n$ for $n\in\dbc{N}$ are not \textit{tunable}\footnote{This means that the control unit which controls the \ac{irs} does not update the attenuation coefficients.}. Such an assumption follows from the following two facts: 
\begin{enumerate}
	\item Current technology suggests that implementationally efficient \acp{irs} consist of elements whose attenuation characteristics are not tunable \cite{PhysRevB.94.075142}. As the result, practical designs for \ac{irs}-assisted communications ignore this degree of freedom\footnote{Although it is theoretically possible to be considered.}.
	\item Due to the long distance between the \ac{bs} and \ac{irs}, the so-called \textit{tapering} effect is negligible in our setting and can be included in the path-loss model: The \textit{tapering} effect refers to the non-uniform distribution of attenuation coefficients on the elements of an \ac{irs}, due to their different distances and angles of arrival to the transmit array antenna. This effect is tangible in settings in which the \ac{irs} is a part of the transmitter architecture, e.g., the scenarios in \cite{jamali2019intelligent,bereyhi2019papr,bereyhi2020single}. This is not the case in our setting, as we assume that the \ac{irs} is a passive element located in a \textit{long} distance out of the \ac{bs} site. As the result, the variation of the attenuation coefficients from one \ac{irs} element to another is insignificant. This small tapering effect can be ignored at the \ac{irs} and be incorporated in the channel path-loss.
\end{enumerate}
Following the above discussions, we set $\beta_n=1$ for $ n \in \dbc{N}$.

The received signal $\brr$ is further given in terms of $\bxx$ as
\begin{align}
	\brr = \mT^\her \bxx,
\end{align}
where $\mT^*\in \setC^{M\times N}$ contains the coefficients of the uplink channels from the \ac{irs} elements to the \ac{bs}. Consequently, the input-output relation in \eqref{eq:y_k} can be represented as
\begin{align}
	\yy_k = \tilde{\mh}_k^\her \brc{\bphi} \bxx + \vartheta_k,
\end{align}
\begin{subequations}
where $\tilde{\mh}_k \brc{\bphi} \in\setC^{M}$ describes the \textit{effective end-to-end uplink channel} between the $k$-th legitimate \ac{ut} and the \ac{bs}, and is given by
\begin{align}
	\tilde{\mh}_k \brc{\bphi} &= \mh_{\rd, k}+\mT  \mPhi \mh_{\rmr, k}\\
	&=\mh_{\rd, k}+\mT \;  \diag\{\mh_{\rmr, k}\} \; \bphi\\
	&=\mh_{\rd, k}+\mH_k \bphi \label{eq:htrnsf}
\end{align}
\end{subequations}
with $\mH_k=\mT \;  \diag\{\mh_{\rmr, k}\}$. $\mH_k$ can be regarded as the cascaded effective uplink channel between \ac{ut} $k$ and the \ac{bs} through the \ac{irs}. We further define the following notation:
\begin{align}
	\tilde{\mH} \brc{\bphi} &=   \dbc{\tilde{\mh}_1 \brc{\bphi}, \cdots, \tilde{\mh}_K \brc{\bphi} }
\end{align}
and refer to it as the \textit{legitimate channel matrix}.

With a similar approach,  the signal received by eavesdropper $j$ is written as
\begin{align}
	\zz_j=\tilde{\bgg}_j^\her \brc{\bphi} \bxx +\xi_j
\end{align}
	\begin{subequations}
where $\xi_j \sim \mathcal{CN}\brc{0, \mu_j^2}$ is Gaussian noise, and $\tilde{\bgg}_j\brc{\bphi} \in\setC^{M}$ describes the \textit{effective end-to-end uplink channel} from eavesdropper $j$ to the \ac{bs} being given by
\begin{align}
	\tilde{\bgg}_j \brc{\bphi} &=\bgg_{\rd, j}+\mT \mPhi \bgg_{\rmr, j} \\
	&=\bgg_{\rd, j}+\mT \; \diag\{\bgg_{\rmr, j}\} \; \bphi \\
	&=\bgg_{\rd, j}+\mG_j \bphi	\label{eq:gtrnsf}.
\end{align}
\end{subequations}
In \eqref{eq:gtrnsf}, $\bgg_{\rd, j}$ and $\bgg_{\rmr, j}$ are the conjugates of the channels corresponding to the path from eavesdropper $j$ to the \ac{bs} and the \ac{irs}-unit, respectively.  $\mG_j$ further represents the effective uplink channel between the $j$-th eavesdropper and the \ac{bs} through the \ac{irs} and is defined as 
\begin{align}
	\mG_j=\mT \; \diag\set{\bgg_{\rmr, j}}
\end{align}

For sake of brevity, we further define the notations
\begin{subequations}
	\begin{align}
		\tilde{\mG} \brc{\bphi} &=   \dbc{\tilde{\bgg}_1 \brc{\bphi}, \cdots, \tilde{\bgg}_J \brc{\bphi} },\\
		\mathbf{\Upsilon}_{\ee}    &=   \diag \set{ \frac{1}{\mu_1}, \cdots, \frac{1}{\mu_J} }
	\end{align}
\end{subequations}
and refer to them as the \textit{eavesdropping channel} and \textit{precision} matrix, respectively.

\subsection{Linear Precoding at the BS}
The \ac{bs} constructs the transmit signal from the encoded information symbols of the legitimate \acp{ut}, i.e,  $s_1, \cdots, s_K$, via linear precoding. Let $\bw_k$ denote the precoding vector for \ac{ut} $k$. The transmit signal in this case is given by
 \begin{align}
	\bxx = \sum_{k=1}^K  s_k \bw_k.
\end{align}
We assume that the encoded information symbols are \ac{iid} standard complex Gaussian random variables, i.e., $\bss\sim \mathcal{CN}\brc{\boldsymbol{0},\mI_K}$, where $\bss=\dbc{s_1, \cdots, s_K}$. 

In order to restrict the transmit power to the maximum allowed power $P_\maxx$, precoding vectors are scaled, such that 
\begin{align}
\Ex{ \norm{\bxx}^2}\leq P_{\maxx}.	
\end{align}
Defining $\mW=\dbc{\bw_1, \cdots, \bw_K}$ as the \textit{precoding matrix}, the transmit signal is compactly written as $\bxx=\mW\bss$. The power constraint can then be written as 
\begin{align}
	\Ex{ \norm{\mW\bss}^2}\leq P_{\maxx}.
\end{align}
Noting that $\Ex{\bss\bss^\her}=\mI_K$, the power constraint reduces to
\begin{align}
 \sum_{k=1}^K \Vert \bw_k \Vert^2 	\leq P_{\maxx}
\end{align}
or equivalently $\norm{\mW}_F^2 \leq P_\maxx$.

\subsection{Secrecy Performance Metric}
The downlink transmission is considered to be \textit{secure}, if the \ac{bs} reliably transmits the encoded symbols to the legitimate \acp{ut} without allowing any information leakage to the malicious terminals. A rate tuple $\brc{R_1,\ldots,R_K}$ at which secure transmission is guaranteed is called an \textit{achievable} tuple of secrecy rates, and the convex hull of all achievable tuples is defined as the \textit{secrecy capacity region}. 

In general, the capacity region is not a tractable performance metric for a practical system design. An alternative metric is given by considering a classical inner bound of the secrecy capacity region: A rate tuple $\brc{R_1,\ldots,R_K}$ is achievable with precoding matrix $\mW$ and vector of phase-shifts $\bphi$, if for $k\in \dbc{K}$ we have $R_k \leq \mar_k^\rms \brc{\mW, \bphi}$ \cite{oggier2011secrecy}. The lower bound $\mar_k^\rms \brc{\mW, \bphi}$ is given by
\begin{align}
	\mar_k^\rms \brc{\mW, \bphi}=\dbc{\mar_k^\rmm \brc{\mW, \bphi} -\mar_k^\ee  \brc{\mW, \bphi} }^+ \label{eq:secrecy}
\end{align}
where $\mar_k^\rmm \brc{\mW, \bphi}$ and $\mar_k^\ee  \brc{\mW, \bphi}$ are defined as follows:
\begin{itemize}
	\item $\mar_k^\rmm \brc{\mW, \bphi}$ is a lower bound on the maximum achievable rate to legitimate \ac{ut} $k$ and is given by
	\begin{align}
		\mar_k^\rmm\brc{\mW, \bphi}=\log\brc{1+ \sinr_k \brc{\mW, \bphi} } \label{eq:rate_leg},
	\end{align}
	where $ \sinr_k\brc{\mW, \bphi}$ is the \ac{sinr} at \ac{ut} $k$ and is determined as
	\begin{align} \label{eq:sinr}
		\sinr_k\brc{\mW, \bphi}= \frac{|\tilde{\mh}_k^\her (\bphi) \bw_k|^2}{\displaystyle \sum_{i=1, i \neq k}^{K}|\tilde{\mh}_k^\her (\bphi) \bw_i|^2+ \sigma_k^2}.
	\end{align}
	\item $\mar_k^\ee \brc{\mW, \bphi}$ is an upper bound on the maximum information leakage to the eavesdroppers and is expressed as 
	\begin{align}
		\mar_k^\ee \brc{\mW, \bphi} =\log \brc{ 1+ \esnr_k \brc{\mW, \bphi} } \label{eq:rate_eav}
	\end{align} 
	with $\esnr_k \brc{\mW, \bphi}$ being defined as
	\begin{align}\label{eq:esnr}
		\esnr_k\brc{\mW, \bphi}=\norm{\mathbf{\Upsilon}_{\ee}\mathbf{\tilde{G}^\her}(\bphi)\bw_k}^2.
	\end{align}
\end{itemize}
By plugging \eqref{eq:rate_leg} and \eqref{eq:rate_eav} into \eqref{eq:secrecy}, $\mar_k^\rms \brc{\mW, \bphi}$ is compactly written as
\begin{align}
	\mar_k^\rms\brc{ \mW, \bphi } =\left[ \log\left( \dfrac{1+\sinr_k\brc{\mW, \bphi}}{1+\esnr_k\brc{\mW, \bphi} } \right) \right]^+. \label{eq:sec_rate}
\end{align}

The above inner bound is derived by considering the \textit{worst-case} scenario in which all eavesdroppers cooperate to overhear the secure transmission and are able to cancel out the interference of other legitimate \acp{ut}. It is hence clear that larger secrecy rates are also achievable in a practical setting, since such worst-case assumptions are not necessarily fulfilled.

In the sequel, we utilize this inner bound to quantify the secrecy throughput of the system. To this end, we define the \textit{weighted secrecy sum-rate} as
\begin{align}
	\mar^\ssr\brc{\mW, \bphi}=\sum_{k=1}^K \omega_k \mar_k^\rms \brc{\mW, \bphi } \label{eq:ssr}
\end{align}
for some non-negative real weights $\omega_1,\ldots,\omega_K$ corresponding to the \acp{qos} desired for the \acp{ut}.

\section{Jointly Optimal Precoding and Phase-Shifting}
\label{Sec:Joint}
Considering the problem formulation, the design parameters in this system are the precoding vectors at the \ac{bs} and phase-shifts applied by the \ac{irs}-unit. As a result, the ultimate goal of this work is to design a joint precoding and phase-shifting algorithm that maximizes the weighted secrecy sum-rate.

With respect to the given performance metric, the optimal design for the precoding matrix and phase shifts is given via the following constrained optimization $\maP_1$:
\begin{subequations}
	\label{eq:ssrm}
	\begin{align}
    & \max _{\mW, \bphi}  \left.  \mar^\ssr\brc{\mW, \bphi} \tag{$\maP_1$} \right.  \\
	&\text{subject to }  \norm{\mW}_F^2 \leq P_{\maxx}\label{eq:constraint10}\\ 
	&\phantom{\text{subject to } }     \abs{\phi_n}=1,~\forall n \in \dbc{ N }.
	\label{eq:constraint20}
	\end{align}
\end{subequations}
Here, \eqref{eq:constraint10} ensures that the maximum transmit power is kept below the \ac{bs} power budget $P_{\maxx}$, and \eqref{eq:constraint20} restricts the unit modulus constraints imposed by the physical characteristics of the \ac{irs} elements.

The optimization problem $\maP _1$ contains two analytical challenges:
\begin{itemize}
	\item[(a)] The objective function is not differentiable with respect to the optimization variables. This follows from the fact that the non-negative operator, i.e., $f\brc{x} = \dbc{x}^+$,~is~not~differentiable. 
	\item[(b)] Even by replacing the non-negative operators with their arguments in $\mar^\ssr\brc{\mW, \bphi}$, $\maP_1$ is still \textit{non-convex}. This is due to non-convexity of the objective function and the unit modulus constraints in \eqref{eq:constraint20}. 
\end{itemize}
The above challenges indicate that the global optimum of $\maP_1$ can not be computed via classical convex programming techniques. In fact, it requires a brute-force search that is not a computationally tractable option even for not-so-large dimensions. 

We address these issues as follows: First, a variational optimization problem is derived whose objective is differentiable and whose solution coincides with the solution of $\maP_1$. We then develop a two-tiers algorithm that uses \ac{bcd} and \ac{fp} to approximate the solution of the variational problem.

\subsection*{The Variational Problem}
Theorem~\ref{theorem:1} gives an alternative optimization problem with~a differentiable objective function whose solution lies on~the~solution of $\maP_1$.
\begin{theorem}[The variational problem]
	\label{theorem:1}
Let $\mar^\star$ be the maximum of the optimization problem $\maP_1$. Then, we have
	\begin{subequations}
		\begin{align}
		\mar^\star = &\max _{\mW, \bphi, \mb} \left.  \mar^\ssr_\rmq \brc{ \mW, \bphi, \mb} \right. \tag{$\maP_2$} \label{eq:ssrmq}\\
		&\text{\normalfont subject to } \norm{\mW}_F^2 \leq P_{\maxx} \label{eq:constraint12}\\ 
		&\phantom{\text{subject to }}      \abs{ \phi_n }=1,~\forall n \in \dbc{N },
		\label{eq:constraint2}\\
		&\phantom{\text{subject to }}    \mb \in \dbc{0,1}^K \label{eq:constraint3}
		\end{align}
	\end{subequations}
	where the objective function $\mar^\ssr_\rmq  \brc{\mW, \bphi, \mb} $ is given by
	\begin{align}
\hspace*{-1mm}	\mar^\ssr_\rmq  \brc{\mW, \bphi, \mb} \hspace*{-1mm}  = \hspace*{-1mm}  \sum_{k=1}^K \omega_k b_k \log\left( \dfrac{1+\sinr_k\brc{\mW, \bphi}}{1+\esnr_k\brc{ \mW, \bphi } } \right) \label{R_q}
	\end{align}
with $b_k$ denoting the $k$-th entry of $\mb$.
\end{theorem}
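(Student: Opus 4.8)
The plan is to recognize $\maP_2$ as nothing but $\maP_1$ with the nonsmooth operator $\dbc{\cdot}^+$ in \eqref{eq:secrecy} resolved through a partial maximization over the auxiliary vector $\mb$. The cornerstone is the elementary identity $\max_{b\in\dbc{0,1}}\, bc=\dbc{c}^+$, valid for every $c\in\setR$, with maximizer $b=1$ when $c\ge 0$ and $b=0$ otherwise.

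First I would fix an arbitrary pair $\brc{\mW,\bphi}$ feasible for $\maP_1$, i.e., $\norm{\mW}_F^2\le P_{\maxx}$ and $\abs{\phi_n}=1$ for all $n\in\dbc{N}$, and abbreviate $c_k\brc{\mW,\bphi}:=\mar_k^\rmm\brc{\mW,\bphi}-\mar_k^\ee\brc{\mW,\bphi}$, which is the logarithmic quantity appearing inside the sum in \eqref{R_q}. By \eqref{eq:sec_rate} the $k$-th summand of $\mar^\ssr\brc{\mW,\bphi}$ in \eqref{eq:ssr} equals $\omega_k\dbc{c_k\brc{\mW,\bphi}}^+$, whereas the $k$-th summand of $\mar^\ssr_\rmq$ in \eqref{R_q} is $\omega_k b_k c_k\brc{\mW,\bphi}$. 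Since every weight obeys $\omega_k\ge 0$, the scalar map $b_k\mapsto\omega_k b_k c_k\brc{\mW,\bphi}$ attains its maximum over $b_k\in\dbc{0,1}$ at an endpoint, with optimal value $\omega_k\dbc{c_k\brc{\mW,\bphi}}^+$. As the entries of $\mb$ decouple across the sum in \eqref{R_q}, maximizing term by term gives $\max_{\mb\in\dbc{0,1}^K}\mar^\ssr_\rmq\brc{\mW,\bphi,\mb}=\mar^\ssr\brc{\mW,\bphi}$.

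Next I would use that the feasible set of $\maP_2$ is the Cartesian product of the feasible set of $\maP_1$ with the cube $\dbc{0,1}^K$: constraint \eqref{eq:constraint3} does not couple to $\brc{\mW,\bphi}$, so the joint maximum in \eqref{eq:ssrmq} may be evaluated as $\max_{\mW,\bphi}\max_{\mb}$. Substituting the inner maximum just computed yields $\max_{\mW,\bphi,\mb}\mar^\ssr_\rmq=\max_{\mW,\bphi}\mar^\ssr=\mar^\star$, which is the claim. Equivalently, the statement can be split into two inequalities: every $\maP_1$-feasible $\brc{\mW,\bphi}$ lifts to the $\maP_2$-feasible triple $\brc{\mW,\bphi,\mb}$ with $b_k=\ind{c_k\brc{\mW,\bphi}\ge 0}$ and identical objective, so the optimum of $\maP_2$ is at least $\mar^\star$; conversely, $b_k c_k\le\dbc{c_k}^+$ together with $\omega_k\ge 0$ give $\mar^\ssr_\rmq\brc{\mW,\bphi,\mb}\le\mar^\ssr\brc{\mW,\bphi}\le\mar^\star$ for every $\maP_2$-feasible triple.

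There is no substantive difficulty in this argument; it is a routine lifting. The two points worth stating explicitly are that the non-negativity of the weights $\omega_k$ is precisely what pins the pointwise maximizer at $b_k=1$ (rather than $b_k=0$) when $c_k\ge 0$, and that the supremum over each $b_k$ is attained at an endpoint of $\dbc{0,1}$, so that passing from $\maP_1$ to $\maP_2$ does not degrade an attained maximum into a mere supremum; hence $\mar^\star$ is the maximum, and not only the supremum, of $\maP_2$.
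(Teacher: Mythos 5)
Your proof is correct and follows essentially the same route as the paper's: both establish the equality by bounding the two maxima against each other, using the pointwise bound $b_k c_k \le \dbc{c_k}^+$ (for $b_k\in\dbc{0,1}$ and $\omega_k\ge 0$) in one direction and the lift $b_k=\ind{c_k \ge 0}$ at an optimizer of $\maP_1$ in the other. Your packaging of the argument as the partial-maximization identity $\max_{b\in\dbc{0,1}} bc=\dbc{c}^+$ applied termwise is a cleaner statement of what the paper does with its intermediate function $\mar^\ssr_\mt$ and the set $\setK$, but it is not a different proof.
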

\begin{proof}
The equivalency of $\maP_1$ and~$\maP_2$~is~justified by showing that the maximum of $\maP_1$ bounds the maximum of $\maP_2$ from below and above. The simultaneous validity of both bounds hence concludes the equivalence of the maxima. The detailed proof is given in Appendix~\ref{app:The1}.
\end{proof}

Theorem~\ref{theorem:1} addresses the non-differentiability issue of the objective function in $\maP_1$. Nevertheless, the alternative form in $\maP_2$ is still non-convex. We address this latter issue in the sequel by developing an iterative algorithm which approximates the solution of $\maP_2$.


\section{Analytical Tools From Fractional Programming}
\label{Sec:FP}
\ac{fp} is the key analytical tool used in this work. We hence give a quick overview on the main concepts in \ac{fp} in this section. To this end, we introduce the \textit{Lagrangian dual transform} for a sum of logarithmic ratios problem and the \textit{quadratic transform} for a multiple-ratio problem  \cite{shen2018FP,shen2018FP2}. These are the key results which help us tackling the optimization $\maP_2$. 


\subsection{Lagrangian Dual and Quadratic Transforms}
The generic form of a \textit{sum of logarithmic ratios} is given by
\begin{align}
	S\brc{\bxx} = \sum_{k=1}^K \omega_k \log \left( 1+ \dfrac{A_k\brc{\bxx} }{B_k \brc{\bxx} }\right).
	\label{eq:SLR}
\end{align}
where $\omega_k$ is a non-negative real, $A_k\brc{\cdot}: \setC^N\mapsto \setR^+_0$ is a non-negative function, and $B_k\brc{\cdot}: \setC^N\mapsto \setR^+$ is a strictly positive function, for $k\in\dbc{K}$.

Consider the following optimization problem 
\begin{subequations}
	\label{eq:Lag1}
	\begin{align}
		&\max_{\bxx} \left. S\brc{\bxx} \right. \\
		&\text{subject to }  \bxx \in \setX 
	\end{align}
	for some $\setX\subseteq \setC^N$ which is non-empty. The Lagrangian dual transform gives an equivalent optimization problem whose both maximum and maximizer recover those given by \eqref{eq:Lag1}~\cite{shen2018FP2}.
\end{subequations}

\begin{definition}[Lagrangian dual objective]
	\label{def:Lag}
	For auxiliary vector $\btt = \dbc{t_1, \ldots,t_K} \in \setR_0^{+K}$, the Lagrangian dual objective of the sum of logarithmic ratios $S\brc{\bxx}$ is given by
	\begin{align}
		L\brc{\bxx,\btt} = \sum_{k=1}^K  \omega_k \left(
		\Xi \brc{t_k} + \dfrac{ \brc{1+t_k} A_k \brc{\bxx} }{ A_k \brc{\bxx} + B_k  \brc{\bxx}  } 
		\right)
	\end{align}
	where the function $\Xi\brc{\cdot}: \setR_0^+ \mapsto \setR$ is defined as
	\begin{align}
		\Xi\brc{x} = \log \brc{1+x} - x.
	\end{align}
\end{definition}

In \cite[Theorem 3]{shen2018FP2}, it is shown that the maximum value of the objective function in \eqref{eq:Lag1}, as well as the point at which the objective is maximized, are given by the following equivalent optimization problem:
\begin{subequations}
	\label{eq:Lag2}
	\begin{align}
		&  \max_{\bxx , \btt} \left.  L\brc{\bxx,\btt}  \right. \\
		&\text{ subject to }  \bxx\in \setX \text{ and }  \btt\in \setR_0^{+K}.
	\end{align}
\end{subequations}
We refer to this transformed version, as the Lagrangian dual transform of the optimization problem in \eqref{eq:Lag1}.

Similar to the Lagrangian dual transform, the quadratic transform provides an equivalent optimization problem, when a \textit{multiple-ratio} function is to be maximized \cite{shen2018FP}. To illustrate this transform, consider a multiple-ratio function 
\begin{align}
	M \brc{\bxx} = \sum_{k=1}^K   \dfrac{\abs{C_k\brc{\bxx}}^2 }{D_k \brc{\bxx} }, 
	\label{eq:SR}
\end{align}
in which $C_k\brc{\cdot}: \setC^N\mapsto \setC$, and $D_k\brc{\cdot}: \setC^N\mapsto \setR^+$ is a strictly positive function, for $k\in\dbc{K}$. We now focus on the following target optimization problem 
\begin{subequations}
	\label{eq:Lag0}
	\begin{align}
		&\max_{\bxx} \left. M\brc{\bxx} \right. \\
		&\text{subject to }  \bxx \in \setX 
	\end{align}
	for some non-empty $\setX\subseteq \setC^N$. An equivalent optimization problem in this case is given by the quadratic transformed~\cite{shen2018FP}.
\end{subequations}

\begin{definition}[Quadratic equivalent objective]
	\label{def:Quad}
	For auxiliary vector $\mbeta = \dbc{\beta_1, \ldots,\beta_K} \in \setC^{K}$, the quadratic equivalent~objective of the multiple-ratio function $M\brc{\bxx}$ is given by
	\begin{align}
		Q\brc{\bxx,\mbeta} = \sum_{k=1}^K  2 \left. \Re \set{ \beta_k^* C_k\brc\bxx }\right. - \abs{\beta_k}^2 D_k \brc{\bxx}.
	\end{align}
\end{definition}

Theorem~2 in \cite{shen2018FP} states that the solution of \eqref{eq:Lag0} is given by solving the following equivalent optimization problem:
\begin{subequations}
	\label{eq:Quad2}
	\begin{align}
		&  \max_{\bxx , \mbeta} \left.  Q\brc{\bxx,\mbeta}  \right. \\
		&\text{ subject to }  \bxx\in \setX \text{ and }  \mbeta \in \setC^{K}.
	\end{align}
\end{subequations}
We refer to \eqref{eq:Quad2} as the quadratic transform of \eqref{eq:Lag0}. 

\section{Developing an Iterative Algorithm}
\label{sec:ALG_1}
Starting from $\maP_2$, we now develop an iterative algorithm which approximates the maximum weighted secrecy sum-rate tractably. The algorithm consists of two tiers: In the first tier, the \ac{bcd} technique is used to approximate the solution of the joint optimization problem $\maP_2$ by a cyclic alternation among multiple marginal optimization problems. In the second tier, we use \ac{fp} and the \ac{mm} algorithm \cite{sun2017majorization} to address each of these marginal sub-problems.

We start the derivations with the first tier: Considering the optimization problem $\maP_2$, we group the optimization variables into three blocks; namely, $\mW$, $\bphi$, and $\mb$. Noting that these blocks are not coupled via the constraints of $\maP_2$, we use a \ac{bcd}-type algorithm to approximate the maxima by cyclically alternating among the three marginal problems \cite{bertsekas1998nonlinear}. In each of these problems, a marginal optimization with respect to one of the blocks is performed while the other two blocks are treated as fixed variables. The alternation among these marginal problems ends when all the blocks converge. 

Despite the complexity reduction achieved via the \ac{ao} technique in the first tier, there exists still a challenge: Two marginal problems deal with a \textit{non-convex} optimizations. We address this issue in the second tier, where we use the \ac{fp} and the \ac{mm} algorithm to approximate the solution of each marginal problem tractably. 

\subsection{First Marginal Problem} 
\label{sec:First_Mar}
	\begin{subequations}
The first marginal optimization finds the precoding matrix that maximizes the objective $\mar^\ssr_\rmq \brc{ \mW, \bphi, \mb}$ while treating $\bphi$ and $\mb$ as fixed variables in $\maP_2$, i.e., it finds $\mW^\star$ as
	\begin{align}
		 \mW^\star = &\argmax _{\mW}  \left. \mar^\ssr_\rmq \brc{ \mW, \bphi_0, \mb_0}\right. 
		  \tag{$\mam_1$}  \label{eq:marg1}\\
		&\text{\normalfont subject to } \norm{\mW}_F^2 \leq P_{\maxx},
	\end{align}
\end{subequations}
for some fixed $\bphi_0$ and $\mb_0$.

It is straightforwardly seen from \eqref{R_q} that the marginal problem $\mam_1$ is non-convex. Considering the logarithmic form of the objective function, we use the Lagrangian dual transform to tackle this problem. To this end, we note that the objective function can be decomposed as
\begin{align}
	 \mar^\ssr_\rmq \brc{ \mW, \bphi_0, \mb_0} = S_{1}^{ \rm m} \brc{\mW} + S_{1}^{ \rm e} \brc{\mW}
\end{align}
where the functions $S_1^{ \rm m} \brc{\mW}$ and  $S_1^{ \rm e} \brc{\mW}$ are sums of logarithmic ratios and are defined as
\begin{subequations}
\begin{align}
\label{eq:fell}
	S^{ \rm m}_1 \brc{\mW} &= \sum_{k=1}^{K} \omega_k b_{0k} \log \left( 1+\frac{ A_{1k}^{\rm m} \brc{\mW} }{ B_{1k}^{\rm m} \brc{\mW}  }\right) \\
	S_1^{ \rm e} \brc{\mW} &= 
	\sum_{k=1}^{K} \omega_k b_{0k} \log \left( \frac{1}{  B_{1k}^{\rm e} \brc{ \mW } } \right) .
\end{align}
\end{subequations}
with $	A_{1k}^{\rm m} \brc{\mW} $, $	B_{1k}^{\rm m} \brc{\mW} $ and $	B_{1k}^{\rm e} \brc{\mW} $ being
\begin{subequations}
	\begin{align}
	A_{1k}^{\rm m} \brc{\mW} &= \abs{\tilde{\mh}^\her_k \brc{\bphi_0} \bw_k}^2, \\
	B_{1k}^{\rm m} \brc{\mW} &= \sum_{i=1, i \neq k}^{K} \abs{ \tilde{\mh}^\her_k \brc{\bphi_0} \bw_i}^2+ \sigma_k^2,\\
	B_{1k}^{\rm e} \brc{\mW} &= 1+ \norm{ 
		\mathbf{\Upsilon}_{\ee} \mathbf{\tilde{G}^\her} \brc{\bphi_0} \bw_k }^2.
	\end{align}
\end{subequations}

Although $S_1^{ \rm m} \brc{\mW}$ is of the standard form given in \eqref{eq:SLR}, the function $S_1^{ \rm e} \brc{\mW}$ does not fulfill the non-negativity constraint of the nominator: One can rewrite $S_1^{ \rm e} \brc{\mW}$ as
\begin{align}
	S_1^{ \rm e} \brc{\mW} &= 
	\sum_{k=1}^{K} \omega_k b_{0k} \log \left( 1+ \frac{1 - B_{1k}^{\rm e} \brc{ \mW }}{  B_{1k}^{\rm e} \brc{ \mW } } \right) .
\end{align}
However,  in this case $1 - B_{1k}^{\rm e} \brc{ \mW } \leq 0$ which violates the necessary condition of having a non-negative nominator. To overcome this issue, we note that
	\begin{align}
	B_{1k}^{\rm e} \brc{\mW} &\leq  1+ \norm{ 
		\mathbf{\Upsilon}_{\ee} \mathbf{\tilde{G}^\her} \brc{\bphi_0}}_F^2 \norm{\bw_k }^2.
	\end{align}
Since the power constraint enforces $\norm{\bw_k}^2 \leq P_\maxx$, we could conclude that  $B_{1k}^{\rm e} \brc{\mW} \leq B_{1 U}^{\rm e}$ with\footnote{Note that the upper bound is fixed in terms of $\mW$.}
\begin{align}
	B_{1 U}^{\rm e} = 1+ \norm{ 
		\mathbf{\Upsilon}_{\ee} \mathbf{\tilde{G}^\her} \brc{\bphi_0}}_F^2 P_\maxx .
\end{align}

Using this upper bound, we can rewrite $S_1^{ \rm m} \brc{\mW}$ as
\begin{subequations}
\begin{align}
	S_1^{ \rm e} \brc{\mW} &= \sum_{k=1}^{K} \omega_k b_{0k} \log \left( \frac{1}{  B_{1k}^{\rm e} \brc{ \mW } } \right)\\
	&= \sum_{k=1}^{K} \omega_k b_{0k} \log \left( \frac{B_{1 U}^{\rm e} }{  B_{1k}^{\rm e} \brc{ \mW } } \right) - \log B_{1 U}^{\rm e}  \\
	&= \sum_{k=1}^{K} \omega_k b_{0k} \log \left( 1+ \frac{A_{1k}^{\rm e} \brc{ \mW }}{  B_{1k}^{\rm e} \brc{ \mW } } \right)  - \log B_{1 U}^{\rm e} 
\end{align}
\end{subequations}
where we define
\begin{align}
A_{1k}^{\rm e} \brc{ \mW } = B_{1 U}^{\rm e} - B_{1k}^{\rm e} \brc{ \mW }.
\end{align}
In this alternative representation, $A_{1k}^{\rm e} \brc{ \mW } \geq 0$ which satisfies the non-negativity constraint required for using  the Lagrangian dual transform.

By defining $\hat{S}_1^{ \rm e} \brc{\mW} $ as
\begin{align}
	\hat{S}_1^{ \rm e} \brc{\mW} &= \sum_{k=1}^{K} \omega_k b_{0k} \log \left( 1+ \frac{A_{1k}^{\rm e} \brc{ \mW }}{  B_{1k}^{\rm e} \brc{ \mW } } \right),
\end{align}
we could conclude that
\begin{subequations}
	\begin{align}
		\mW^\star = &\argmax _{\mW}  \left. S_1^{ \rm m} \brc{\mW}  + \hat{S}_1^{ \rm e} \brc{\mW} \right. 
		\tag{$\hat{\mam}_1$} \\
		&\text{\normalfont subject to }  \norm{\mW}_F^2 \leq P_{\maxx}.
	\end{align}
\end{subequations}
The optimization problem $\hat{\mam}_1$ is of the standard form given in \eqref{eq:Lag1}. Using the Lagrangian dual transform, we have
\begin{subequations}
	\begin{align}
		\brc{ \mW^\star , \btt^\star , \malpha^\star } = &\argmax _{\mW,\btt , \malpha }  \left. L_1^{ \rm m} \brc{\mW, \btt}  + L_1^{ \rm e} \brc{\mW,\malpha } \right. 
		\tag{$\mal_1$} \\
		&\text{\normalfont subject to }  \norm{\mW}_F^2 \leq P_{\maxx}\\
		&\phantom{\text{\normalfont subject to }} \ \btt,\malpha \in \setR_0^{+K}
	\end{align}
\end{subequations}
where $ L_1^{ \rm m} \brc{\mW, \btt} $ and $L_1^{ \rm e} \brc{\mW,\malpha } $ are the Lagrangian~dual~objectives of $S_1^{ \rm m} \brc{\mW}$ and $\hat{S}_1^{ \rm e} \brc{\mW}$, respectively, and are given~by
\begin{subequations}
\begin{align}
\hspace{-1.5mm}	L_1^{ \rm m} \brc{\mW, \btt}  \hspace{-1mm}  &= \hspace{-1.5mm} \sum_{k=1}^K  \omega_k b_{0k} \hspace{-1mm} \left(
	\Xi \brc{t_k} \hspace{-1mm} + \hspace{-1mm} \dfrac{ \brc{1+t_k} A_{1k}^{ \rm m } \brc{\mW} }{ A_{1k}^{ \rm m } \brc{\mW} \hspace{-1mm} + \hspace{-1mm} B_{1k}^{ \rm m }  \brc{\mW}  } 
	\right)\\
\hspace{-1.5mm}		L_1^{ \rm e} \brc{\mW, \malpha}  \hspace{-1mm} &= \hspace{-1.5mm} \sum_{k=1}^K  \omega_k b_{0k} \hspace{-1mm} \left(
	\Xi \brc{\alpha_k} \hspace{-1mm} + \hspace{-1mm} \dfrac{ \brc{1+ \alpha_k } A_{1k}^{ \rm e } \brc{\mW} }{ B_{1U}^{ \rm e }   } 
	\right)
\end{align}
with $t_k$ and $\alpha_k$ being the $k$-th entry of $\btt$ and $\malpha$, respectively.
\end{subequations}

The Lagrangian dual transform of the marginal problem is a joint optimization with a non-convex objective function. To tackle this issue, we note that the constraints on block variable $\mW$ and $\brc{\btt,\malpha}$ decouple in $\mal_1$. We hence utilize \ac{bcd} once again and solve the Lagrangian dual transform by cyclically alternating between the following sub-problems:
\begin{enumerate}
	\item Update $\mW$ for fixed $\btt=\bar{\btt}$ and $\malpha = \bar{\malpha}$ as
	\begin{subequations}
		\begin{align}
			\bar{\mW} = &\argmax _{\mW}  \left. L_1^{ \rm m} \brc{\mW, \bar{\btt}}  + L_1^{ \rm e} \brc{\mW,\bar{\malpha} } \right. 
			\tag{$\mal_1^{{\rm A} }$} \\
			&\text{\normalfont subject to }  \norm{\mW}_F^2 \leq P_{\maxx}.
		\end{align}
	\end{subequations}
	\item  Update $\btt$ and $\malpha$ for a fixed $\mW=\bar{\mW}$ as
\begin{subequations}
	\begin{align}
		\brc{\bar{\btt},\bar{\malpha}} = &\argmax _{\btt,\malpha}  \left. L_1^{ \rm m} \brc{\bar{\mW}, \btt}  + L_1^{ \rm e} \brc{\bar{\mW},\malpha } \right. 
		\tag{$\mal_1^{{\rm B} }$} \\
		&\text{\normalfont subject to } \btt,\malpha \in \setR_0^{+K}.
	\end{align}
\end{subequations}
\end{enumerate}

The sub-problem $\mal_1^{\rm A}$ optimized the objective with respect to $\mW$. By dropping those terms which are constant in $\mW$, we can rewrite $\mal_1^{\rm A}$ as
	\begin{subequations}
	\begin{align}
		\bar{\mW} = &\argmax _{\mW}  \left. \sum_{k=1}^K \frac{ \abs{C_{1k}^{ \rm m} \brc{\mW}}^2 }{ D_{1k}^{ \rm m} \brc{\mW} }  +  \abs{C_{1k}^{ \rm e} \brc{\mW}}^2 \right. 
		\tag{$\hat{\mal}_1^{{\rm A} }$} \\
		&\text{\normalfont subject to }  \norm{\mW}_F^2 \leq P_{\maxx}
	\end{align}
\end{subequations}
where $C_{1k}^{ \rm m} \brc{\mW}$, $D_{1k}^{ \rm m} \brc{\mW}$, and $C_{1k}^{ \rm e} \brc{\mW}$ are defined as
\begin{subequations}
	\begin{align}
		C_{1k}^{ \rm m} \brc{\mW} &= \sqrt{\omega_k b_{0k} \brc{1+\bar{t}_k}} \left. \tilde{\mh}^\her_k \brc{\bphi_0} \bw_k \right. , \\
		D_{1k}^{ \rm m} \brc{\mW} &= A_{1k}^{ \rm m } \brc{\mW} + B_{1k}^{ \rm m }  \brc{\mW} ,  \\
		C_{1k}^{ \rm e} \brc{\mW} &= \sqrt{ \frac{\omega_k b_{0k} \brc{1+\bar{\alpha}_k} }{ B_{1U}^{\rm e} } } \sqrt{A_{1k}^{ \rm e } \brc{\mW}}.
	\end{align}
\end{subequations}
The optimization problem $\hat{\mal}_1^{{\rm A} }$ is a multiple-ratio maximization whose quadratic transform is given by
\begin{subequations}
	\begin{align}
		&\max _{\mW,\mbeta,\mgamma}  \left. Q_1^{ \rm m} \brc{\mW, \mbeta}  + Q_1^{ \rm e} \brc{\mW , \mgamma } \right. 
		\tag{$\maq_1$} \\
		&\text{\normalfont subject to }  \norm{\mW}_F^2 \leq P_{\maxx}\\
		&\phantom{\text{\normalfont subject to }} \ \mbeta,\mgamma\in \setC^{K}
	\end{align}
\end{subequations}
with quadratic equivalent objectives
\begin{subequations}
\begin{align}
Q_1^{ \rm m} \brc{\mW, \mbeta} & \hspace{-1mm} = \hspace{-1mm} \sum_{k=1}^K  2 \left. \Re \set{ \beta_k^* C_{1k}^{ \rm m} \brc{\mW} }\right. - \abs{\beta_k}^2 	D_{1k}^{ \rm m} \brc{\mW}\\
Q_1^{ \rm e} \brc{\mW, \mgamma} &\hspace{-1mm} = \hspace{-1mm} \sum_{k=1}^K  2 \left. \Re \set{ \gamma_k^* C_{1k}^{ \rm e} \brc{\mW} }\right. - \abs{\gamma_k}^2.
\end{align}
\end{subequations}
Here, $\beta_k$ and $\gamma_k$ denote the $k$-th entry of $\mbeta$ and $\mgamma$, respectively.

The quadratic transform $\maq_1$ can be directly solved. Alternatively, one can use the \ac{bcd} approach once again. In this case, we consider two blocks of variables: $\brc{\mW,\mgamma}$ and $\mbeta$. The solution is then found by alternating between the following two sub-problems:
\begin{itemize}
	\item[(A)] Find optimal $\mbeta$ for $\mW= \bar{\mW}$ and $\mgamma = \bar{\mgamma}$ as
	\begin{subequations}
		\begin{align}
			\bar{\mbeta} = 
			&\argmax_{\mbeta}  \left. Q_1^{ \rm m} \brc{\bar{\mW}, \mbeta}  + Q_1^{ \rm e} \brc{\bar{\mW} , \bar{\mgamma} } \right. 
			\tag{$\maq_1^{\rm A}$} \\
			&\text{\normalfont subject to }  \mbeta\in \setC^{K}.
		\end{align}
	\end{subequations}
$\maq_1^{\rm A}$ is a standard quadratic optimization with solution
	\begin{align}
		\bar{\beta}_k &= \frac{C_{1k}^{ \rm m} \brc{\bar{\mW}} }{D_{1k}^{ \rm m} \brc{\bar{\mW}}}. \label{eq:beta_bar}
	\end{align}
\item[(B)] Find optimal $\mW$ and $\mgamma$ for $\mbeta = \bar{\mbeta}$ as
\begin{subequations}
	\begin{align}
	\hspace*{-3mm}	\brc{\bar{\mW}, \bar{\mgamma} } = 
		&\argmax _{\mW,\mgamma}  \left. Q_1^{ \rm m} \brc{\mW, \bar{\mbeta}}  + Q_1^{ \rm e} \brc{\mW , {\mgamma} } \right. 
		\tag{$\maq_1^{\rm B}$} \\
		&\text{\normalfont subject to }  \norm{\mW}_F^2 \leq P_{\maxx}  \text{ and } \mgamma\in \setC^{K}.
	\end{align}
\end{subequations}
Similar to $\maq_1^{\rm A}$, $\maq_1^{\rm B}$ is a standard quadratic optimization whose solution is given by 
\begin{subequations}
	\begin{align}
		\bar{\mW} &= \dbc{ \bar{\bw}_1, \ldots, \bar{\bw}_K }, \label{eq:w_bar}\\
		\bar{\mgamma} &= \dbc{C_{11}^{ \rm e} \brc{\bar{\mW}}, \ldots, C_{1K}^{ \rm m} \brc{\bar{\mW}} }.
	\end{align}
\end{subequations}
In \eqref{eq:w_bar}, $\bar{\bw}_k$ for $k\in\dbc{K}$ is 
\begin{align}
	\bar{\bw}_k = \rho_k \left. \mathbf{\Gamma}_k^{-1} \right. \tilde{\mh}_k \brc{\bphi_0} \label{eq:w_k_bar}
\end{align}
where $\mathbf{\Gamma}_k $ and $\rho_k$ are given by
\begin{subequations}
\begin{align}
	\hspace{-1.5mm} \mathbf{\Gamma}_k &\hspace{-.5mm}= \hspace{-.5mm}{ \tilde{\mH} \brc{\bphi_0} \mathbf{\Upsilon}_{\rm m}^2 \tilde{\mH}^\her \brc{\bphi_0} \hspace{-.75mm}+\hspace{-.75mm} \tau_k \tilde{\mG} \brc{\bphi_0} \mathbf{\Upsilon}_{\ee}^2 \tilde{\mG}^\her \brc{\bphi_0} } + \lambda_P \mI_M, \\
	 \hspace{-1.5mm} \rho_k &\hspace{-.5mm}= \hspace{-.5mm} \sqrt{\omega_k b_{0k} \brc{1+\bar{t}_k}} \bar{\beta}_k
\end{align}
\end{subequations}
with $\mathbf{\Upsilon}_{\rm m} $ and $\tau_k$ being defined as
\begin{subequations}
	\begin{align}
		\mathbf{\Upsilon}_{\rm m} &= \diag\set{\abs{\bar{\beta}_1},\ldots , \abs{\bar{\beta}_k}},\\
		\tau_k &= \frac{\omega_k b_{0k} \brc{1+\bar{\alpha}_k} }{ B_{1U}^{\rm e} },
	\end{align}
\end{subequations}
and some $\lambda_P$ being calculated according to the power constraint\footnote{$\lambda_P$ can be tuned optimally via various algorithms, e.g., \textit{bi-section search} \cite{shi2018spectral,pan2020multicell}. Alternatively, one can solve $\maq_1^{\rm B}$ directly via a convex programming solver, e.g., the CVX package in MATLAB \cite{cvx,gb08}.}.
\end{itemize}
Noting that $\bar{\mW}$ is fixed in terms of $\bar{\mgamma}$, the auxiliary vector $\bar{\mgamma}$ can be ignored in the \ac{bcd} algorithm. Hence, the solution to $\mal_1^{\rm A}$ is given by alternating between \eqref{eq:w_bar} and \eqref{eq:beta_bar}.

\begin{remark}
The precoding matrix $\bar{\mW}$ inverts the main channels while aligning the nulls of the beams with the eavesdropping channels. This is the so-called \textit{\ac{srzf} }precoding technique which has been proposed and studied in \cite{asaad2019secure}. Unlike the generic form of \ac{srzf} precoding, $\bar{\mW}$ does not need to be tuned. This is due to the fact that $\bar{\mW}$ is directly derived from secrecy sum-rate maximization.
\end{remark}

We now discuss the solution to the sub-problem $\mal_1^{\rm B}$. Following the concavity of the objective function and its decoupled form, it is straightforward to show that 
\begin{subequations}
	\label{eq:Sub_t_alph}
	\begin{align}
	\bar{t}_k &= \frac{A_{1k}^{ \rm m } \brc{\bar{\mW}}}{B_{1k}^{ \rm m } \brc{ \bar{\mW} }},\label{eq:t_bar}\\
	\bar{\alpha}_k &= \frac{A_{1k}^{ \rm e } \brc{ \bar{\mW} }}{B_{1k}^{ \rm e } \brc{ \bar{\mW} }}. \label{eq:alpha_bar}
	\end{align}
\end{subequations}
Consequently, the solution to the dual problem $\mal_1$ is given by alternating between the solution of the sub-problem $\mal_1^{\rm A}$ and the updates in \eqref{eq:Sub_t_alph}. Once the alternations converge, the solution to the first marginal problem is given by $\bar{\mW}$.

\subsection{Second Marginal Problem}
\label{sec:Second_Mar}
In the second marginal optimization, we determine a vector of phase-shifts which maximizes the objective $\mar^\ssr_\rmq \brc{ \mW, \bphi, \mb}$ assuming that $\mW$ and $\mb$ are fixed. Hence, the corresponding optimization is
\begin{subequations}
\begin{align}
	\bphi^\star = &\argmax _{\bphi}  \left. \mar^\ssr_\rmq \brc{ \mW_0, \bphi, \mb_0}\right. 
	\tag{$\mam_2$}  \label{eq:marg2}\\
	&\text{\normalfont subject to } \abs{\phi_n}=1,~\forall n \in \dbc{ N },
\end{align}
\end{subequations}
for some fixed $\mW_0$ and $\mb_0$.

In its initial form, the marginal problem $\mam_2$ is intractable due to the following issues:
\begin{enumerate}
\item The objective function is non-convex.
\item The constraint is unit-modulus.
\end{enumerate}
Considering the fractional form of the objective function, the first issue is addressed by transforming the marginal problem $\mam_2$ to a quadratic optimization using the Lagrangian dual and the quadratic transforms. Given the transformed optimization, we further overcome the second issue via the \ac{mm} method. The detailed derivations are presented in the sequel.

We start the derivations by rewriting the objective function as a sum of logarithmic ratios. Similar to the marginal problem $\mam_1$, we can write
\begin{align}
	\mar^\ssr_\rmq \brc{ \mW_0, \bphi, \mb_0} = S_{2}^{ \rm m} \brc{\bphi} + S_{2}^{ \rm e} \brc{\bphi}
\end{align}
where the functions $S_2^{ \rm m} \brc{\bphi}$ and  $S_2^{ \rm e} \brc{\bphi}$ are defined as
\begin{subequations}
	\begin{align}
		\label{eq:fell2}
		S^{ \rm m}_2 \brc{\bphi} &= \sum_{k=1}^{K} \omega_k b_{0k} \log \left( 1+\frac{ A_{2k}^{\rm m} \brc{\bphi} }{ B_{2k}^{\rm m} \brc{\bphi}  }\right) , \\
		S_2^{ \rm e} \brc{\bphi} &= 
		\sum_{k=1}^{K} \omega_k b_{0k} \log \left( \frac{1}{  B_{2k}^{\rm e} \brc{ \bphi} } \right) .
	\end{align}
\end{subequations}
Here, $	A_{2k}^{\rm m} \brc{\bphi} $, $	B_{2k}^{\rm m} \brc{\bphi} $, and $	B_{2k}^{\rm e} \brc{\bphi} $ are given by
\begin{subequations}
	\begin{align}
		A_{2k}^{\rm m} \brc{\bphi} &= \abs{\tilde{\mh}^\her_k \brc{\bphi} \bw_{0k}}^2, \\
		B_{2k}^{\rm m} \brc{\bphi} &= \sum_{i=1, i \neq k}^{K} \abs{ \tilde{\mh}^\her_k \brc{\bphi} \bw_{0i}}^2+ \sigma_k^2,\\
		B_{2k}^{\rm e} \brc{\bphi} &= 1+ \esnr_k\brc{\mW_0, \bphi }
	\end{align}
\end{subequations}
where $\bw_{0k}$ denotes the $k$-th column of $\mW_0$.

In order to use the Lagrangian dual transform, we need to convert $S_2^{ \rm e} \brc{\bphi}$ to the standard form given in \eqref{eq:SLR}. To this end, we substitute $\bgg_j\brc{\bphi} = \bgg_{\rd, j} + \mG_j \bphi$ into \eqref{eq:esnr} and write 
\begin{subequations}
\begin{align}
\esnr_k\brc{\mW, \bphi }&\hspace*{-.75mm}=\hspace*{-.75mm}\sum_{j=1}^J  \dfrac{1}{\mu_j^2}\abs{\bgg_{\rd, j}^\her \bw_k + \bphi^\her \mG_j^\her \bw_k}^2 \\
&\hspace*{-.75mm}\stackrel{\star}{\leq} \sum_{j=1}^J \hspace*{-.75mm} \dfrac{1}{\mu_j^2} \hspace*{-.75mm} \brc{
\abs{ \bgg_{\rd, j}^\her \bw_k } \hspace*{-.75mm}+\hspace*{-.75mm}
\abs{ \bphi^\her \mG_j^\her \bw_k  }
}^2 \\
&\hspace*{-.75mm}\leq \hspace*{-.75mm} \sum_{j=1}^J  \dfrac{1}{\mu_j^2} \hspace*{-.75mm} \brc{
\abs{\bgg_{\rd, j}^\her \bw_k} \hspace*{-.75mm} + \hspace*{-.75mm} \norm{\bphi}
\norm{\mG_j^\her \bw_k}
}^2 \\
&\hspace*{-.75mm}\stackrel{\dagger}{=} \hspace*{-.75mm} \sum_{j=1}^J  \dfrac{1}{\mu_j^2} \hspace*{-.75mm}  \brc{
\abs{\bgg_{\rd, j}^\her \bw_k} \hspace*{-.75mm} + \hspace*{-1mm} \sqrt{N} \norm{ \mG_j^\her \bw_k}
}^2
\end{align}
\end{subequations}
where $\star$ comes from the triangle inequality, and $\dagger$ follows from the fact that $\abs{\phi_n}=1$. Defining the upper-bound $B_{2Uk}^{\rm e}$ for $B_{2k}^{\rm e} \brc{\bphi }$ as
\begin{align}
B_{2Uk}^{\rm e} = 1+ \sum_{j=1}^J  \dfrac{1}{\mu_j^2} \brc{
	\abs{\bgg_{\rd, j}^\her \bw_{0k}} + \sqrt{N} \norm{ \mG_j^\her \bw_{0k}}
}^2,
\end{align}
we can rewrite $S_2^{ \rm e} \brc{\bphi}$ as
\begin{align}
	S_2^{\ee} \brc{\bphi} = \hat{S}_2^{ \rm e} \brc{\bphi}  - \sum_{k=1}^{K} \omega_k b_k \log B_{2Uk}^{\rm e}
\end{align}
where $\hat{S}_2^{ \rm e} \brc{\bphi}$ is given by
\begin{align}\label{eq:trnsF_e}
\hat{S}_2^{ \rm e} \brc{\bphi} = \sum_{k=1}^{K} \omega_k b_k\log\left( 1+\frac{  A_{2k}^{\rm e} \brc{\bphi } }{B_{2k}^{\rm e} \brc{\bphi} } \right)
\end{align}
with $A_{2k}^{\rm e} \brc{\bphi } = B_{2Uk}^{\rm e} - B_{2k}^{\rm e} \brc{\bphi } \geq 0$.

Given this alternative representation of the objective, the marginal problem $\mam_2$ is rewritten as
\begin{subequations}
	\begin{align}
		\bphi^\star = &\argmax _{\bphi}  \left. S_2^{\rm m} \brc{\bphi} + \hat{S}_2^{\rm e} \brc{\bphi} \right. 
		\tag{$\hat\mam_2$}  \label{eq:marg2R}\\
		&\text{\normalfont subject to } \abs{\phi_n}=1,~\forall n \in \dbc{ N }
	\end{align}
\end{subequations}
which is of the standard form in \eqref{eq:Lag1}. Using the Lagrangian dual transform, we could find $\bphi^\star$ from the transformed version
\begin{subequations}
\begin{align}
	\brc{\bphi^\star, \bq^\star , \mpsi^\star} = &\argmax _{\bphi,\bq,\mpsi}  \left. L_2^{\rm m} \brc{\bphi,\bq} + L_2^{\rm e} \brc{\bphi,\mpsi} \right. 
	\tag{$\mal_2$}  \label{eq:marg2L}\\
	&\text{\normalfont subject to } \abs{\phi_n}=1,~\forall n \in \dbc{ N }\\
	&\phantom{\text{\normalfont subject to }} \bq , \mpsi \in \setR_0^{+K}
\end{align}
\end{subequations}
where the Lagrangian dual objectives are
\begin{subequations}
	\begin{align}
		\hspace{-1.5mm}	L_2^{ \rm m} \brc{\bphi, \bq}  \hspace{-1mm}  &= \hspace{-1.5mm} \sum_{k=1}^K  \omega_k b_{0k} \hspace{-1mm} \left(
		\Xi \brc{q_k} \hspace{-1mm} + \hspace{-1mm} \dfrac{ \brc{1+q_k} A_{2k}^{ \rm m } \brc{\bphi} }{ A_{2k}^{ \rm m } \brc{\bphi} \hspace{-1mm} + \hspace{-1mm} B_{2k}^{ \rm m }  \brc{\bphi}  } 
		\right) , \\
		\hspace{-1.5mm}		L_2^{ \rm e} \brc{\bphi, \mpsi}  \hspace{-1mm} &= \hspace{-1.5mm} \sum_{k=1}^K  \omega_k b_{0k} \hspace{-1mm} \left(
		\Xi \brc{\psi_k} \hspace{-1mm} + \hspace{-1mm} \dfrac{ \brc{1+ \psi_k } A_{2k}^{ \rm e } \brc{\bphi} }{ B_{2Uk}^{ \rm e }   } 
		\right)
	\end{align}
	with $q_k$ and $\psi_k$ denoting entry $k$ of $\bq$ and $\mpsi$, respectively.
\end{subequations}

To address the Lagrangian dual problem, we follow the same \ac{bcd}-based approach as for the first marginal problem: Starting from an initial point, we alternate between the following two sub-problems:
\begin{enumerate}
\item Update $\bphi$ for fixed $\bq = \bar{\bq}$ and $\mpsi = \bar{\mpsi}$ as
\begin{subequations}
	\begin{align}
		\bar{\bphi} = &\argmax _{\bphi}  \left. L_2^{\rm m} \brc{\bphi,\bar{\bq}} + L_2^{\rm e} \brc{\bphi,\bar{\mpsi}} \right. 
		\tag{$\mal_2^{\rm A}$}  \label{eq:marg2Q1}\\
		&\text{\normalfont subject to } \abs{\phi_n}=1,~\forall n \in \dbc{ N }.
	\end{align}
\end{subequations}
\item Update $\bq$ and $\mpsi$ for fixed $\bphi = \bar{\bphi}$ as
\begin{subequations}
	\begin{align}
		\brc{\bar{\bq} , \bar{\mpsi} } = &\argmax _{\bq,\mpsi}  \left. L_2^{\rm m} \brc{\bar{\bphi},\bq} + L_2^{\rm e} \brc{\bar{\bphi},\mpsi} \right. 
		\tag{$\mal_2^{\rm B}$}  \label{eq:marg2Q2}\\
		&\text{\normalfont subject to } \bq , \mpsi \in \setR_0^{+K}.
	\end{align}
\end{subequations}
\end{enumerate}

Noting that the sub-problem $\mal_2^{\rm A}$ is to maximize a multiple-ratio term, we use the quadratic transform to find the equivalent quadratic problem. After dropping the constant terms and some standard lines of derivation the equivalent quadratic problem becomes
\begin{subequations}
	\begin{align}
		&\max _{\bphi,\bff,\bpi}  \left. Q_2^{\rm m} \brc{\bphi,\bff } + Q_2^{\rm e} \brc{\bphi, \bpi } \right. 
		\tag{$\maq_2$}  \label{eq:marg2Q1Trans}\\
		&\text{\normalfont subject to } \abs{\phi_n}=1,~\forall n \in \dbc{ N }\\
		&\phantom{\text{\normalfont subject to }} \bff,\bpi \in \setC^K
	\end{align}
\end{subequations}
for the quadratic equivalent objectives
\begin{subequations}
	\begin{align}
		Q_2^{\rm m} \brc{\bphi,\bff } & \hspace{-.5mm} = \hspace{-.5mm} \sum_{k=1}^K  2 \left. \Re \set{ f_k^* C_{2k}^{ \rm m} \brc{\bphi} }\right. - \abs{f_k}^2 	D_{2k}^{ \rm m} \brc{\bphi},\\
		 Q_2^{\rm e} \brc{\bphi, \bpi } &\hspace{-.5mm} = \hspace{-.5mm} \sum_{k=1}^K  2 \left. \Re \set{ \varpi_k^* C_{2k}^{ \rm e} \brc{\bphi} }\right. - \abs{\varpi_k}^2,
	\end{align}
\end{subequations}
where $f_k$ and $\varpi_k$ are entry $k$ of $\bff$ and $\bpi$, respectively, and $C_{2k}^{ \rm m} \brc{\bphi} $, $D_{2k}^{ \rm m} \brc{\bphi} $, and $C_{2k}^{ \rm e} \brc{\bphi} $ are defined as
\begin{subequations}
	\begin{align}
		C_{2k}^{ \rm m} \brc{\bphi} &= \sqrt{\omega_k b_{0k} \brc{1\hspace*{-.5mm}+\hspace*{-.5mm}\bar{q}_k}} \brc{ \mh^\her_{\rd, k}\bw_{0k} \hspace*{-.5mm} + \hspace*{-.5mm} \bphi^\her \mH^\her_k \bw_{0k} }, \hspace*{-1mm} \\
		D_{2k}^{ \rm m} \brc{\bphi} &= \norm{ \mh^\her_{\rd, k}\mW_{0} + \bphi^\her \mH^\her_k \mW_{0} }^2 + \sigma_k^2,  \\
		C_{2k}^{ \rm e} \brc{\bphi} &= \sqrt{ \frac{\omega_k b_{0k} \brc{1+\bar{\psi}_k} }{ B_{2Uk}^{\rm e} } } \sqrt{A_{2k}^{ \rm e } \brc{\bphi}}.
	\end{align}
\end{subequations}

As the constraints in $\maq_2$ decouple, we find the solution via alternation between the following two steps
\begin{itemize}
	\item[(A)] Find optimal $\bff$ for $\bphi= \bar{\bphi}$ and $\bpi = \bar{\bpi}$ as
	\begin{subequations}
		\begin{align}
			\bar{\bff} = &\argmax _{\bff}  \left. Q_2^{\rm m} \brc{\bar{\bphi},\bff } + Q_2^{\rm e} \brc{\bar{\bphi}, \bar{\bpi }} \right. 
			\tag{$\maq_2^\rmA$} \\
			&\text{\normalfont subject to } \bff \in \setC^K.
		\end{align}
	The solution to this marginal optimization is
		\begin{align}
		\bar{f}_k &= \frac{C_{2k}^{ \rm m} \brc{\bar{\bphi}} }{D_{2k}^{ \rm m} \brc{\bar{\bphi}}}. \label{eq:f_bar}
	\end{align}
	\end{subequations}
	\item[(B)] Find optimal $\bphi$ and $\bpi$ for $\bff= \bar{\bff}$ as
\begin{subequations}
	\begin{align}
		\brc{ \bphi^\star, \bar{\bpi} } = &\argmax _{\bphi,\bpi}  \left. Q_2^{\rm m} \brc{\bphi,\bar{\bff} } + Q_2^{\rm e} \brc{{\bphi}, {\bpi }} \right. 
		\tag{$\maq_2^{\rm B}$} \\
		&\text{\normalfont subject to } \abs{\phi_n}=1,~\forall n \in \dbc{ N }\\
		&\phantom{\text{\normalfont subject to }} \bpi \in \setC^K.
	\end{align}
\end{subequations}
It is straightforward to show that the variable $\bpi$ in $\maq_2^{\rm B}$ does not participate in the update of step (A), and hence $\maq_2^{\rm B}$ can be replaced with
\begin{subequations}
	\begin{align}
		 \bphi^\star \hspace*{-1mm}= &\argmax _{\bphi}  \left. Q_2^{\rm m} \brc{\bphi,\bar{\bff} } \hspace*{-1mm} + \hspace*{-1mm} { \frac{\omega_k b_{0k} \brc{1 \hspace*{-1mm} + \hspace*{-1mm} \bar{\psi}_k} }{ B_{2Uk}^{\rm e} } } {A_{2k}^{ \rm e } \brc{\bphi}} \right. \hspace*{-1mm}
		\tag{$\hat\maq_2^{\rm B}$} \\
		&\text{\normalfont subject to } \abs{\phi_n}=1,~\forall n \in \dbc{ N }.
	\end{align}
\end{subequations}
\end{itemize}

Despite the quadratic objective of $\hat\maq_2^{\rm B}$, the optimization is still intractable due to the unit-modulus constraint. There are various approaches by which $\bphi^\star$ can be approximated tractably; see for instance \cite{ma2010semidefinite,sun2017majorization}, and \cite{absil2009optimization}. In the following, we consider the algorithm $Q_{\rm MM} \brc{\cdot}$ which uses \ac{mm} method to determine the approximation
\begin{align}
	\bar{\bphi} = Q_{\rm MM} \brc{\bar{\bff},\bar{\bq},\bar{\mpsi},\mW_0} \label{eq:phi_bar}
\end{align}
iteratively.  The algorithm is illustrated in Algorithm~\ref{alg:MM}. In this algorithm, $\mQ$ and $\bvv$ are determined in terms of $\bar{\bff}$ as 
\begin{subequations}
	\label{eq:Uv}
	\begin{align}
		\mQ  &= \sum_{k=1}^K  \abs{\bar{f}_k}^2 \mH_k^\her \mW_{0} \mW_0^\her  \mH_k + \sum_{k=1}^K \left. \sum_{j=1}^J  \frac{\kappa_k}{\mu_j^2} \mG_j^\her \bw_{0k} \bw_{0k}^\her \mG_j \right. \label{eq:mQ}, \\
		\bvv &= \sum_{k=1}^K  \abs{\bar{f}_k}^2 \mH_k^\her \mW_{0} \mW_0^\her \mh_{\rd, k} - \eta_k \mH_k^\her \bw_{0k} + \sum_{k=1}^K \left. \sum_{j=1}^J  \frac{\kappa_k}{\mu_j^2} \mG_j^\her \bw_{0k} \bw_{0k}^\her \bgg_{\rd,j} \right. , \label{eq:bv}
	\end{align}
\end{subequations}
where $\kappa_k$ and $\eta_k$ are given by
\begin{subequations}
	\begin{align}
		\kappa_k &= \frac{\omega_k b_{0k} \brc{1  + \bar{\psi}_k} }{ B_{2Uk}^{\rm e} }, \\
		\eta_k &= \bar{f}_k^* \sqrt{ \omega_k b_{0k} \brc{1 + \bar{q}_k} }.
	\end{align}
\end{subequations}
Furthermore, $\buu^\her_n$ and $\vv_n$ denote the $n$-th row of $\mQ$ and $\bvv$, respectively, and $\lambda_{\max}$ represents the maximum eigenvalue of $\mQ$. The detailed derivations are given in Appendix~\ref{app:MM}. Following the classical approach, e.g., the approach in \cite{song2015optimization}, the convergence of Algorithm~\ref{alg:MM} can be shown. This means that Algorithm~\ref{alg:MM} in each iteration updates the phase-shift, such that the objective in $\hat{\maq}_2^{\rm B}$ evolves in a non-decreasing fashion\footnote{More precisely, the objective at the end of each iteration is larger or equal to the objective value in the previous iteration. Noting that the objective has a bounded maximum, this concludes that the algorithm converges to a fixed-point.}. Due to the similarity, we drop the convergence analysis, and refer the interested reader to \cite{song2015optimization} and the references therein.
	\begin{algorithm} 
	\caption{The MM-based algorithm $Q_{\rm MM} \brc{\cdot}$}
	\label{alg:MM}
	\begin{algorithmic}[1]
		\INPUT{$\bar{\bff}$, $\bar{\bq}$, $\bar{\mpsi}$ and $\mW_0$}
		\REQUIRE Set a feasible initial point ${\bphi}$, and define
		\begin{align*}
			\mar_{\rm M}^\ssr \brc{\bphi} =  Q_2^{\rm m} \brc{\bphi,\bar{\bff} }  + { \frac{\omega_k b_{0k} \brc{1 +  \bar{\psi}_k} }{ B_{2Uk}^{\rm e} } } {A_{2k}^{ \rm e } \brc{\bphi}} .
		\end{align*}
	Calculate $\mU$ and $\bvv$ from \eqref{eq:Uv}, and let $\lambda_{\max}$ be the maximum eigenvalue of $\mU$.
		\IF{$\mar_{\rm M}^\ssr \brc{\bphi} $ has not converged}{
		\STATE  Set $\hat{\phi}_n  = {\buu_n^\her \bphi  +\vv_n - \lambda_{\max} \phi_{n} }$ for $n\in \dbc{N}$
		\STATE Update $\displaystyle {\phi}_n = - \frac{\hat{\phi}_n }{\abs{\hat{\phi}_n}}$ for $n\in \dbc{N}$, and go back to line 1
	}
\ENDIF
	\end{algorithmic} 
\end{algorithm}

\begin{remark}
It is worth mentioning that $Q_{\rm MM} \brc{\cdot}$ does not necessarily results in the best approximation possible to be determined tractably. One can hence also consider other possible approaches for approximating $\bphi^\star$. An example of an alternative algorithm based on the \ac{bcd} method is given in Appendix~\ref{app:BCDPhi}.
\end{remark}

By alternating between \eqref{eq:phi_bar} and \eqref{eq:f_bar}, the solution to the sub-problem $\mal_2^{\rm A}$ is determined. It is further straightforward to show that the solution to $\mal_2^{\rm B}$, when $\bphi = \bar{\bphi}$, is given by
\begin{subequations}
	\label{eq:Sub_q_alph}
	\begin{align}
		\bar{q}_k &= \frac{A_{2k}^{ \rm m } \brc{\bar{\bphi}}}{B_{2k}^{ \rm m } \brc{ \bar{\bphi} }},\label{eq:q_bar}\\
		\bar{\psi}_k &= \frac{A_{2k}^{ \rm e } \brc{ \bar{\bphi} }}{B_{2k}^{ \rm e } \brc{ \bar{\bphi} }}. \label{eq:psi_bar}
	\end{align}
\end{subequations}

\subsection{Third Marginal Problem}
The third marginal optimization determines the auxiliary vector $\mb$ which maximizes $\mar^\ssr_\rmq \brc{ \mW, \bphi, \mb}$ for fixed $\mW$ and $\mb$. The corresponding optimization is thus given by
\begin{subequations}
	\begin{align}
		\mb^\star = &\argmax _{\mb}  \left. \mar^\ssr_\rmq \brc{ \mW_0, \bphi_0, \mb}\right. 
		\tag{$\mam_3$}  \label{eq:marg3}\\
		&\text{\normalfont subject to } \mb\in\dbc{0,1}^K
	\end{align}
\end{subequations}
for some fixed $\mW_0$ and $\bphi_0$.

$\mam_3$ is a linear program whose solution is given by
\begin{align} \label{eq:mb}
\bar{b}_{k}=
\begin{cases}
1 &\sinr_k\brc{\mW_0, \bphi_0} > \esnr_k \brc{\mW_0 , \bphi_0 }  \\
0 &\text{Otherwise}
\end{cases}.
\end{align}
This means that in each iteration, the objective includes only those \acp{ut} whose achievable secrecy rates are non-zero. 

\section{Iterative Algorithms with Reduced Complexity}
\label{Sec:Red}
The \ac{bcd} method developed in Section~\ref{sec:ALG_1} approximates the jointly optimal design using a multi-tier \ac{bcd}-based algorithm: The algorithm alternates among three marginal problems. The solution to each marginal problem is further determined via the \ac{bcd} method in which some inner loops alternate among multiple sub-problems. Some of these sub-problems are further solved via alternation among multiple other sub-problems. For instance, the solution to the first marginal problem is given by alternating between the two sub-problems $\mal_1^\rmA$ and $\mal_1^{\rm B}$, and the solution of $\mal_1^{\rm A}$, is given by alternating between the update rules in \eqref{eq:w_bar} and \eqref{eq:beta_bar}. Consequently, in each iteration of the first inner loop\footnote{By the first inner loop, we mean alternating between $\mal_1^\rmA$ and $\mal_1^{\rm B}$.}, the second inner loop\footnote{By the second inner loop, we mean alternating between \eqref{eq:w_bar} and \eqref{eq:beta_bar}.} should run for several iterations, such that the inner \ac{bcd}-based algorithm converges.

The multi-tier nature of the proposed algorithm yields high computational complexity. In this section, we address this issue by proposing two iterative algorithms with reduced complexity. These algorithms reduce the computational complexity by merging multiple inner loops of different \ac{bcd} tiers. We show that these algorithms converge to their fixed-points in a non-decreasing fashion, meaning that the updated objective in each iteration is necessarily greater than or equal to the objective calculated in the previous iteration.

\subsection{Two-Tiers Algorithm} 
\label{sec:algTwoTier}
The first iterative algorithm comprises of two tiers. The inner tier corresponds to the marginal problems and contains the merged version of inner loops. The algorithm uses the decoupling of sub-problems and perform the updates in parallel. The inner loop of iterations in the second marginal problem, by which the phase-shifts are updated, is however not merged. This leads to a \textit{moderate} computational complexity. 

\subsubsection{Statement of the Algorithm}
The algorithm is represented in Algorithm~\ref{alg:2Tier}. This algorithm follows the standard \ac{bcd} approach by alternating among the marginal problems. Here, $\alg_1\brc{\cdot}$ and $\alg_2\brc{\cdot}$ refer to inner loops which approximate the solution of first and second marginal problems, respectively. These loops are presented in Algorithms~\ref{Alg:1} and \ref{Alg:2}.

\begin{algorithm} 
\caption{Two-Tiers Algorithm for Precoding and Phase-Shift Tuning}
\label{alg:2Tier}
\begin{algorithmic}[1]
	\REQUIRE{Set $\mW$ and $\bphi$ to some initial values.} 
	\IF{ $\mar^\ssr \brc{\mW,\bphi}$ has not converged }{
		\STATE Update $\mW$ as $\mW = \alg_1 \brc{{\bphi}, {\mb}}$
		\STATE Update $\bphi$ as $\bphi = \alg_2 \brc{\mW, \mb}$
		\STATE Update $b_k = \mone \set{ \sinr_k\brc{\mW, \bphi} > \esnr_k \brc{\mW , \bphi } }$ for $k\in\dbc{K}$
		\STATE Go back to line 1
	}
	\ENDIF
\end{algorithmic} 
\end{algorithm}

Algorithm~\ref{Alg:1}  illustrates the inner loop for the first marginal problem. In this algorithm, the inner loops of the sub-problems are merged following the intuitive decoupling of these problems. The convergence of this algorithm is discussed in Lemma~\ref{lem:Conv_1} in Section~\ref{sec:Convergence_1}.
\begin{algorithm} 
	\caption{Inner Loop for the First Marginal Problem $\alg_1\brc{\cdot}$}
	\label{Alg:1}
	\begin{algorithmic}[1]
		\INPUT{$\bphi$ and $\mb$}
		\REQUIRE{Set $\bar{\mW}$ to some initial value and define marginal $\mar_{\rm M}^{\ssr} \brc{\mW} = \mar_\rmq^{\ssr} \brc{ \mW , {\bphi} , {\mb} }$.} 
		\IF{ $\mar_{\rm M}^{\ssr} \brc{\mW} $ has not converged }{
			\STATE Update $\bar{\beta}_k$ via \eqref{eq:beta_bar} for $k\in\dbc{K}$
			\STATE Update $\bar{t}_k$ via \eqref{eq:t_bar} for $k\in\dbc{K}$
			\STATE Update $\bar{\alpha}_k$ via \eqref{eq:alpha_bar} for $k\in\dbc{K}$
			\STATE Update $\bar{\bw}_k$ via \eqref{eq:w_k_bar} for $k\in\dbc{K}$
			\STATE Set ${\mW}_{\rm new} = \dbc{ \bar{\bw}_1, \ldots, \bar{\bw}_K }$
			\STATE Update $\bar{\mW} = \mW_{\rm new}$ and go back to line 1
		}
		\ENDIF
	\end{algorithmic} 
\end{algorithm}

The inner loop $\alg_2\brc{\cdot}$ is presented in Algorithm~\ref{Alg:2}. Here, all the \ac{bcd} loops are merged into a single loop. The \ac{mm} loop, by which the phase-shifts are updated is however not merged. This means that $\alg_2\brc{\cdot}$ contains another inner loop, i.e., $Q_{\rm MM} \brc{\cdot}$. As the result, Algorithm~\ref{alg:2Tier} yields moderate computational complexity $Q_{\rm MM} \brc{\cdot}$ requires larger number of iterations to converge. The convergence of Algorithm~\ref{Alg:2} is discussed in Lemma~\ref{lem:Conv_2} in Section~\ref{sec:Convergence_1}.

\begin{algorithm} 
	\caption{Inner Loop for the Second Marginal Problem $\alg_2\brc{\cdot}$}
	\label{Alg:2}
	\begin{algorithmic}[1]
	\INPUT{$\mW$ and $\mb$}
	\REQUIRE{Set $\bar{\bphi}$ to some initial value and define marginal $\mar_{\rm M}^{\ssr} \brc{\bphi} = \mar_\rmq^{\ssr} \brc{ \mW , {\bphi} , {\mb} }$} 
		\IF{ $\mar_{\rm M}^{\ssr} \brc{\bphi} $ has not converged }{
			\STATE Update $\bar{f}_k$ via \eqref{eq:f_bar} for $k\in\dbc{K}$
			\STATE Update $\bar{q}_k$ via \eqref{eq:q_bar} for $k\in\dbc{K}$
			\STATE Update $\bar{\psi}_k$ via \eqref{eq:psi_bar} for $k\in\dbc{K}$
			\STATE Set ${\bphi}_{\rm new} = Q_{\rm MM} \brc{\bar{\bff},\bar{\bq},\bar{\mpsi},\mW}$
			\STATE Update $\bar{\bphi} = \bphi_{\rm new}$ and go back to line 1
		}
		\ENDIF
	\end{algorithmic} 
\end{algorithm}

\subsubsection{Convergence Analysis}
\label{sec:Convergence_1}
We now discuss the convergence of Algorithm~\ref{alg:2Tier}. We show that using this two-tiers algorithm, the original objective, i.e., $\mar^\ssr\brc{\mW, \bphi}$, evolve non-decreasingly. This means that the value of the objective function after each iteration would be greater or equal to its value in the previous iteration. Since the objective has a bounded maximum, this result indicates that this algorithm converges to a fixed-point after a certain number of iterations\footnote{This fixed-point is however not necessarily the global or a local maximum and is only an approximation.}.

We start our analysis by showing the convergence of the inner loops $\alg_1\brc\cdot$ and $\alg_2\brc\cdot$ in the following lemmas:

\begin{lemma}
	\label{lem:Conv_1}
	For the inner loop $\alg_1\brc{\bar{\bphi} , \bar{\mb}}$, we have
	\begin{align}
	\mar^\ssr_\rmq  \brc{\bar{\mW}, \bar{\bphi}, \bar{\mb}} \leq \mar^\ssr_\rmq  \brc{\mW_{\rm new}, \bar{\bphi}, \bar{\mb}}.
	\end{align}
\end{lemma}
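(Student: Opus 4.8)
The plan is to recognize $\alg_1(\bar\bphi,\bar\mb)$ as a block-coordinate ascent on the \emph{doubly}-transformed objective of the first marginal problem, in which every block update performed by Algorithm~\ref{Alg:1} is an \emph{exact} inner maximization, and then to use the tightness of the Lagrangian dual transform (Definition~\ref{def:Lag}, equivalence \eqref{eq:Lag1}--\eqref{eq:Lag2}) and of the quadratic transform (Definition~\ref{def:Quad}) to anchor the value of that transformed objective to $\mar^\ssr_\rmq(\cdot,\bar\bphi,\bar\mb)$ at both the start iterate $\bar\mW$ and the output $\mW_{\rm new}$ of one pass through the loop body. The only step that strictly increases anything will be the $\mW$-update in line~5; everything else will be an equality coming from tightness or an inequality of the form ``the current auxiliary vector is feasible, hence $\le$ the corresponding maximum''.

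Concretely, I would fix $\bphi=\bar\bphi$, $\mb=\bar\mb$ throughout and recall from Section~\ref{sec:First_Mar} that $\mar^\ssr_\rmq(\mW,\bar\bphi,\bar\mb)=S_1^{\rm m}(\mW)+\hat S_1^{\rm e}(\mW)-c$, where $c=\log B_{1U}^{\rm e}$ depends only on $\bar\bphi$ and $P_\maxx$, hence is constant in $\mW$. By tightness of the Lagrangian dual transform, $S_1^{\rm m}(\mW)+\hat S_1^{\rm e}(\mW)=\max_{\btt,\malpha}\dbc{L_1^{\rm m}(\mW,\btt)+L_1^{\rm e}(\mW,\malpha)}$ for every $\mW$, with the maximizer for a fixed $\mW$ given exactly by \eqref{eq:t_bar}--\eqref{eq:alpha_bar}; thus lines~3--4 of Algorithm~\ref{Alg:1} set $(\bar\btt,\bar\malpha)$ so that $L_1^{\rm m}(\bar\mW,\bar\btt)+L_1^{\rm e}(\bar\mW,\bar\malpha)-c=\mar^\ssr_\rmq(\bar\mW,\bar\bphi,\bar\mb)$. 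Next, with $(\bar\btt,\bar\malpha)$ frozen, $\max_{\mW}\dbc{L_1^{\rm m}(\mW,\bar\btt)+L_1^{\rm e}(\mW,\bar\malpha)}$ subject to $\norm{\mW}_F^2\le P_\maxx$ is precisely $\mal_1^{\rm A}$, i.e.\ (after discarding the $\Xi(\bar t_k),\Xi(\bar\alpha_k)$ terms that are constant in $\mW$) the problem $\hat\mal_1^{\rm A}$, whose quadratic transform is $\maq_1$. Tightness of the quadratic transform gives, for every feasible $\mW$, that $L_1^{\rm m}(\mW,\bar\btt)+L_1^{\rm e}(\mW,\bar\malpha)=\max_{\mbeta,\mgamma}\dbc{Q_1^{\rm m}(\mW,\mbeta)+Q_1^{\rm e}(\mW,\mgamma)}+(\text{const})$, with the $\mbeta$-maximizer at a given $\mW$ equal to \eqref{eq:beta_bar}; line~2 therefore sets $\bar\mbeta$ so that $Q_1^{\rm m}(\bar\mW,\bar\mbeta)$ attains the inner maximum at $\bar\mW$, and line~5 replaces $\mW$ by $\mW_{\rm new}$ via \eqref{eq:w_k_bar}, which is the exact maximizer over the feasible set of $Q_1^{\rm m}(\mW,\bar\mbeta)+\max_{\mgamma}Q_1^{\rm e}(\mW,\mgamma)$ (problem $\maq_1^{\rm B}$; the $\mgamma$-block is separable and re-optimized for free). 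Chaining these relations --- start at $\bar\mW$; rewrite by Lagrangian tightness; rewrite by quadratic tightness at $\bar\mW$; increase by passing to $\mW_{\rm new}$ as the $\maq_1^{\rm B}$-maximizer; upper-bound by $\max_{\mbeta,\mgamma}$ at $\mW_{\rm new}$ (since $\bar\mbeta$ need not be optimal there); rewrite back by quadratic tightness; upper-bound by $\max_{\btt,\malpha}$ at $\mW_{\rm new}$ --- lands on $S_1^{\rm m}(\mW_{\rm new})+\hat S_1^{\rm e}(\mW_{\rm new})-c=\mar^\ssr_\rmq(\mW_{\rm new},\bar\bphi,\bar\mb)$, which is the claim. (I would also note in passing that $\bar\mW$ is feasible, as the initialization is and each update \eqref{eq:w_k_bar} respects the power constraint through $\lambda_P$, so all the ``$\le\max$'' steps over the feasible set are legitimate.)

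The main obstacle I anticipate is purely bookkeeping: keeping straight \emph{which} auxiliary vector is an exact inner maximizer for \emph{which} value of $\mW$. The Lagrangian-dual identity is an equality only when $\btt,\malpha$ are the values \eqref{eq:t_bar}--\eqref{eq:alpha_bar} evaluated at the same $\mW$ at which we read off the objective, and likewise for $\mbeta$ via \eqref{eq:beta_bar}; so the chain must invoke the ``$=$''s at $\bar\mW$ and only ``$\le$''s at $\mW_{\rm new}$. This in turn hinges on the fact that Algorithm~\ref{Alg:1} updates $\bar\mbeta,\bar\btt,\bar\malpha$ (lines~2--4, all functions of $\bar\mW$ alone) \emph{before} the $\mW$-update in line~5, and on the fact that the discarded quantities --- $c=\log B_{1U}^{\rm e}$ and the $\Xi(\bar t_k),\Xi(\bar\alpha_k)$ terms --- are genuinely independent of $\mW$, so they cancel identically across the chain. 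Everything else is a direct application of the two transform-tightness statements recalled in Section~\ref{Sec:FP}.
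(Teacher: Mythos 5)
Your proposal is correct and follows essentially the same route as the paper's proof in Appendix~E: a sandwich argument in which the transformed objective equals $\mar^\ssr_\rmq$ at $\bar{\mW}$ because the auxiliary variables are exact inner maximizers there, increases under the exact $\mW$-update of $\maq_1^{\rm B}$, and is bounded above by $\mar^\ssr_\rmq$ at $\mW_{\rm new}$ because the stale auxiliaries are feasible but suboptimal. The only difference is presentational — you chain through the Lagrangian dual and quadratic transforms separately, while the paper composes them into a single quadratic surrogate $f_Q$ with fixed-point auxiliary maps $\mbeta^{\rm F}(\mW),\mgamma^{\rm F}(\mW)$ — so the logical content is identical.
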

\begin{proof}
	The proof is given in Appendix~\ref{app:Proof_Conv_1}.
\end{proof}
Lemma~\ref{lem:Conv_1} indicates that in each iteration, the updated precoding matrix returned a larger value of the transformed objective function $\mar^\ssr_\rmq  \brc\cdot$. As the result, one can conclude that after a certain number of iterations, $\alg_1\brc\cdot$ must converge. This result is further stated for $\alg_2\brc\cdot$ in Lemma~\ref{lem:Conv_2}.

\begin{lemma}
	\label{lem:Conv_2}
	Assume that in iteration $t\geq 1$ of the inner loop $\alg_2\brc{\bar{\mW} , \bar{\mb}}$, $Q_{\rm MM} \brc{\cdot}$ iterates for $T_Q\brc{t}$ iterations. For any positive integer sequence of $T_Q\brc{t}$, we have
	\begin{align}
		\mar^\ssr_\rmq  \brc{\bar{\mW}, \bar{\bphi}, \bar{\mb}} \leq \mar^\ssr_\rmq  \brc{\bar{\mW}, \bphi_{\rm new}, \bar{\mb}}.
	\end{align}
This inequality further holds for any alternative of $Q_{\rm MM} \brc\cdot$ which converges to its fixed-point in a non-decreasing fashion.
\end{lemma}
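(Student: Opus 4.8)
The plan is to show that, within $\alg_2\brc{\bar\mW,\bar\mb}$, each transformation we apply either preserves or increases the transformed objective, and that the chain of these transformations links $\mar^\ssr_\rmq\brc{\bar\mW,\bar\bphi,\bar\mb}$ to $\mar^\ssr_\rmq\brc{\bar\mW,\bphi_{\rm new},\bar\mb}$. I would first recall the hierarchy of equivalences established in Section~\ref{sec:Second_Mar}: the marginal problem $\mam_2$ with its objective $\mar^\ssr_\rmq\brc{\mW_0,\bphi,\mb_0}$ is rewritten as $\hat\mam_2$ (an exact rewriting of the objective, up to the $\bphi$-independent constant $\sum_k\omega_k b_k\log B_{2Uk}^{\rm e}$), whose Lagrangian dual transform $\mal_2$ satisfies: for fixed $\bphi$, maximizing over $\brc{\bq,\mpsi}$ recovers $S_2^{\rm m}\brc\bphi+\hat S_2^{\rm e}\brc\bphi$, and this maximum is attained at the closed-form updates \eqref{eq:Sub_q_alph}. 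Similarly, the quadratic transform $\maq_2$ of $\mal_2^{\rm A}$ has the property that for fixed $\bphi$, maximizing over $\brc{\bff,\bpi}$ recovers $L_2^{\rm m}\brc{\bphi,\bar\bq}+L_2^{\rm e}\brc{\bphi,\bar\mpsi}$, attained at \eqref{eq:f_bar} and $\bar\varpi_k=C_{2k}^{\rm e}\brc\bphi$.

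Next I would track one full pass of $\alg_2$. Starting from $\bar\bphi$, the algorithm computes $\bar f_k,\bar q_k,\bar\psi_k$ via \eqref{eq:f_bar}, \eqref{eq:q_bar}, \eqref{eq:psi_bar}, then sets $\bphi_{\rm new}=Q_{\rm MM}\brc{\bar\bff,\bar\bq,\bar\mpsi,\mW}$. Because $\bar\bq,\bar\mpsi$ are the exact maximizers of the Lagrangian dual objective at $\bar\bphi$, we get
\begin{align}
\mar^\ssr_\rmq\brc{\mW,\bar\bphi,\mb}
&= S_2^{\rm m}\brc{\bar\bphi}+S_2^{\rm e}\brc{\bar\bphi} \nonumber\\
&= L_2^{\rm m}\brc{\bar\bphi,\bar\bq}+L_2^{\rm e}\brc{\bar\bphi,\bar\mpsi}-\textstyle\sum_k\omega_k b_k\log B_{2Uk}^{\rm e}.
\end{align}
Similarly, since $\bar\bff$ is the exact maximizer of the quadratic objective at $\bar\bphi$ (with $\bq=\bar\bq$), $Q_2^{\rm m}\brc{\bar\bphi,\bar\bff}+Q_2^{\rm e}\brc{\bar\bphi,\bar\bpi}=L_2^{\rm m}\brc{\bar\bphi,\bar\bq}+L_2^{\rm e}\brc{\bar\bphi,\bar\mpsi}$, where $\bar\bpi$ is the optimal value $C_{2k}^{\rm e}\brc{\bar\bphi}$; note the $Q_2^{\rm e}$ piece equals the term carried into $\hat\maq_2^{\rm B}$. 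Now the key monotonicity input: $Q_{\rm MM}\brc\cdot$ runs for $T_Q\brc t\ge 1$ iterations and is assumed to make $\mar_{\rm M}^\ssr\brc\bphi=Q_2^{\rm m}\brc{\bphi,\bar\bff}+\kappa_k A_{2k}^{\rm e}\brc\bphi$ non-decreasing (this is exactly the content of the Algorithm~\ref{alg:MM} convergence claim, valid for any $T_Q\brc t\ge1$ and for any alternative with the same non-decreasing property). Hence $\mar_{\rm M}^\ssr\brc{\bphi_{\rm new}}\ge\mar_{\rm M}^\ssr\brc{\bar\bphi}$, i.e.
\begin{align}
Q_2^{\rm m}\brc{\bphi_{\rm new},\bar\bff}+\textstyle\sum_k\kappa_k A_{2k}^{\rm e}\brc{\bphi_{\rm new}}
\ \ge\
Q_2^{\rm m}\brc{\bar\bphi,\bar\bff}+\textstyle\sum_k\kappa_k A_{2k}^{\rm e}\brc{\bar\bphi}.
\end{align}

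Finally I would close the chain by using the inequality directions of the two transforms in the "wrong" way, which is legitimate because a transformed objective at any feasible point lower-bounds the original it came from. Concretely: $L_2^{\rm m}\brc{\bphi_{\rm new},\bar\bq}+L_2^{\rm e}\brc{\bphi_{\rm new},\bar\mpsi}\le\max_{\bq,\mpsi}\{L_2^{\rm m}\brc{\bphi_{\rm new},\bq}+L_2^{\rm e}\brc{\bphi_{\rm new},\mpsi}\}=S_2^{\rm m}\brc{\bphi_{\rm new}}+\hat S_2^{\rm e}\brc{\bphi_{\rm new}}$, and likewise $Q_2^{\rm m}\brc{\bphi_{\rm new},\bar\bff}+Q_2^{\rm e}\brc{\bphi_{\rm new},\bar\bpi}\le\max_{\bff}\{\cdots\}=L_2^{\rm m}\brc{\bphi_{\rm new},\bar\bq}+L_2^{\rm e}\brc{\bphi_{\rm new},\bar\mpsi}$; the only subtlety is that $\bar\bpi$ is the exact maximizer $C_{2k}^{\rm e}\brc{\bar\bphi}$, not $C_{2k}^{\rm e}\brc{\bphi_{\rm new}}$, so we have $Q_2^{\rm e}\brc{\bphi_{\rm new},\bar\bpi}\le\sum_k\kappa_k A_{2k}^{\rm e}\brc{\bphi_{\rm new}}$ (the quadratic-in-$\varpi$ term is maximized elsewhere), which goes in the favorable direction. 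Concatenating: $\mar^\ssr_\rmq\brc{\mW,\bphi_{\rm new},\mb}+\text{const}=S_2^{\rm m}\brc{\bphi_{\rm new}}+\hat S_2^{\rm e}\brc{\bphi_{\rm new}}\ge L_2^{\rm m}\brc{\bphi_{\rm new},\bar\bq}+L_2^{\rm e}\brc{\bphi_{\rm new},\bar\mpsi}\ge Q_2^{\rm m}\brc{\bphi_{\rm new},\bar\bff}+\sum_k\kappa_k A_{2k}^{\rm e}\brc{\bphi_{\rm new}}\ge Q_2^{\rm m}\brc{\bar\bphi,\bar\bff}+\sum_k\kappa_k A_{2k}^{\rm e}\brc{\bar\bphi}=L_2^{\rm m}\brc{\bar\bphi,\bar\bq}+L_2^{\rm e}\brc{\bar\bphi,\bar\mpsi}=S_2^{\rm m}\brc{\bar\bphi}+\hat S_2^{\rm e}\brc{\bar\bphi}=\mar^\ssr_\rmq\brc{\mW,\bar\bphi,\mb}+\text{const}$, and the constant cancels. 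This handles a single outer iteration of $\alg_2$; the full statement follows since each outer iteration is such a pass and $\mar^\ssr_\rmq$ is thus non-decreasing along the whole run, and since the argument nowhere used the specific value of $T_Q\brc t$ beyond $T_Q\brc t\ge1$.

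The main obstacle I anticipate is bookkeeping the "wrong-direction" use of the $\bar\bpi$ term: one must be careful that replacing the optimal $\bpi$-at-$\bar\bphi$ by its value while the first argument has moved to $\bphi_{\rm new}$ still yields a lower bound on $\sum_k\kappa_k A_{2k}^{\rm e}\brc{\bphi_{\rm new}}$ rather than an upper bound — this is fine because $Q_2^{\rm e}\brc{\bphi,\bpi}$ is concave in $\bpi$ and its unconstrained maximum over $\bpi$ at any fixed $\bphi$ equals $\sum_k\abs{C_{2k}^{\rm e}\brc\bphi}^2=\sum_k\kappa_k A_{2k}^{\rm e}\brc\bphi$, so any other $\bpi$ gives something no larger. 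The rest is a mechanical concatenation of inequalities already licensed by the FP equivalence theorems quoted in Section~\ref{Sec:FP} and by the assumed non-decreasing property of $Q_{\rm MM}\brc\cdot$.
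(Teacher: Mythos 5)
Your proposal is correct and follows essentially the same route as the paper's Appendix proof: sandwich the non-decreasing $Q_{\rm MM}$ step between (i) the exactness of the FP transforms at their optimal auxiliary variables, which recovers $\mar^\ssr_\rmq$ up to a $\bphi$-independent constant, and (ii) the suboptimality of the stale auxiliary variables at $\bphi_{\rm new}$. The only difference is bookkeeping — the paper composes the two auxiliary levels into a single function $f_Q$ with fixed-point maps $\bff^{\rmF}\brc\bphi$, $\bpi^{\rmF}\brc\bphi$, whereas you unroll the Lagrangian-dual and quadratic levels separately and explicitly justify that $Q_2^{\rm e}\brc{\bphi_{\rm new},\bar\bpi}\leq\sum_k\kappa_k A_{2k}^{\rm e}\brc{\bphi_{\rm new}}$, a step the paper leaves implicit.
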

\begin{proof}
	The proof is given in Appendix~\ref{app:Proof_Conv_2}.
\end{proof}

Using Lemmas~\ref{lem:Conv_1} and \ref{lem:Conv_2}, we state the following result on the convergence of Algorithm~\ref{alg:2Tier}:

\begin{theorem}
	\label{theorem:2}
Let $\mW^{\brc{t}}$ and $\bphi^{\brc{t}}$ be the precoding matrix and vector of phase-shifts updated at the end of iteration $t$ in Algorithm~\ref{alg:2Tier}. Assume that $\alg_1\brc\cdot$ and $\alg_2\brc\cdot$ iterate for $T_1\brc{t}$ and $T_2\brc{t}$, respectively, in iteration $t$ of Algorithm~\ref{alg:2Tier}. For any positive integer sequence of $T_1\brc{t}$ and $T_2\brc{t}$, we have
\begin{align*}
	\mar^\ssr\brc{\mW^{\brc{t}}, \bphi^{\brc{t}}} \leq 	\mar^\ssr\brc{\mW^{\brc{t+1}}, \bphi^{\brc{t+1}}}.
\end{align*}
The inequality holds for any alternative of $Q_{\rm MM} \brc\cdot$ in the inner loop $\alg_2\brc\cdot$ which converges to its fixed-point in a non-decreasing fashion and iterates in each iteration of $\alg_2\brc\cdot$ for $T\geq1$ iterations.
\end{theorem}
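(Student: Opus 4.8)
The plan is to chain together the monotonicity guarantees already established for the inner loops with the standard block-coordinate-descent (BCD) argument applied to the three marginal problems. The key observation is that within one iteration of Algorithm~\ref{alg:2Tier}, the variables $\mW$, $\bphi$ and $\mb$ are updated sequentially, each time holding the other two blocks fixed, and each update weakly increases the transformed objective $\mar^\ssr_\rmq\brc{\cdot}$; so it suffices to (i) show that each of the three updates does not decrease $\mar^\ssr_\rmq$, and (ii) relate $\mar^\ssr_\rmq$ back to the original objective $\mar^\ssr$ at the beginning and end of the iteration using Theorem~\ref{theorem:1}.

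I would proceed as follows. First, fix iteration $t$ and write $\mW^{\brc{t}}$, $\bphi^{\brc{t}}$, $\mb^{\brc{t}}$ for the blocks at the start of the iteration and let $\mb^{\brc{t-1}}$ denote the auxiliary vector carried in. \emph{Step 1 (precoding update).} By Lemma~\ref{lem:Conv_1} applied with $\brc{\bar\bphi,\bar\mb}=\brc{\bphi^{\brc{t}},\mb^{\brc{t}}}$, running $\alg_1\brc\cdot$ for $T_1\brc{t}\geq1$ iterations yields $\mW^{\brc{t+1}}$ with $\mar^\ssr_\rmq\brc{\mW^{\brc{t}},\bphi^{\brc{t}},\mb^{\brc{t}}}\leq\mar^\ssr_\rmq\brc{\mW^{\brc{t+1}},\bphi^{\brc{t}},\mb^{\brc{t}}}$; the general $T_1\brc{t}$ case follows by composing the single-step inequality along the iterations, since each application leaves $\brc{\bphi,\mb}$ untouched. \emph{Step 2 (phase-shift update).} Apply Lemma~\ref{lem:Conv_2} with $\brc{\bar\mW,\bar\mb}=\brc{\mW^{\brc{t+1}},\mb^{\brc{t}}}$ to get $\mar^\ssr_\rmq\brc{\mW^{\brc{t+1}},\bphi^{\brc{t}},\mb^{\brc{t}}}\leq\mar^\ssr_\rmq\brc{\mW^{\brc{t+1}},\bphi^{\brc{t+1}},\mb^{\brc{t}}}$, valid for any positive integer sequence $T_2\brc{t}$ and any monotone alternative of $Q_{\rm MM}\brc\cdot$. \emph{Step 3 ($\mb$ update).} The update in line~4 of Algorithm~\ref{alg:2Tier} is exactly the maximizer of $\mam_3$ given in \eqref{eq:mb}; since $\mb^{\brc{t}}$ is feasible for $\mam_3$, optimality gives $\mar^\ssr_\rmq\brc{\mW^{\brc{t+1}},\bphi^{\brc{t+1}},\mb^{\brc{t}}}\leq\mar^\ssr_\rmq\brc{\mW^{\brc{t+1}},\bphi^{\brc{t+1}},\mb^{\brc{t+1}}}$.

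Chaining Steps~1--3 gives $\mar^\ssr_\rmq\brc{\mW^{\brc{t}},\bphi^{\brc{t}},\mb^{\brc{t}}}\leq\mar^\ssr_\rmq\brc{\mW^{\brc{t+1}},\bphi^{\brc{t+1}},\mb^{\brc{t+1}}}$. To finish, I would pass back to the original objective. For the optimal $\mb$, one has $\mar^\ssr_\rmq\brc{\mW,\bphi,\mb^\star\brc{\mW,\bphi}}=\mar^\ssr\brc{\mW,\bphi}$ directly from the definitions \eqref{eq:sec_rate}, \eqref{eq:ssr} and \eqref{R_q}, because $b_k$ simply masks the $k$-th term precisely when $\log\brc{\brc{1+\sinr_k}/\brc{1+\esnr_k}}$ is negative, which is the action of the $\dbc{\cdot}^+$ operator. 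Since the $\mb$-update in each iteration of Algorithm~\ref{alg:2Tier} produces exactly this optimal mask for the current $\brc{\mW,\bphi}$, we have $\mar^\ssr_\rmq\brc{\mW^{\brc{t}},\bphi^{\brc{t}},\mb^{\brc{t}}}=\mar^\ssr\brc{\mW^{\brc{t}},\bphi^{\brc{t}}}$ for every $t\geq1$, and likewise at $t+1$. Combining this with the chained inequality yields $\mar^\ssr\brc{\mW^{\brc{t}},\bphi^{\brc{t}}}\leq\mar^\ssr\brc{\mW^{\brc{t+1}},\bphi^{\brc{t+1}}}$, which is the claim. Boundedness of $\mar^\ssr$ (it is a finite weighted sum of nonnegative rates bounded through the power constraint) then gives convergence of the monotone sequence to a fixed point.

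The main obstacle is \emph{bookkeeping of which $\mb$ is held fixed where}, rather than any deep inequality: Lemmas~\ref{lem:Conv_1} and \ref{lem:Conv_2} are stated for a \emph{fixed} $\mb$ inside the inner loops, while Algorithm~\ref{alg:2Tier} updates $\mb$ only at the \emph{end} of each outer iteration. One must be careful that the $\mb$ fed into $\alg_1$ and $\alg_2$ in iteration $t$ is $\mb^{\brc{t}}$ from the previous iteration's line~4, and that the final identity $\mar^\ssr_\rmq=\mar^\ssr$ is invoked only at points where $\mb$ genuinely equals the optimal mask $\mb^\star\brc{\mW,\bphi}$ — i.e. right after line~4. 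A minor secondary point is justifying the "any positive integer sequence $T_i\brc{t}$" clause: this just requires noting that a finite composition of weakly-increasing maps is weakly increasing, so running each inner loop one or many times makes no difference to the monotonicity conclusion.
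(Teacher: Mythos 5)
Your proposal is correct and follows essentially the same route as the paper's proof: chain Lemmas~\ref{lem:Conv_1} and \ref{lem:Conv_2} on the transformed objective $\mar^\ssr_\rmq$ with $\mb^{\brc{t}}$ held fixed, handle the $\mb$-update last, and use the identity $\mar^\ssr_\rmq\brc{\mW,\bphi,\mb^\star\brc{\mW,\bphi}}=\mar^\ssr\brc{\mW,\bphi}$ at the optimal mask to translate back to the original objective at both ends of the chain. The only cosmetic difference is that you justify the $\mb$-step by optimality of the linear program $\mam_3$, whereas the paper verifies the same inequality by a term-by-term sign check on the indices where $b_k^{\brc{t}}\neq b_k^{\brc{t+1}}$; these are the same argument.
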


\begin{proof}
	Let $\mb^{\brc{t}}$ denote the updated auxiliary variable in iteration $t$. Considering Algorithm~\ref{alg:2Tier}, we can write
	\begin{align} 
	b_{k}^{\brc{t}}=
		\begin{cases}
			1 &\sinr_k\brc{  \mW^{\brc{t}}, \bphi^{\brc{t}} } > \esnr_k \brc{\mW^{\brc{t}} , \bphi^{\brc{t}} }  \\
			0 &\text{Otherwise}
		\end{cases}. \label{update:b}
	\end{align}
This concludes that
\begin{subequations}
	\begin{align}
	\mar^\ssr_\rmq  \brc{ \mW^{\brc{t}}, \bphi^{\brc{t}}, \mb^{\brc{t}} } &=  \sum_{k=1}^K \omega_k b^{\brc{t}}_k \log\left( \dfrac{1+\sinr_k\brc{\mW^{\brc{t}}, \bphi^{\brc{t}}}}{1+\esnr_k\brc{ \mW^{\brc{t}}, \bphi^{\brc{t}} } } \right) \\
	&=  \sum_{k=1}^K \omega_k  \dbc{ \log\left( \dfrac{1+\sinr_k\brc{\mW^{\brc{t}}, \bphi^{\brc{t}}}}{1+\esnr_k\brc{ \mW^{\brc{t}}, \bphi^{\brc{t}} } } \right) }^+\\
	&= \mar^\ssr  \brc{ \mW^{\brc{t}}, \bphi^{\brc{t}} } .
\end{align}
\end{subequations}

As the result, we can write
\begin{subequations}
	\begin{align}
		\mar^\ssr  \brc{ \mW^{\brc{t}}, \bphi^{\brc{t}} }  &= \mar^\ssr_\rmq  \brc{ \mW^{\brc{t}}, \bphi^{\brc{t}}, \mb^{\brc{t}} } \\
		&\stackrel{\dagger}{\leq}  \mar^\ssr_\rmq  \brc{ \mW^{\brc{t+1}}, \bphi^{\brc{t}}, \mb^{\brc{t}} } \\
		&\stackrel{\star}{\leq}  \mar^\ssr_\rmq  \brc{ \mW^{\brc{t+1}}, \bphi^{\brc{t+1}}, \mb^{\brc{t}} } \label{eq:proof_Th2_1}
	\end{align}
\end{subequations}
where $\dagger$ and $\star$ follow from Lemmas~\ref{lem:Conv_1} and \ref{lem:Conv_2}, respectively. Note that for $T_1\brc{t+1}=1$ and $T_2\brc{t+1}=1$, $\mW^{\brc{t}} = \mW^{\brc{t+1}}$ and $\bphi^{\brc{t}}=\bphi^{\brc{t+1}}$. Moreover, $T_2\brc{t}\geq 1$, Lemma~\ref{lem:Conv_2} indicates that $\star$ holds for any alternative of $Q_{\rm MM} \brc\cdot$ in $\alg_2\brc\cdot$ which converges in a non-decreasing fashion and iterates for $T\geq1$ iterations. Therefore, the inequality holds for any  $T_1\brc{t}=1$ and $T_2\brc{t}\geq1$, and for any alternative of $Q_{\rm MM} \brc\cdot$ with positive integer sequence of iterations.

Let us now define
the expression 
\begin{align}
	R_k^{\brc{t+1}} = \log\left( \dfrac{1+\sinr_k\brc{\mW^{\brc{t+1}}, \bphi^{\brc{t+1}}}}{1+\esnr_k\brc{ \mW^{\brc{t+1}}, \bphi^{\brc{t+1}} } } \right) .
\end{align}
From \eqref{update:b}, we conclude that for $k\in \dbc{K}$ at which ${b}_{k}^{\brc{t}} \neq {b}_{k}^{\brc{t+1}}$, we have
\begin{align}
	\begin{cases}
			R_k^{\brc{t+1}} > 0 & \text{if } {b}_{k}^{\brc{t+1}} = 1\\
			R_k^{\brc{t+1}} \leq 0 & \text{if } {b}_{k}^{\brc{t+1}} = 0\\
	\end{cases}.
\end{align}
This indicates that
\begin{subequations}
	\begin{align}
		\mar^\ssr_\rmq  \brc{ \mW^{\brc{t+1}}, \bphi^{\brc{t+1}}, \mb^{\brc{t}} } &=  \sum_{k=1}^K \omega_k b^{\brc{t}}_k 	R_k^{\brc{t+1}}\\
		&\leq  \sum_{k=1}^K \omega_k b^{\brc{t+1}}_k 	R_k^{\brc{t+1}}\\
		&= \mar^\ssr_\rmq  \brc{ \mW^{\brc{t+1}}, \bphi^{\brc{t+1}}, \mb^{\brc{t+1}} }\\
		&= \mar^\ssr  \brc{ \mW^{\brc{t+1}}, \bphi^{\brc{t+1}}}.
	\end{align}
\end{subequations}
We can hence conclude the proof from \eqref{eq:proof_Th2_1}.
\end{proof}

\subsection{Single-Loop Algorithm}
\label{sec:algSingleLoop}
The complexity of the two-tiers algorithm can be further reduced by merging the second tier also in the main loop. This complexity reduction intuitively degrades the performance of the algorithm. However, it can boost the speed of the algorithm significantly. The numerical simulations given in the next section show that despite the existence of such a trade-off, the performance degradation in comparison to the speed gain is essentially negligible.

\subsubsection{Statement of the Algorithm}
The algorithm is represented in Algorithm~\ref{alg:Single_Loop}. In this algorithm, all inner loops of the two-tiers algorithm, i.e., $\alg_1\brc\cdot$, $\alg_2\brc\cdot$ and the \ac{mm} loop in Algorithm~\ref{alg:MM}, are merged with the outer loop. In this respect, this algorithm can be observed as the special case of the two-tiers algorithm in which all the inner loops iterate only of a single iteration.

\begin{algorithm} 
	\caption{Single-Loop Algorithm for Precoding and Phase-Shift Tuning}
	\label{alg:Single_Loop}
	\begin{algorithmic}[1]
		\REQUIRE{Set $\mW$ and $\bphi$ to some initial values.} 
			\IF{ $\mar^\ssr \brc{\mW,\bphi}$ has not converged }{
			\STATE Calculate $\bar{\beta}_k$, $\bar{t}_k$, $\bar{\alpha}_k$ and $\bar{\bw}_k$ via \eqref{eq:beta_bar}, \eqref{eq:t_bar}, \eqref{eq:alpha_bar} and \eqref{eq:w_k_bar}, respectively, for $k\in\dbc{K}$
			\STATE Update ${\mW} = \dbc{ \bar{\bw}_1, \ldots, \bar{\bw}_K }$
			\STATE Calculate $\bar{f}_k$, $\bar{q}_k$ and $\bar{\psi}_k$ via \eqref{eq:f_bar}, \eqref{eq:q_bar} and \eqref{eq:psi_bar}, respectively, for $k\in\dbc{K}$
			\STATE Calculate $\hat{\phi}_n  = {\buu_n^\her \bphi_0  +\vv_n - \lambda_{\max} \phi_{0n} }$ with $\buu_n$ and $\vv_n$ given in \eqref{eq:Uv}
			\STATE Update for $n\in \dbc{N}$
			\begin{align*}
			\displaystyle {\phi}_n = - \frac{\hat{\phi}_n }{\abs{\hat{\phi}_n}}	
			\end{align*}
			\STATE Update $b_k = \mone \set{ \sinr_k\brc{\mW, \bphi} > \esnr_k \brc{\mW , \bphi } }$ for $k\in\dbc{K}$
			\STATE Go back to line 1
		}
		\ENDIF
	\end{algorithmic} 
\end{algorithm}

\subsubsection{Convergence Analysis}
The convergence of the single-loop algorithm is directly concluded from Theorem~\ref{theorem:2}. In fact, Algorithm~\ref{alg:Single_Loop} is a special case of Algorithm~\ref{alg:2Tier} in which $\alg_1\brc\cdot$, $\alg_2\brc\cdot$, and $Q_{\rm MM}\brc\cdot$ are run for a single iteration. Noting that Theorem~\ref{theorem:2} holds for any number of inner loop iterations, we conclude that Algorithm~\ref{alg:Single_Loop} converges in a non-decreasing fashion to its fixed-point with respect to the original objective $\mar^\ssr\brc{\mW, \bphi}$.

\section{Numerical Investigations}
\label{Sec:Num}
We investigate the performance of the proposed approach by performing several numerical experiments. A schematic view of the setting considered in simulations is shown in Fig.~\ref{fig:setting} for $K=4$ legitimate \acp{ut} and $J=6$ eavesdroppers. In this setting, an \ac{irs} is located at distance $D=25$ m from the \ac{bs}. $K$ legitimate \acp{ut} are located uniformly and randomly around the \ac{irs} in a circle of radius $r_{\rm IRS} = 10$ m. There are $J$ eavesdroppers with uniformly generated random distances located in a circle of radius $r_{\rm BS} = 10$ m around the \ac{bs}. To calculate the noise power, a typical GSM channel with bandwidth $200$ kHz has been considered. Assuming the noise spectral density to be $\log N_0 = -147$ dB/Hz, the noise power at the receiving terminals is set to $\log \sigma_k^2 = \log \mu_j^2 = -147$ dB.

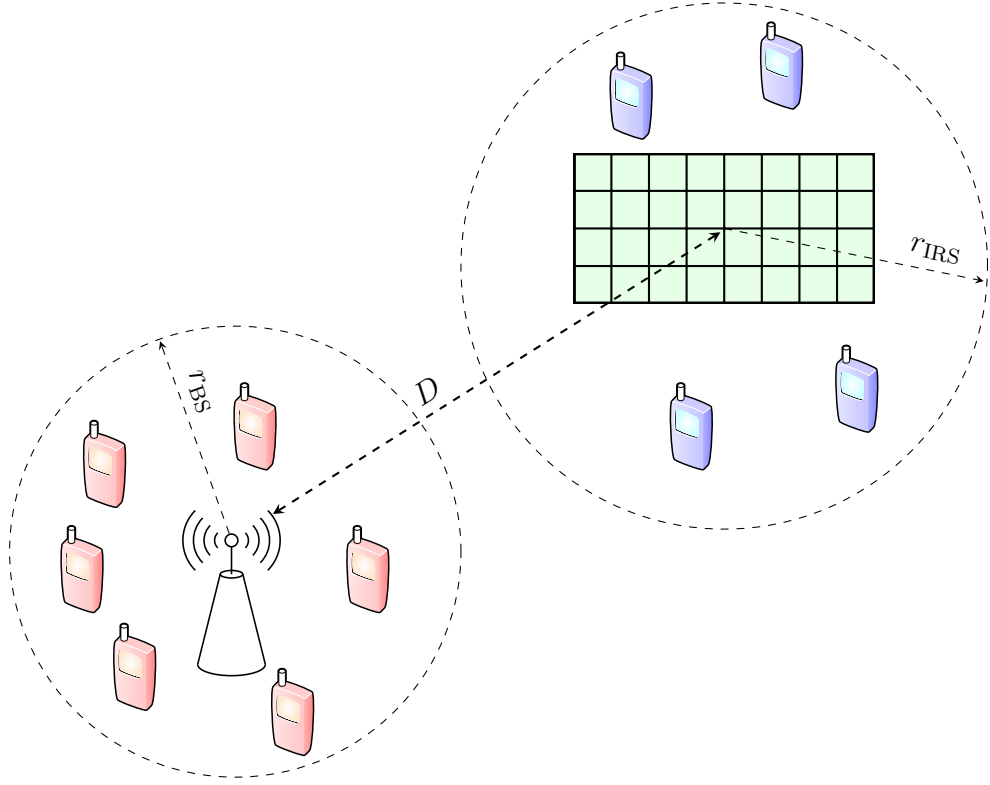
\begin{figure}
	\centering
	\begin{tikzpicture}
\tikzset{mobile phone/.pic={
code={
\begin{scope}[line join=round,looseness=0.25, line cap=round,scale=0.07, every node/.style={scale=0.07}]
\begin{scope}
\clip [preaction={left color=blue!10, right color=blue!30}] 
  (1/2,-1) to [bend left] (0,10)
  to [bend left] ++(1,1) -- ++(0,2)
  arc (180:0:3/4 and 1/2) -- ++(0,-2)
  to [bend left]  ++(5,-2) coordinate (A) to [bend left] ++(-1/2,-11)
  to [bend left] ++(-1,-1) to [bend left] cycle;
\path [left color=blue!30, right color=blue!50]
  (A) to [bend left] ++(0,-11) to[bend left] ++(-3/2,-2)
  -- ++(0,12);
\path [fill=blue!20, draw=white, line width=0.01cm]
  (0,10) to [bend left] ++(1,1) -- ++(0,2)
  arc (180:0:3/4 and 1/2) -- ++(0,-2)
  to [bend left]  (A) to [bend left] ++(-3/2,-5/4)
  to [bend right] cycle;
\draw [line width=0.01cm, fill=white]
  (9/8,21/2) arc (180:360:5/8 and 3/8) --
  ++(0,2.5) arc (0:180:5/8 and 3/8) -- cycle;
\draw [line width=0.01cm, fill=white]
  (9/8,13) arc (180:360:5/8 and 3/8);
\fill [white, shift=(225:0.5)] 
  (1,17/2) to [bend left] ++(4,-7/4)
  to [bend left] ++(0,-7/2) to [bend left] ++(-4, 6/4)
  to [bend left] cycle;
\fill [black, shift=(225:0.25)] 
  (1,17/2) to [bend left] ++(4,-7/4)
  to [bend left] ++(0,-7/2) to [bend left] ++(-4, 6/4)
  to [bend left] cycle;
\shade [inner color=white, outer color=cyan!20] 
  (1,17/2) to [bend left] ++(4,-7/4)
  to [bend left] ++(0,-7/2) to [bend left] ++(-4, 6/4)
  to [bend left] cycle;
%
\end{scope}
\draw [line width=0.02cm] 
  (1/2,-1) to [bend left] (0,10)
  to [bend left] ++(1,1) -- ++(0,2)
  arc (180:0:3/4 and 1/2) -- ++(0,-2)
  to [bend left]  ++(5,-2) to [bend left] ++(-1/2,-11)
  to [bend left] ++(-1,-1) to [bend left] cycle;
\end{scope}%
}}}

\tikzset{eve/.pic={
		code={
			\begin{scope}[line join=round,looseness=0.25, line cap=round,scale=0.07, every node/.style={scale=0.07}]
				\begin{scope}
					\clip [preaction={left color=red!10, right color=red!30}] 
					(1/2,-1) to [bend left] (0,10)
					to [bend left] ++(1,1) -- ++(0,2)
					arc (180:0:3/4 and 1/2) -- ++(0,-2)
					to [bend left]  ++(5,-2) coordinate (A) to [bend left] ++(-1/2,-11)
					to [bend left] ++(-1,-1) to [bend left] cycle;
					\path [left color=red!30, right color=red!50]
					(A) to [bend left] ++(0,-11) to[bend left] ++(-3/2,-2)
					-- ++(0,12);
					\path [fill=red!20, draw=white, line width=0.01cm]
					(0,10) to [bend left] ++(1,1) -- ++(0,2)
					arc (180:0:3/4 and 1/2) -- ++(0,-2)
					to [bend left]  (A) to [bend left] ++(-3/2,-5/4)
					to [bend right] cycle;
					\draw [line width=0.01cm, fill=white]
					(9/8,21/2) arc (180:360:5/8 and 3/8) --
					++(0,2.5) arc (0:180:5/8 and 3/8) -- cycle;
					\draw [line width=0.01cm, fill=white]
					(9/8,13) arc (180:360:5/8 and 3/8);
					\fill [white, shift=(225:0.5)] 
					(1,17/2) to [bend left] ++(4,-7/4)
					to [bend left] ++(0,-7/2) to [bend left] ++(-4, 6/4)
					to [bend left] cycle;
					\fill [black, shift=(225:0.25)] 
					(1,17/2) to [bend left] ++(4,-7/4)
					to [bend left] ++(0,-7/2) to [bend left] ++(-4, 6/4)
					to [bend left] cycle;
					\shade [inner color=white, outer color=orange!20] 
					(1,17/2) to [bend left] ++(4,-7/4)
					to [bend left] ++(0,-7/2) to [bend left] ++(-4, 6/4)
					to [bend left] cycle;
					%
				\end{scope}
				\draw [line width=0.02cm] 
				(1/2,-1) to [bend left] (0,10)
				to [bend left] ++(1,1) -- ++(0,2)
				arc (180:0:3/4 and 1/2) -- ++(0,-2)
				to [bend left]  ++(5,-2) to [bend left] ++(-1/2,-11)
				to [bend left] ++(-1,-1) to [bend left] cycle;
			\end{scope}%
}}}

\tikzset{radiation/.style={{decorate,decoration={expanding waves,angle=90,segment length=4pt}}},
	antenna/.pic={
		code={\tikzset{scale=3/10}
			\draw[semithick] (0,0) -- (1,4);
			\draw[semithick] (3,0) -- (2,4);
			\draw[semithick] (0,0) arc (180:0:1.5 and -0.5);
			\node[inner sep=4pt] (circ) at (1.5,5.5) {};
			\draw[semithick] (1.5,5.5) circle(8pt);
			\draw[semithick] (1.5,5.5cm-8pt) -- (1.5,4);
			\draw[semithick] (1.5,4) ellipse (0.5 and 0.166);
			\draw[semithick,radiation,decoration={angle=45}] (1.5cm+8pt,5.5) -- +(0:2);
			\draw[semithick,radiation,decoration={angle=45}] (1.5cm-8pt,5.5) -- +(180:2);
	}}
}

\tikzset{
irs/.pic={\clip[postaction={shade,left color=green!10,right color = green!10}](0,0) rectangle (4,2);
	\draw[thick] (0,0) grid[step=0.5] (4,2);
	\draw[ultra thick](0,0) rectangle (4,2);}
}

	\path (0,-.8) pic {antenna};
	\draw[dashed] (0.5,0.7) circle[radius=3cm];
	
	\path (01,-1.8) pic {eve};
	\path (-1.1,-1.2) pic {eve};
	\path (-1.5,1.5) pic {eve};
	\path (2,.1) pic {eve};
	\path (.5,2) pic {eve};
	\path (-1.8,.1) pic {eve};
	
	\path (5,4) pic {irs};
	\draw[dashed] (7,4.5) circle[radius=3.5cm];
	
	\path (6.3,2) pic {mobile phone};
	\path (8.5,2.5) pic {mobile phone};
	\path (5.5,6.4) pic {mobile phone};
	\path (7.5,6.8) pic {mobile phone};
	
	\draw [<->,>=stealth,dashed,thick] (1,1.2) -- (6.95,4.95) node [above, sloped,pos=.37] (d) {$D$};
	\draw [->,>=stealth,dashed] (7,5) -- (10.45,4.3) node [above, sloped,pos=.8] (d) {$r_{\rm IRS}$};
	\draw [->,>=stealth,dashed] (.42,0.94) -- (-.5,3.5) node [above, sloped,pos=.7] (d) {$r_{\rm BS}$};

\end{tikzpicture}
	\caption{A schematic representation of the simulation setting for $K=4$ legitimate \acp{ut} and $J=6$ eavesdroppers.}
	\label{fig:setting}
\end{figure}

The channel coefficients for a given \ac{ut} $k$ and eavesdropper $j$ are generated according to
\begin{align}
	\mh_{i,k} &= \varrho^{\rm L}_{i,k} \mh^0_{i,k}, \\
	\bgg_{i,j} &= \varrho^{\rm E}_{i,j} \bgg^0_{i,j} ,
\end{align}
respectively, where the index $i \in \set{ \mathrm{d} , \mathrm{r} }$ refers to the direct and reflection paths. Here, $\varrho^{\rm L}_{i,k}$ and $\varrho^{\rm E}_{i,k}$ model path-loss and $\mh^0_{i,k}$ and $\bgg^0_{i,j}$ take into account the impact of small-scale fading. Similarly, the channel between the \ac{bs} and the \ac{irs} is written as
\begin{align}
	\mT &= \varrho_{\rm BI} \mT^0
\end{align}
where $\varrho_{\rm BI}$ and $\mT^0$ model path-loss and small-scale fading effects, respectively. Throughout the simulations, we consider the standard Rayleigh fading model. This means that the entries of $\mh^0_{i,k}$ and $\bgg^0_{i,j}$ for $i\in\dbc{K}$ and $j\in\dbc{J}$, as well as the entries of $\mT^0$ are generated \ac{iid} according to a zero-mean unit-variance Gaussian distribution. 

The path-loss coefficients are generated according to the following model
\begin{align}
 \varrho\brc{ d,\varsigma} = \frac{\varrho_{\rm ref} }{d^\varsigma}.
\end{align}
In this model, $\varrho_{\rm ref}$ denotes the path-loss at the reference distance $d=1$ m which depends on the operating wave-length and antenna gains. Throughout the simulations, we set it to $\log \varrho_{\rm ref}=-30$ dB. As the result, $\varrho^{\rm L}_{i,k} = \varrho\brc{ d^{\rm L}_{i,k},\varsigma^{\rm L}_{i,k}}$ where $d^{\rm L}_{i,k}$ for $i= \rm d$ and $i= \rm r$ denotes the distance from legitimate \ac{ut} $k$ to the \ac{bs} and the distance from legitimate \ac{ut} $k$ to the \ac{irs}, respectively.  $\varsigma^{\rm L}_{i ,k}$ further denotes the path-loss exponent for the link specified by index $i$. Similarly, we set 
\begin{itemize}
	\item $\varrho^{\rm E}_{i,k} = \varrho\brc{ d^{\rm E}_{i,j} , \varsigma^{\rm E}_{i,j} }$, where $d^{\rm E}_{i,j}$ and $\varsigma^{\rm E}_{i,j} $ denote the distance to the eavesdropper $j$ and its corresponding path-loss exponent for the link specified with index $i$, respectively.
	\item $\varrho_{\rm BI} = \varrho\brc{ D,\varsigma_{\rm BI} }$, where $D$ and $\varsigma_{\rm BI}$ represent the distance between the \ac{bs} and the \ac{irs} and the corresponding path-loss exponent respectively.
\end{itemize}
Throughout the simulations, the distances are calculated from the realization of the randomized setting, and the path-loss exponents are set to $\varsigma^{\rm L}_{\mathrm{d},k} = \varsigma^{\rm E}_{\mathrm{d},k} = 3.5$ and $\varsigma^{\rm L}_{\mathrm{r},k} = \varsigma^{\rm E}_{\mathrm{r},k} = \varsigma_{\rm BI}  = 2.3$. The weighted secrecy sum-rate is evaluated for multiple realizations of the channel and then averaged numerically. To indicate this point, we denote the averaged secrecy sum-rate\footnote{This is in fact the \textit{ergodic} secrecy sum-rate.} with $\bar{\mar}^\ssr$. The weights are further set to be all equal to one, i.e., $\omega_k=1$ for $k\in\dbc{K}$.

\subsection{Reference Scenarios and Benchmark}
For each experiment, we evaluate the performance for the both iterative algorithms proposed in Sections~\ref{sec:algTwoTier} and \ref{sec:algSingleLoop}. To illustrate the performance gain obtained by employing \acp{irs}, we consider a reference scenario in which the \ac{irs} is set off. This means that $\beta_n=0$ for all $n\in \dbc{N}$. The \ac{bs} in this case determines the precoding vectors via the \ac{srzf} precoding scheme proposed in \cite{asaad2019secure}. We refer to this scenario as \textit{Ref. 1}.

To compare the performance of the proposed algorithms with a reference point, we further consider the reference scenario \textit{Ref. 2} in which phase-shifts at the \ac{irs} are set randomly and uniformly. The linear precoding vectors are further found by applying \ac{srzf} precoding to the end-to-end equivalent channel \cite{asaad2019secure}.  As the state-of-the-art performance, we also compare the proposed phase-tuning algorithms with the algorithm proposed in \cite{dong2020enhancing}. This latter algorithm is refereed to as the \textit{benchmark} throughout the investigations.

We further consider an \textit{enhanced} form of the setting in which the magnitudes of the signals reflected by \ac{irs} elements are also allowed to be  modified. Here, we assume that the $\beta_n$ is a tunable parameter whose value is taken from $\beta_n\in\dbc{0,1}$ and is updated at the same rate as the phase\footnote{This means that the constraint $\abs{\phi_{n}} = 1$ is replaced by $\abs{\phi_{n}} \leq 1$ in the optimization problems.} $\theta_{n}$. As a result, tuning of the \ac{irs} reduces to a convex program which is solved tractably via a convex programming algorithm. It is worth mentioning that this is only an \textit{enhanced} form of the system which we simulate for sake of comparison. In fact, due to the larger degrees of freedom, this enhanced form is expected to outperform the proposed algorithms, since in the system, amplitude modifications are not performed at the \ac{irs}. For this enhanced setting, we simulate two algorithms: The first algorithm performs alternating optimization, such that the \ac{irs} in each iteration is tuned via convex programming. This algorithm is referred to as \textit{C-Ref.~1} and can be seen as the enhanced form of the two-tiers algorithm. The second algorithm follows the single-loop algorithm with this minor modification that in each iteration the \ac{irs} tuning is performed directly via convex programming. This algorithm is referred to as \textit{C-Ref. 2} and is considered as the enhanced version of the single-loop algorithm. The convex programs in these algorithms are solved via the CVX toolbox in MATLAB \cite{cvx}.

\subsection{Numerical Simulations}
\label{sec:Numerical}
We start the investigations by setting $K=4$, and $J=6$ in the setting\footnote{Similar to what is shown in Fig.~\ref{fig:setting}.}. The number of  transmit antennas at the \ac{bs} is set to $M=8$, and the \ac{irs} is assumed to be equipped with $N=128$ elements. Fig.~\ref{fig:WSRvsPower} shows the weighted secrecy sum-rate against the transmit power $P_{\max}$. The figure shows significant gains achieved by the proposed algorithms, when the performance is compared with the reference scenarios Ref. 1 and Ref. 2. From the figure, it is further observed that the two-tiers algorithm considerably outperform the benchmark. This is however not the case with the single-loop algorithm. Although this algorithm always outperform the benchmark, its gain becomes negligible, as the transmit power increases.

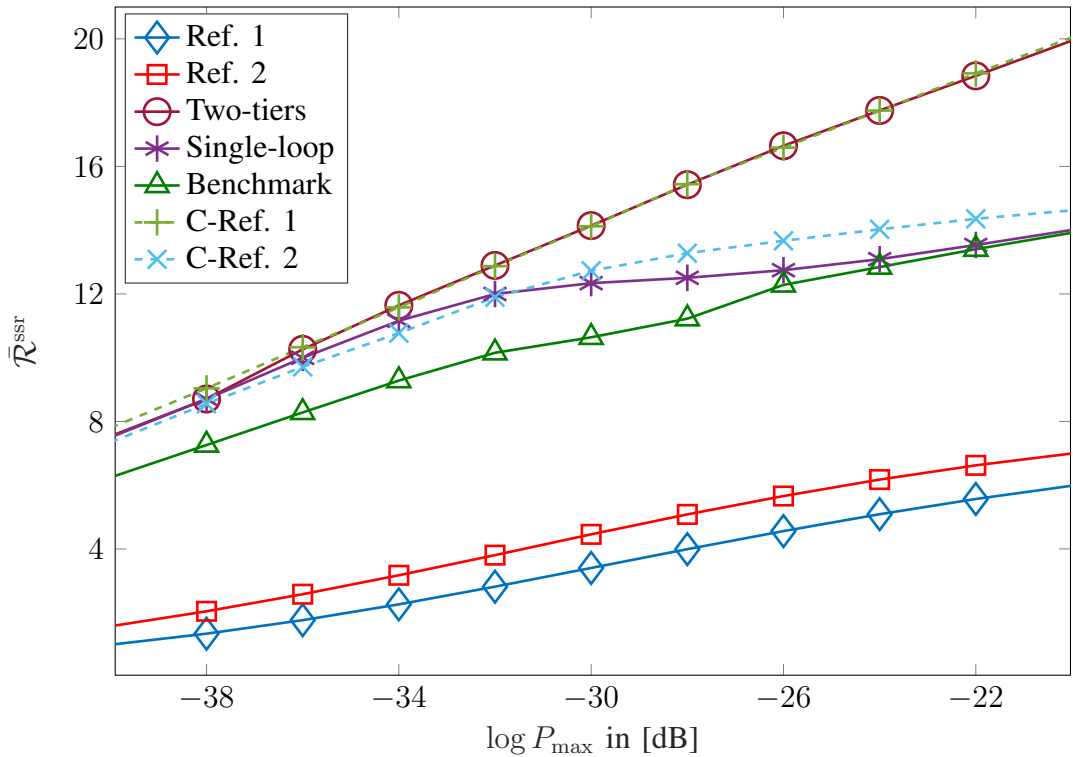
\begin{figure}
	\centering
%
%
\definecolor{mycolor1}{rgb}{0.00000,0.44700,0.74100}%
\definecolor{mycolor2}{rgb}{0.63529,0.07843,0.18431}%
\definecolor{mycolor3}{rgb}{0.49400,0.18400,0.55600}%
\definecolor{mycolor4}{rgb}{0.00000,0.49804,0.00000}%
\definecolor{mycolor5}{rgb}{0.46600,0.67400,0.18800}%
\definecolor{mycolor6}{rgb}{0.30100,0.74500,0.93300}%
\begin{tikzpicture}

\begin{axis}[%
width=5in,
height=3.5in,
at={(1.23in,0.788in)},
scale only axis,
xmin=-39.9,
xmax=-20.0307219662058,
xlabel style={font=\color{white!15!black}},
xlabel={$\log P_{\max}$ in [dB]},
xtick={-38,-34,-30,-26,-22},
xticklabels={{$-38$},{$-34$},{$-30$},{$-26$},{$-22$}},
ymin=0.0389521861914477,
ymax=21,
ylabel style={font=\color{white!15!black}},
ylabel={$\bar{\mar}^\ssr$},
ytick={4,8,12,16,20},
yticklabels={{$4$},{$8$},{$12$},{$16$},{$20$}},
axis background/.style={fill=white},
legend style={at={(axis cs: -39.7,20.8)}, anchor=north west, legend cell align=left, align=left, draw=white!15!black}
]
\addplot [color=mycolor1, line width=1.0pt, mark size=6pt, mark=diamond, mark options={solid, mycolor1}]
table[row sep=crcr]{%
	-40	0.994137332211496\\
	-38	1.34174175517578\\
	-36	1.76555176160576\\
	-34	2.26197851408818\\
	-32	2.8159232969404\\
	-30	3.40328953890991\\
	-28	3.99270930590264\\
	-26	4.55965055767643\\
	-24	5.08904891434104\\
	-22	5.5666877968582\\
	-20	5.98388270451835\\
};
\addlegendentry{Ref. 1}

\addplot [color=red, line width=1.0pt, mark size=3.5pt, mark=square, mark options={solid, red}]
table[row sep=crcr]{%
	-40	1.57344152025599\\
	-38	2.04352764495599\\
	-36	2.57976124708396\\
	-34	3.17391828255195\\
	-32	3.80550486880665\\
	-30	4.4576897627951\\
	-28	5.08360627475987\\
	-26	5.65916623671945\\
	-24	6.17596201401609\\
	-22	6.62231154855901\\
	-20	6.99487607191236\\
};
\addlegendentry{Ref. 2}

\addplot [color=mycolor2, line width=1.0pt, mark size=5.0pt, mark=o, mark options={solid, mycolor2}]
table[row sep=crcr]{%
	-40	7.53614887914735\\
	-38	8.70395736386562\\
	-36	10.2609941335268\\
	-34	11.640839485947\\
	-32	12.8928387770552\\
	-30	14.1417495909804\\
	-28	15.4259278621438\\
	-26	16.6464246674387\\
	-24	17.7542249023608\\
	-22	18.8406674764854\\
	-20	19.945932984474\\
};
\addlegendentry{Two-tiers}

\addplot [color=mycolor3, line width=1.0pt, mark size=5.0pt, mark=asterisk, mark options={solid, mycolor3}]
table[row sep=crcr]{%
	-40	7.4977408158802\\
	-38	8.70141956453199\\
	-36	9.99541192034194\\
	-34	11.1565535979883\\
	-32	11.9974639613919\\
	-30	12.3369970079572\\
	-28	12.5046283458591\\
	-26	12.7437906707772\\
	-24	13.0893488665038\\
	-22	13.5344867282548\\
	-20	14.007100759704\\
};
\addlegendentry{Single-loop}

\addplot [color=mycolor4, line width=1.0pt, mark size=5.0pt, mark=triangle, mark options={solid, mycolor4}]
table[row sep=crcr]{%
	-40	6.24083894494675\\
	-38	7.25995588431797\\
	-36	8.28235695208496\\
	-34	9.28065759516915\\
	-32	10.1527827492813\\
	-30	10.6383765491205\\
	-28	11.222669766905\\
	-26	12.2820410556018\\
	-24	12.8428108830516\\
	-22	13.4034578828239\\
	-20	13.9188918802995\\
};
\addlegendentry{Benchmark}

\addplot [color=mycolor5, dashed, line width=1.0pt, mark size=5.0pt, mark=+, mark options={solid, mycolor5}]
table[row sep=crcr]{%
	-40	7.79587962427265\\
	-38	9.03731224908068\\
	-36	10.3257151797226\\
	-34	11.5772433825062\\
	-32	12.8715007639566\\
	-30	14.1202889831594\\
	-28	15.4419303817982\\
	-26	16.588048351661\\
	-24	17.7509537679283\\
	-22	18.918356352522\\
	-20	20.044536551357\\
};
\addlegendentry{C-Ref. 1}

\addplot [color=mycolor6, dashed, line width=1.0pt, mark size=5.0pt, mark=x, mark options={solid, mycolor6}]
table[row sep=crcr]{%
	-40	7.34051770221018\\
	-38	8.55545313077229\\
	-36	9.71026844287349\\
	-34	10.7701068265168\\
	-32	11.8959062361078\\
	-30	12.7271247223293\\
	-28	13.278819222568\\
	-26	13.6672084077134\\
	-24	14.0264643213076\\
	-22	14.3555659666085\\
	-20	14.6216607190758\\
};
\addlegendentry{C-Ref. 2}

\end{axis}
\end{tikzpicture}%
	\caption{Weighted secrecy sum-rate against the transmit power $P_{\max}$.}
	\label{fig:WSRvsPower}
\end{figure}

An interesting conclusion is further given by comparing the performance of convex reference algorithms C-Ref.~1 and C-Ref.~2 with the proposed algorithms. As we observe, the performance improvement achieved by replacing the unit-modulus constraints with enhanced convex ones is negligible. This is very significant for the two-tiers algorithm in which the algorithm performs almost identical in both the enhanced and the original settings\footnote{Note that for the single-loop setting, there is no guarantee that the algorithm outperforms in the enhanced setting. In fact, by merging multiple loops the algorithm may stick to a local minimum at very first iterations. This is also observed in the figure for small values of $P_{\max}$ where the algorithm in the original setting outperforms slightly the enhanced setting.}. These tight tracks of performance in both settings confirm the efficiency of the proposed algorithms.

We now set the transmit power to $\log P_{\max} = -30$ dB, and let the number of transmit antennas at the \ac{bs} vary between $4$ and $12$. The results are plotted in Fig.~\ref{fig:WSRvsTxAntenna128} where the weighted secrecy sum-rate is sketched against the number of \ac{bs} antennas. As the figure demonstrates, the deployment of an \ac{irs} significantly boosts the secrecy performance of the system. For instance, the secrecy sum-rate achieved in the reference scenario without an \ac{irs}, i.e., Ref. 1, with $M=12$ transmit antennas is achieved in the \ac{irs}-aided scenario with only $M=7$ antennas. Comparing the performance of the two tuning algorithm, it is observed from Fig.~\ref{fig:WSRvsTxAntenna128}, that the single-loop algorithm performs slightly degraded compared to the two-tiers algorithm. This degradation is due to the inner loop merging, and is the cost we pay to reduce the computational complexity.

\begin{figure}
	\centering
%
%
\definecolor{mycolor1}{rgb}{0.00000,0.44700,0.74100}%
\definecolor{mycolor2}{rgb}{0.63529,0.07843,0.18431}%
\definecolor{mycolor3}{rgb}{0.49400,0.18400,0.55600}%
\definecolor{mycolor4}{rgb}{0.00000,0.49804,0.00000}%
\begin{tikzpicture}
	
	\begin{axis}[%
		width=5in,
		height=3.5in,
		at={(1.23in,0.788in)},
		scale only axis,
		xmin=3.7,
		xmax=12.3,
		xlabel style={font=\color{white!15!black}},
		xlabel={$M$},
		xtick={4,6,8,10,12},
		xticklabels={{$4$},{$6$},{$8$},{$10$},{$12$}},
		ymin=-1,
		ymax=23.8,
		ylabel style={font=\color{white!15!black}},
		ylabel={$\bar{\mar}^\ssr$},
		ytick={0,5,10,15,20},
		yticklabels={{$0$},{$5$},{$10$},{$15$},{$20$}},
		axis background/.style={fill=white},
		legend style={at={(axis cs: 3.8,23.6)}, anchor=north west, legend cell align=left, align=left, draw=white!15!black}
		]
\addplot [color=mycolor1, line width=1.0pt, mark size=6pt, mark=diamond, mark options={solid, mycolor1}]
table[row sep=crcr]{%
	4	0\\
	5	0\\
	6	0.293183121969683\\
	7	2.18720341427167\\
	8	3.40328953890991\\
	9	4.61727975109414\\
	10	5.77059124566955\\
	11	7.07495357721171\\
	12	8.31538448429015\\
};
\addlegendentry{Ref. 1}

\addplot [color=red, line width=1.0pt, mark size=3.5pt, mark=square, mark options={solid, red}]
table[row sep=crcr]{%
	4	0\\
	5	0.123439922276303\\
	6	0.265996899934642\\
	7	2.35314532246794\\
	8	4.4576897627951\\
	9	6.47050081808669\\
	10	7.15438793279009\\
	11	8.67213981710908\\
	12	11.0674177158959\\
};
\addlegendentry{Ref. 2}

\addplot [color=mycolor2, line width=1.0pt, mark size=5.0pt, mark=o, mark options={solid, mycolor2}]
table[row sep=crcr]{%
	4	2.41959002976293\\
	5	3.08064983070215\\
	6	6.43740199909503\\
	7	10.2236178267266\\
	8	14.1417495909804\\
	9	17.4809872215278\\
	10	20.6355250668015\\
	11	22.0098962448412\\
	12	22.7857340906216\\
};
\addlegendentry{Two-tiers}

\addplot [color=mycolor3, line width=1.0pt, mark size=5.0pt, mark=asterisk, mark options={solid, mycolor3}]
table[row sep=crcr]{%
	4	2.11049766676484\\
	5	2.81783873213062\\
	6	5.72843598125373\\
	7	7.84087550412345\\
	8	12.3369970079572\\
	9	16.5234003552268\\
	10	19.9692048435472\\
	11	21.2304277383639\\
	12	21.9149418017554\\
};
\addlegendentry{Single-loop}
		
	\end{axis}
\end{tikzpicture}%
	\caption{Weighted secrecy sum-rate against the number of transmit antennas $M$.}
	\label{fig:WSRvsTxAntenna128}
\end{figure}
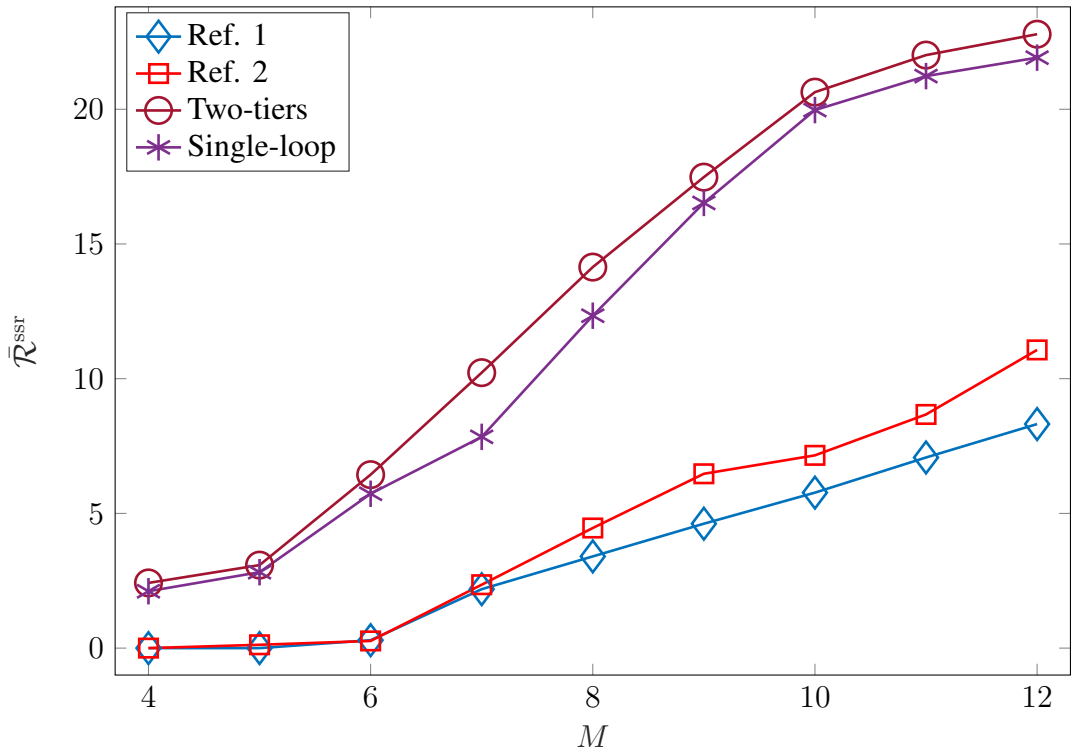

As the next experiment, we consider the setting in Fig.~\ref{fig:WSRvsTxAntenna128} and set the number of transmit antennas to $M=8$. We now vary the number of reflecting elements at the \ac{irs} between $N=16$ and $N=256$. Fig.~\ref{fig:WSRvsIRS} shows the secrecy sum-rate against the number of elements on the \ac{irs}. As the reference point, we have further plotted the result for Ref. 2 in which the \ac{irs} elements are tuned randomly. From the figure, one can observe that using either of the proposed algorithms, the achievable secrecy sum-rate increases \textit{linearly} in terms of the number of reflecting elements with the same slope of growth. For the case with random phase-shifts, the increase in the number of \ac{irs} elements does not lead to a perceptible growth in rate.

\begin{figure}
	\centering
%
%
\definecolor{mycolor1}{rgb}{0.00000,0.44700,0.74100}%
\definecolor{mycolor2}{rgb}{0.63529,0.07843,0.18431}%
\definecolor{mycolor3}{rgb}{0.49400,0.18400,0.55600}%
\definecolor{mycolor4}{rgb}{0.00000,0.49804,0.00000}%
\begin{tikzpicture}

\begin{axis}[%
	width=5in,
height=3.5in,
at={(1.23in,0.788in)},
scale only axis,
xmin=-3,
xmax=264,
xlabel style={font=\color{white!15!black}},
xlabel={$N$},
xtick={50,100,150,200,250},
xticklabels={{$50$},{$100$},{$150$},{$200$},{$250$}},
ymin=2.5,
ymax=21,
ylabel style={font=\color{white!15!black}},
ylabel={$\bar{\mar}^\ssr$},
ytick={5,10,15,20,25},
yticklabels={{$5$},{$10$},{$15$},{$20$},{$25$}},
axis background/.style={fill=white},
legend style={at={(axis cs: 8.5,20.5)}, anchor=north west, legend cell align=left, align=left, draw=white!15!black}
]

\addplot [color=red, line width=1.0pt, mark size=3.5pt, mark=o, mark options={solid, red}]
  table[row sep=crcr]{%
4	3.53319831362751\\
8	3.29011221902965\\
16	3.58960402950956\\
32	3.71346997931474\\
64	3.77431093139302\\
128	4.09386741944088\\
256	5.22189639271632\\
};
\addlegendentry{Ref. 2}

\addplot [ color=mycolor2, line width=1.0pt, mark=diamond,mark size=3.5pt, mark options={solid, mycolor2}]
table[row sep=crcr]{%
	4	4.89171411640431\\
	8	5.31240669582561\\
	16	6.46726043144662\\
	32	8.5626624220462\\
	64	10.8050140309566\\
	128	13.9208247469215\\
	256	16.9382582684522\\
};
\addlegendentry{Two-tiers}

\addplot [color=mycolor1, line width=1.0pt, mark=square,mark size=3.5pt, mark options={solid, mycolor1}]
table[row sep=crcr]{%
	4	4.70016104762992\\
	8	5.28834340610488\\
	16	6.34535553773002\\
	32	7.92130113208545\\
	64	8.5153956017558\\
	128	9.3144667612696\\
	256	9.96559609876578\\
};
\addlegendentry{Single-loop}

\end{axis}
\end{tikzpicture}%
	\caption{Weighted secrecy sum-rate against the number of IRS elements $N$.}
	\label{fig:WSRvsIRS}
\end{figure}
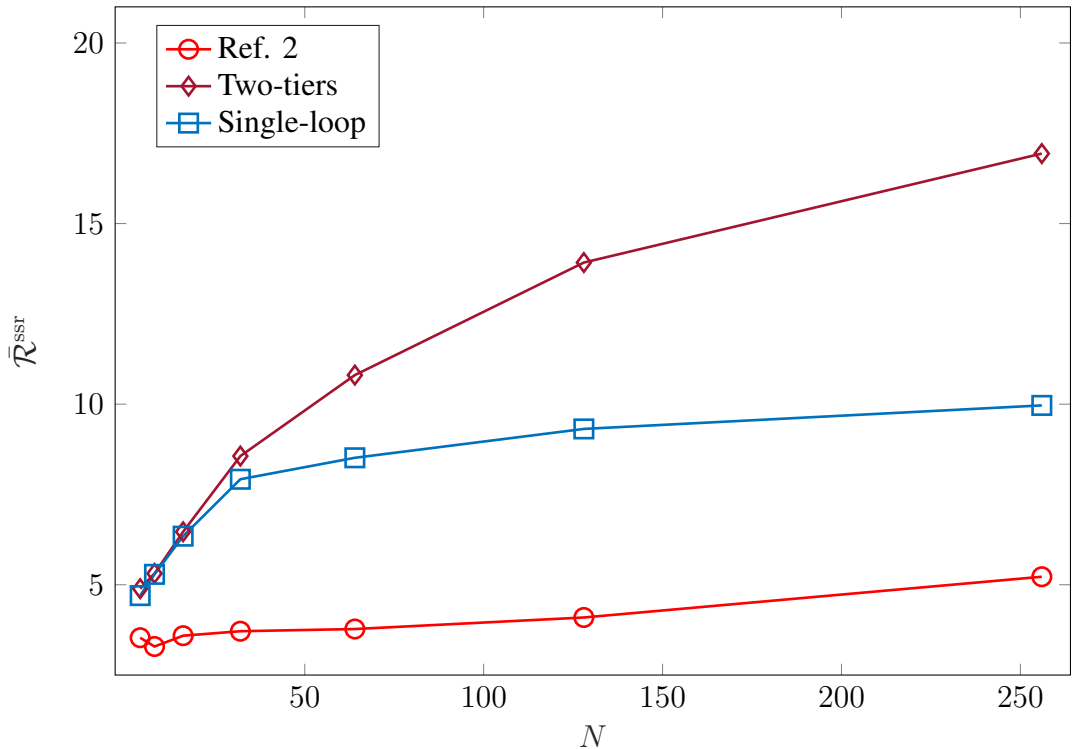

We now investigate the resistance of the proposed algorithms against the number of eavesdroppers in the system. To this end, we keep the settings as in Fig.~\ref{fig:WSRvsIRS} and set the number of elements on the \ac{irs} to $N=128$. The number of eavesdroppers is then varied from $J=1$ to $J=8$, and the secrecy sum-rate is plotted against $J$ in Fig.~\ref{fig:WSRvsEve}. For sake of comparison, we further plot the results for \ac{srzf} precoding, i.e., the reference scenarios. It is worth mentioning that from the literature, \ac{srzf} precoding is known to be highly robust against passive eavesdropping, due to the fact that it imposes explicitly the leakage suppression as a constraint in the precoder design \cite{asaad2019secure}. From Fig.~\ref{fig:WSRvsEve}, it is observed that the secrecy sum-rate drops for the both proposed algorithms in terms of $J$ similar to the \ac{srzf} precoding scheme. This identical behavior follows the fact that the proposed algorithms consider the secrecy rate as the objective function. This objective function implicitly constrains the information leakage in the design and leads to a behavior identical to \ac{srzf} precoding. The performance gain compared to the reference scenarios, which is observed in Fig.~\ref{fig:WSRvsEve}, is due to the efficient phase-shift tuning. 

\begin{figure}
	\centering
%
%
\definecolor{mycolor1}{rgb}{0.00000,0.44700,0.74100}%
\definecolor{mycolor2}{rgb}{0.85000,0.32500,0.09800}%
\definecolor{mycolor3}{rgb}{0.30100,0.74500,0.93300}%
\definecolor{mycolor4}{rgb}{0.92900,0.69400,0.12500}%
\pgfplotsset{custom/.style={thick}}
\begin{tikzpicture}

\begin{axis}[%
	width=5in,
height=3.5in,
at={(1.23in,0.788in)},
scale only axis,
xmin=-3,
xmax=264,
xlabel style={font=\color{white!15!black}},
xlabel={$N$},
xtick={50,100,150,200,250},
xticklabels={{$50$},{$100$},{$150$},{$200$},{$250$}},
ymin=2.5,
ymax=21,
ylabel style={font=\color{white!15!black}},
ylabel={$\bar{\mar}^\ssr$},
ytick={5,10,15,20,25},
yticklabels={{$5$},{$10$},{$15$},{$20$},{$25$}},
axis background/.style={fill=white},
legend style={at={(axis cs: 8.5,24.5)}, anchor=north west, legend cell align=left, align=left, draw=white!15!black}
]

\addplot [custom, color=mycolor2, line width=1.0pt, mark=diamond,mark size=3.5pt, mark options={solid, mycolor2}]
  table[row sep=crcr]{%
4	4.89171411640431\\
8	5.31240669582561\\
16	6.46726043144662\\
32	8.5626624220462\\
64	10.8050140309566\\
128	13.9208247469215\\
256	16.9382582684522\\
};
\label{t0}

\addplot [custom, color=mycolor3, dotted, line width=1.0pt, mark=diamond,mark size=3.5pt, mark options={solid, mycolor3}]
  table[row sep=crcr]{%
4	4.80101183375362\\
8	5.08534389390284\\
16	6.05316041723363\\
32	7.86661038461484\\
64	9.84585900136279\\
128	12.0994466219592\\
256	14.1625179123103\\
};
\label{t2}

\addplot [custom, color=mycolor1, dashdotted, line width=1.0pt, mark=diamond,mark size=3.5pt, mark options={solid, mycolor1}]
  table[row sep=crcr]{%
4	4.86551689075764\\
8	5.20828466682845\\
16	6.37617956393411\\
32	8.37105568926779\\
64	10.4458061966761\\
128	13.3809320650253\\
256	15.9000543384852\\
};
\label{t3}

\addplot [custom, color=mycolor4, dashed, line width=1.0pt, mark=diamond,mark size=3.5pt, mark options={solid, mycolor4}]
table[row sep=crcr]{%
	4	4.88187414014296\\
	8	5.2467034410378\\
	16	6.44168199410387\\
	32	8.5199461907813\\
	64	10.7172672349266\\
	128	13.8097268293104\\
	256	16.7089616619422\\
};
\label{t4}

\addplot [custom, color=mycolor1, line width=1.0pt, mark=square,mark size=3.5pt, mark options={solid, mycolor1}]
table[row sep=crcr]{%
	4	4.70016104762992\\
	8	5.28834340610488\\
	16	6.34535553773002\\
	32	7.92130113208545\\
	64	8.5153956017558\\
	128	9.3144667612696\\
	256	9.96559609876578\\
};
\label{single}

\addplot [custom, color=mycolor2, dashed, line width=1.0pt, mark=square,mark size=3.5pt, mark options={solid, mycolor2}]
  table[row sep=crcr]{%
4	4.68916285860309\\
8	5.22297308056513\\
16	6.32183995806275\\
32	7.74514012937501\\
64	8.51526873079045\\
128	9.10666173888474\\
256	9.40672638004299\\
};
\label{single4}


\end{axis}

\node [draw,fill=none] at ( 4.8,9.7) {\shortstack[l]{
		\ref{t0} Two-tiers \\
		\ref{t2} $B=2$ \\
		\ref{t3} $B=3$\\
	\ref{t4} $B=4$} };

\node [draw,fill=none] at (14,2.75) {\shortstack[l]{
		\ref{single} Single-loop \\
		\ref{single4} $B=4$}};

\end{tikzpicture}%
	\caption{Weighted secrecy sum-rate against $N$ for the ideal case without phase-shift quantization at the IRS, as well as cases with quantized phase-shifts.}
	\label{fig:Quantization}
\end{figure}
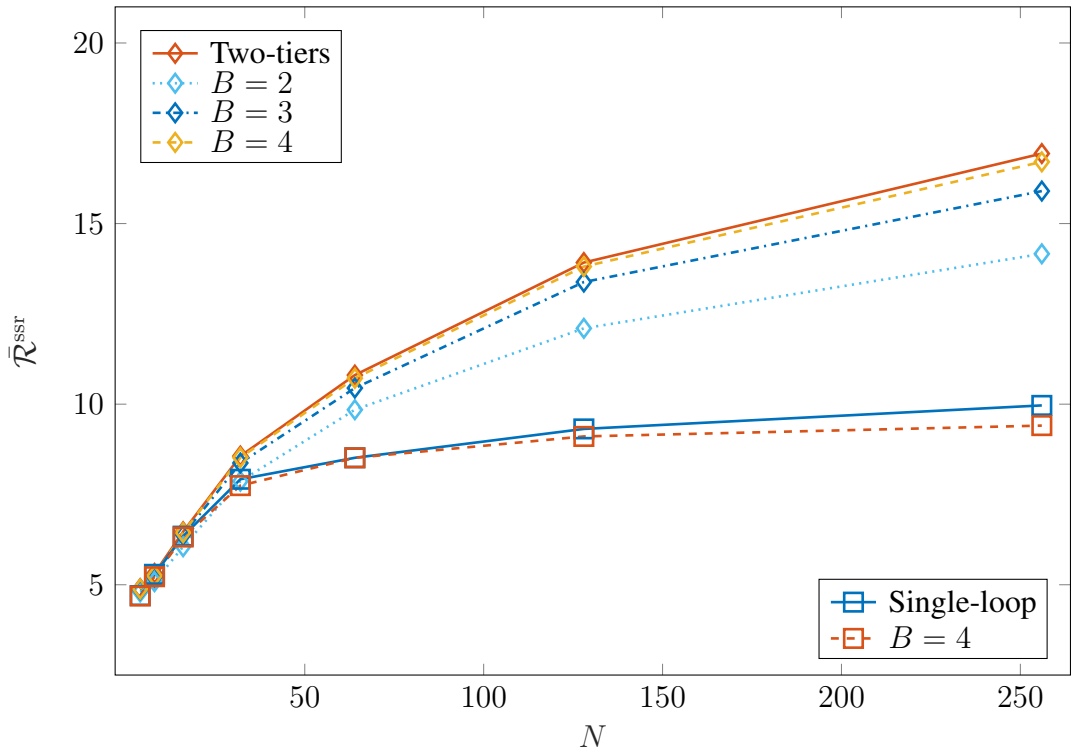

The proposed implementations for \acp{irs} suggest that the phase-shifts are realized discretely by a finite resolution. This means that the phase-shifts determined by the proposed algorithms are in practice quantized with a certain number of bits. Doing so, the proposed algorithms perform degraded due to the distortion introduced after phase-shift quantization. To investigate this performance degradation, we further plot the secrecy sum-rate against the number of \ac{irs} elements for the scenario of Fig.~\ref{fig:WSRvsIRS} in Fig.~\ref{fig:Quantization} considering phase-shift quantization with $B$ bits. As the figure shows, with a resolution of $B=4$, both algorithms perform close to the ideal case without quantization and the performance degradation is negligible.

\begin{figure}
	\centering
%
%
\definecolor{mycolor1}{rgb}{0.00000,0.44700,0.74100}%
\definecolor{mycolor2}{rgb}{0.63529,0.07843,0.18431}%
\definecolor{mycolor3}{rgb}{0.49400,0.18400,0.55600}%
\definecolor{mycolor4}{rgb}{0.00000,0.49804,0.00000}%
\begin{tikzpicture}

\begin{axis}[%
width=5in,
height=3.5in,
at={(1.23in,0.788in)},
scale only axis,
xmin=.7,
xmax=8.4,
xlabel style={font=\color{white!15!black}},
xlabel={$J$},
xtick={2,4,6,8},
xticklabels={{$2$},{$4$},{$6$},{$8$}},
ymin=-.5,
ymax=24,
ylabel style={font=\color{white!15!black}},
ylabel={$\bar{\mar}^\ssr$},
ytick={0,5,10,15,20},
yticklabels={{$0$},{$5$},{$10$},{$15$},{$20$}},
axis background/.style={fill=white},
legend style={legend cell align=left, align=left, draw=white!15!black}
]
\addplot [color=mycolor1, line width=1.0pt, mark size=6pt, mark=diamond, mark options={solid, mycolor1}]
  table[row sep=crcr]{%
1	8.84210765980666\\
2	7.74568978766127\\
3	6.60692399892306\\
4	5.27314031342619\\
5	4.67803152099868\\
6	3.31501912294984\\
7	1.86966904660648\\
8	0\\
};
\addlegendentry{Ref. 1}

\addplot [color=red, line width=1.0pt, mark size=3.5pt, mark=square, mark options={solid, red}]
  table[row sep=crcr]{%
1	11.8681762371373\\
2	11.147565261871\\
3	8.87017274897225\\
4	6.81661029508596\\
5	6.05136781894519\\
6	4.09386741944088\\
7	2.30921242952604\\
8	0\\
};
\addlegendentry{Ref. 2}

\addplot [color=mycolor2, line width=1.0pt, mark size=5.0pt, mark=o, mark options={solid, mycolor2}]
  table[row sep=crcr]{%
1	23.427031023959\\
2	22.6645706335714\\
3	21.830352418971\\
4	20.8293170748025\\
5	17.5977544142937\\
6	14.0321933801936\\
7	10.6882769329692\\
8	3.89766961261527\\
};
\addlegendentry{Two-tiers}

\addplot [color=mycolor3, line width=1.0pt, mark size=5.0pt, mark=asterisk, mark options={solid, mycolor3}]
  table[row sep=crcr]{%
1	18.2666353276817\\
2	17.3226214189362\\
3	16.6656897685536\\
4	15.4416122890159\\
5	12.6599744871086\\
6	9.19526456581823\\
7	5.55856183528224\\
8	1.31451067773366\\
};
\addlegendentry{Single-loop}

\end{axis}
\end{tikzpicture}%
	\caption{Weighted secrecy sum-rate against the number of eavesdroppers $J$.}
	\label{fig:WSRvsEve}
\end{figure}
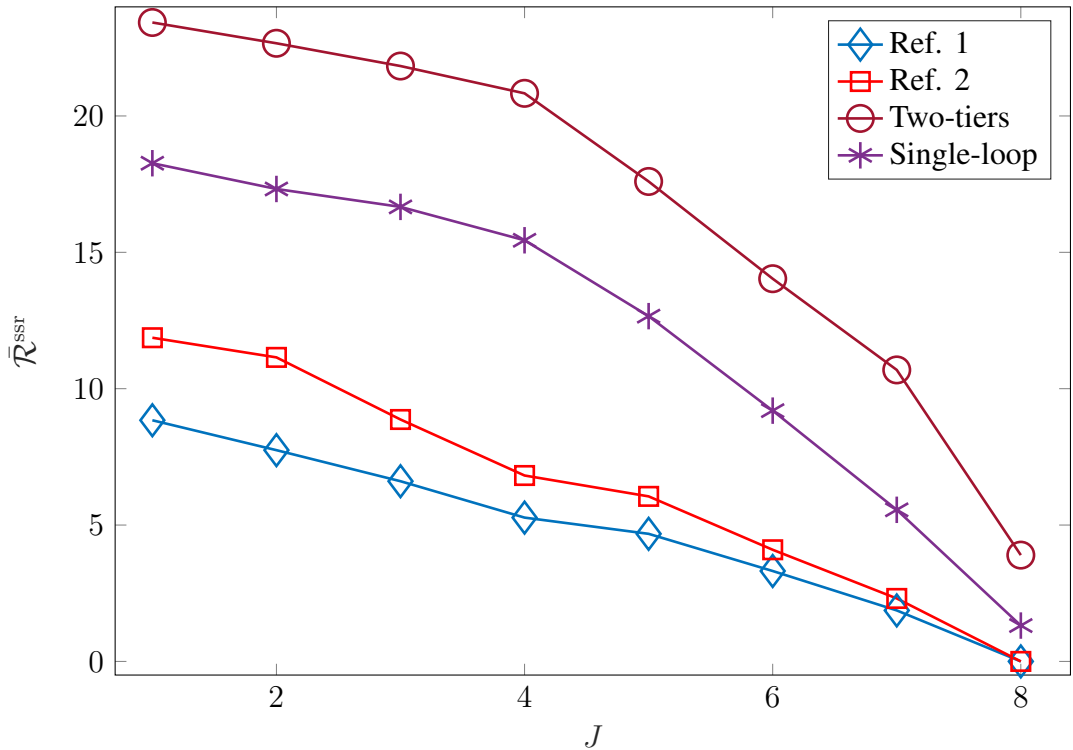

For the last experiment, we consider the setting in Fig.~\ref{fig:WSRvsEve}, and set $J=6$. The number of legitimate terminals is then increased from $K=1$ to $K=8$ and the secrecy sum-rate is plotted against $K$ in Fig.~\ref{fig:WSRvsUser}. As the figure shows, the secrecy sum-rate grows in terms of $K$. The order of growth however decreases at large choices of $K$ which is due to the higher multiuser interference in the system. Similar behavior is observed for the reference scenarios with a certain gap in the achievable rate which is due to the efficiency of phase-shift tuning in the proposed algorithms.

\begin{figure}
	\centering
%
%
\definecolor{mycolor1}{rgb}{0.00000,0.44700,0.74100}%
\definecolor{mycolor2}{rgb}{0.63529,0.07843,0.18431}%
\definecolor{mycolor3}{rgb}{0.49400,0.18400,0.55600}%
\definecolor{mycolor4}{rgb}{0.00000,0.49804,0.00000}%
\begin{tikzpicture}
	
	\begin{axis}[%
		width=5in,
		height=3.5in,
		at={(1.23in,0.788in)},
		scale only axis,
		xmin=.7,
		xmax=8.4,
		xlabel style={font=\color{white!15!black}},
		xlabel={$K$},
		xtick={2,4,6,8},
		xticklabels={{$2$},{$4$},{$6$},{$8$}},
		ymin=2.7,
		ymax=17,
		ylabel style={font=\color{white!15!black}},
		ylabel={$\bar{\mar}^\ssr$},
		ytick={0,4,8,12,16},
		yticklabels={{$0$},{$4$},{$8$},{$12$},{$16$}},
		axis background/.style={fill=white},
		legend style={at={(axis cs:6.5 ,13)}, anchor=north west,legend cell align=left, align=left, draw=white!15!black}
		]
		\addplot [color=mycolor1, line width=1.0pt, mark size=6pt, mark=diamond, mark options={solid, mycolor1}]
		table[row sep=crcr]{%
1	2.89035040144733\\
2	3.20456833368361\\
3	3.43667569499122\\
4	3.31501912294984\\
5	3.2458649722088\\
6	3.10829042185904\\
7	3.04620911199471\\
8	2.91323470402914\\
		};
		\addlegendentry{Ref. 1}
		
		\addplot [color=red, line width=1.0pt, mark size=3.5pt, mark=square, mark options={solid, red}]
		table[row sep=crcr]{%
1	3.60209604180118\\
2	4.00941442848999\\
3	4.19874627492817\\
4	4.09386741944088\\
5	4.07067238558674\\
6	3.76656907540057\\
7	3.63118370369795\\
8	3.54020670883184\\
		};
		\addlegendentry{Ref. 2}
		
		\addplot [color=mycolor2, line width=1.0pt, mark size=5.0pt, mark=o, mark options={solid, mycolor2}]
		table[row sep=crcr]{%
1	8.82092927884862\\
2	13.7689913613391\\
3	14.0059336947249\\
4	14.0321933801936\\
5	14.134105102991\\
6	14.0934356565752\\
7	14.2161523774518\\
8	14.2394858048155\\
		};
		\addlegendentry{Two-tiers}
		
		\addplot [color=mycolor3, line width=1.0pt, mark size=5.0pt, mark=asterisk, mark options={solid, mycolor3}]
		table[row sep=crcr]{%
1	5.47142215997541\\
2	9.03248975951454\\
3	9.17941060057993\\
4	9.19526456581823\\
5	8.61109492925788\\
6	8.87010191374552\\
7	8.4018949638208\\
8	8.24553970137482\\
		};
		\addlegendentry{Single-loop}
		
	\end{axis}
\end{tikzpicture}%
	\caption{Weighted secrecy sum-rate against the number of legitimate UTs $K$.}
	\label{fig:WSRvsUser}
\end{figure}
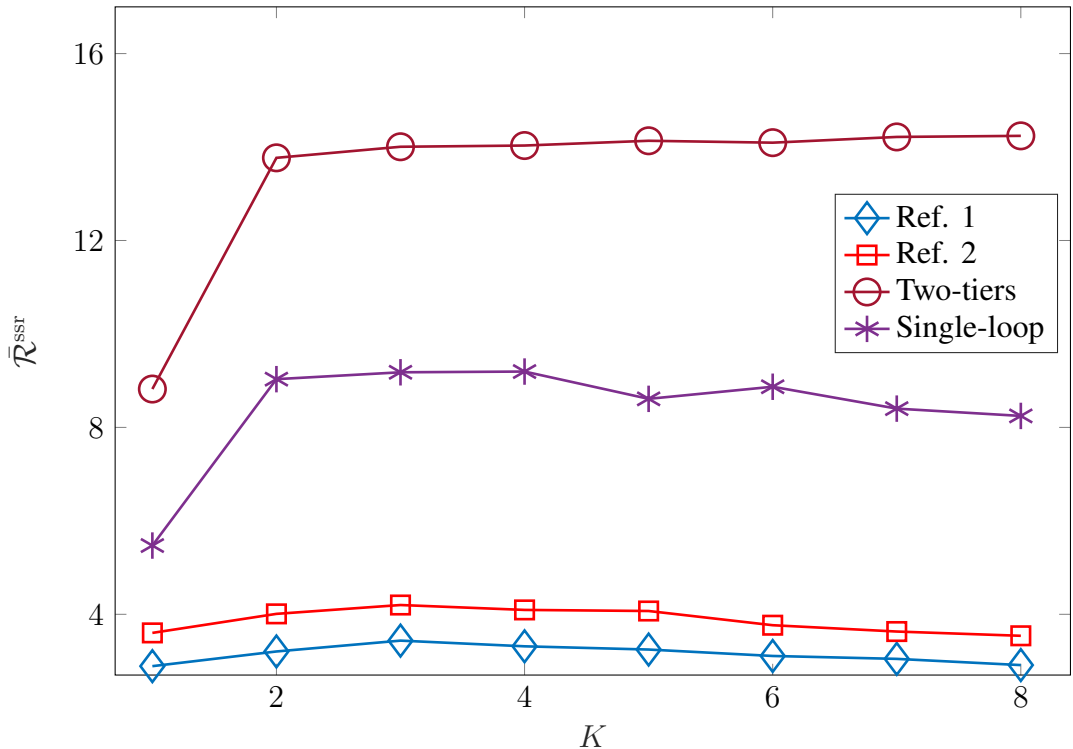

\subsection{Computational Complexity}
The computational complexity of both proposed algorithms shows quadratic growth in terms of the \ac{irs} dimension, i.e., $N$: In both algorithms, the dominant computational task is the update of \ac{irs} phase-shifts, i.e., $\bar{\bphi} = Q_{\rm MM} \brc{\bar{\bff},\bar{\bq},\bar{\mpsi},\mW_0} $. The computational complexity of this task for each $\phi_n$ is $\mathcal{O}\brc{N}$. As the update is performed entry-wise, we need $N$ iterations to determine the updated $\bar{\bphi}$. This means that the computational complexity of both algorithms grows $\mathcal{O}\brc{N^2}$ in $N$.

Fig.~\ref{fig:Complexity} plots the average runtime against the number of~\ac{irs} elements $N$ for the proposed algorithms considering~the~same scenario as the one considered in Figs.~\ref{fig:WSRvsIRS} and \ref{fig:Quantization}. As it shows, despite having the same order of complexity, the single-loop algorithm runs significantly faster than the~two-tiers~algorithm. This follows from the fact that both inner and outer loops of the two-tiers algorithm are merged in the~single-loop~algorithm. To have a quantitative comparison, we further use curve-fitting to model the runtime by the quadratic polynomial function 
\begin{align}
	Q_2\brc{N\vert a,b} = aN^2 + b.
\end{align}
 The fitted curves are further plotted in the figure. From the figure, it is observed that the single-loop approach approximately reduces the computational complexity by a factor of $100$. This factor is however fixed and does not scale with the system dimensions. Comparing Figs.~\ref{fig:Complexity} and \ref{fig:WSRvsIRS}, one can observe that this complexity reduction comes at the cost of performance degradation.

\begin{figure}
	\centering
	\input{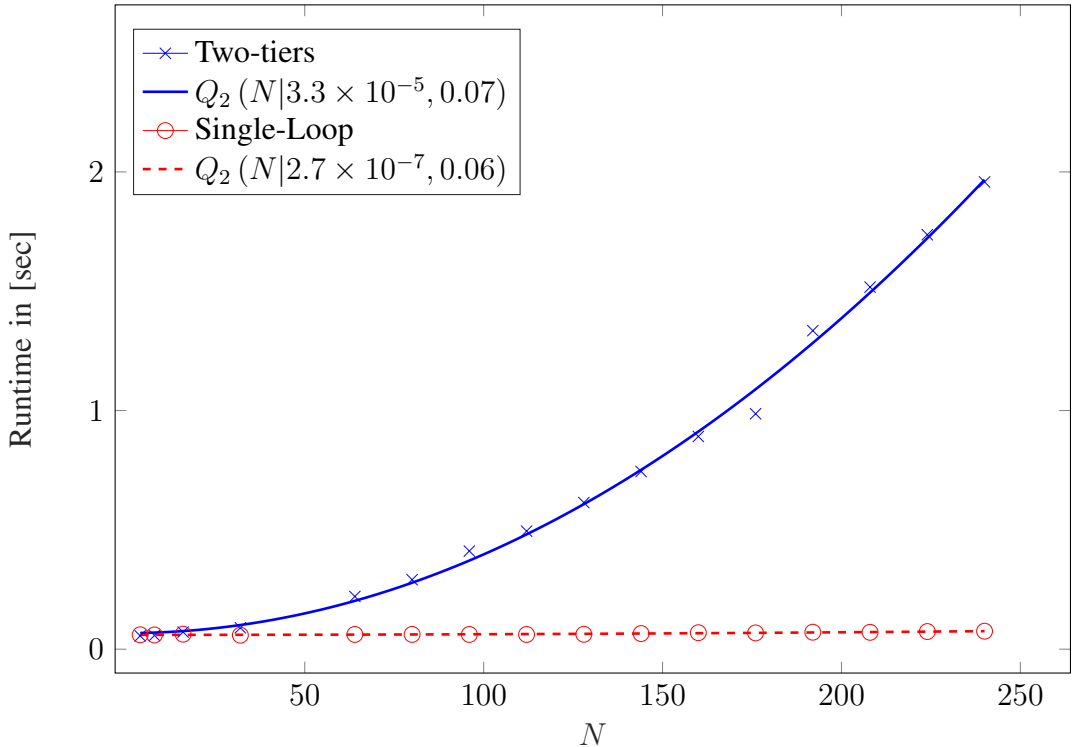}
	\caption{Runtime against the number of IRS elements $N$.}
	\label{fig:Complexity}
\end{figure}

\section{Conclusion}
\label{Sec:Conc}
In this work, two major iterative algorithms for secure precoding and phase-shift tuning in \ac{irs}-aided \ac{mimo} systems were proposed, and their applications were investigated through multiple numerical experiments. The proposed algorithms determine the system parameters in terms of the \ac{csi}. This means that the proposed algorithms are required to be run, once per a coherence time interval. From the computational viewpoint, this is a tractable task in many practical scenarios.

Since the proposed algorithms consider the weighted secrecy sum-rate as the objective, the information leakage suppression is implicitly applied. This is confirmed by the numerical investigations which show similar behavior as the one observed in \ac{srzf} precoding \cite{asaad2019secure}. The results further indicate that \acp{irs} can boost the secrecy performance of the system significantly. These two findings, along with the cost-efficiency of integrating \acp{irs}, suggest that \ac{irs} ia a good candidate for establishing secure communications in \ac{mimo} systems.

The results of this study are given under several idealistic assumptions and can be extended in various aspects. A natural direction is to extend the derivations to more realistic scenarios with imperfect \ac{csi} acquisition. In this respect, one can adopt the proposed scheme to more realistic settings, e.g., by considering the study in \cite{you2021reconfigurable}. Robustness of the proposed scheme against the unavailability of eavesdroppers' \ac{csi} is also an interesting direction for future work. Another line of work is to develop the proposed approach for settings in which the eavesdroppers increase their obtained leakage rate by actively contaminating the uplink pilots.

\appendices
\section{Proof of Theorem~\ref{theorem:1}}
\label{app:The1}
The proof follows two steps: In the first step, we show that for $\mb\in\dbc{0,1}^K$
\begin{align}
\max _{\mW, \bphi}  \mar^\ssr \brc{\mW, \bphi } \geq \max _{\mW, \bphi, \mb}\mar^\ssr_\rmq \brc{ \mW, \bphi, \mb }. \label{ineq1}
\end{align}
To show the validity of this inequality, we define the function 
\begin{align}
\hspace*{-1mm}	\mar^\ssr_\mt \brc{\mW, \bphi, \mb} \hspace*{-1mm}= \hspace*{-1.5mm} \sum_{k=1}^K \omega_k b_k \textbf{} \left[ \log\hspace*{-.5mm}\left( \dfrac{1\hspace*{-1mm}+\hspace*{-1mm}\sinr_k\brc{\mW, \bphi}}{1\hspace*{-1mm}+\hspace*{-1mm}\esnr_k\brc{\mW, \bphi}}\right)\right] ^+ \hspace*{-2mm}. \hspace*{-1mm} 
\end{align}
Since for $x\in \setR$ $\log x\leq \left[ \log x\right] ^+$, it is straightforward to write
\begin{align}
	\mar^\ssr_\rmq\brc{\mW, \bphi, \mb} \leq \mar^\ssr_\mt \brc{\mW, \bphi, \mb}. \label{eq:ineq1}
\end{align}

Now, we define the difference term 
\begin{subequations}
\begin{align}
	\Delta &= \mar^\ssr\brc{\mW, \bphi} - \mar^\ssr_\mt\brc{ \mW, \bphi, \mb}\\
	&= \sum_{k=1}^K \omega_k \brc{1-b_k} \mar_k^\rms \brc{\mW, \bphi}.
\end{align}
\end{subequations}
Noting that $\omega_k\geq 0$ and $b_k \in \dbc{0, 1} $, we further can write
\begin{align}
	\label{eq:ineq2}
	\Delta  \geq 0.
\end{align}

From \eqref{eq:ineq1} and \eqref{eq:ineq2}, we can write
\begin{align}
	\mar^\ssr_\rmq\brc{ \mW, \bphi, \mb} \leq \mar^\ssr_\mt \brc{ \mW, \bphi, \mb} \leq\mar^\ssr\brc{ \mW, \bphi},
\end{align}
which concludes that
\begin{align}
	\max _{\mW, \bphi, \mb} \mar^\ssr_\rmq\brc{ \mW, \bphi, \mb} \leq \max _{\mW, \bphi} \mar^\ssr \brc{\mW, \bphi}.
\end{align}

In the second step, we show that for $\mb\in\dbc{0,1}^K$
\begin{align}
	\max _{\mW, \bphi}  \mar^\ssr\brc{\mW, \bphi}\leq \max _{\mW, \bphi, \mb}  \mar^\ssr_\rmq\brc{ \mW, \bphi, \mb}. \label{ineq2}
\end{align}
To this end, let us define the $\setK \subseteq \dbc{K}$ as the subset of \acp{ut} for which we have
\begin{align}
	\sinr_k\brc{\mW, \bphi} \geq \esnr_k \brc{ \mW, \bphi}
\end{align}
when $\mW$ and $\bphi$ are set to the optimal values. We hence have
\begin{align}
	\max _{\mW, \bphi}  \mar^\ssr\brc{\mW, \bphi} 
	= \max_{\mW, \bphi} \sum_{k \in \setK} w_k \mar_k^\rms(\mW, \bphi).
\end{align}
Noting that for $ k \in \setK$, 
\begin{align}
\log\left( \dfrac{1+\sinr_k\brc{\mW, \bphi} }{ 1+\esnr_k \brc{ \mW, \bphi }}\right) \geq  0, 
\end{align}
we can write 
\begin{align}
\mar_k^\rms(\mW, \bphi) = \log\left( \dfrac{1+\sinr_k\brc{\mW, \bphi} }{ 1+\esnr_k \brc{ \mW, \bphi }}\right) 
\end{align}
and hence conclude that
\begin{align}
		\max _{\mW, \bphi}  \mar^\ssr\brc{\mW, \bphi} = 
	\max _{\mW, \bphi}  \mar^\ssr_\rmq\brc{ \mW, \bphi, \mb_\setK} \label{eq:eqrssm}
\end{align}
where $\mb_\setK = \dbc{b_{\setK 1}, \ldots, b_{\setK K} }^\trp$ with
\begin{align}
b_{\setK k} = \begin{cases}
	1 & k\in\setK\\
	0 & k\notin\setK
	\end{cases}.
\end{align}
Noting that  $\mb_{\setK k}\in \dbc{0,1}^K$, we have
\begin{align}
	\max _{\mW, \bphi, \mb}  \mar^\ssr_\rmq\brc{ \mW, \bphi, \mb}  \geq 
	\max _{\mW, \bphi}  \mar^\ssr_\rmq\brc{ \mW, \bphi, \mb_\setK}.
\end{align}
This concludes \eqref{ineq2}. Considering the inequalities in \eqref{ineq1} and \eqref{ineq2}, Theorem~\ref{theorem:1} is proved.

\section{Derivation of the MM-Based Algorithm}
\label{app:MM}
The \ac{mm} method approximates the solution with the limit of a sequence of feasible points. This sequence is derived by optimizing sequentially an objective function which \textit{majorizes}\footnote{The concept of majorization is defined in the sequel.} the original objective at the feasible point derived in the previous optimization. We illustrate the method through the derivations. More details on the \ac{mm} method can be followed in \cite{hunter2004tutorial,stoica2004cyclic,razaviyayn2013unified,song2015optimization,sun2017majorization}.

We start the derivations by rewriting the optimization problem $\hat{\maq}_2^{\rm B}$. Let $Q_2\brc{\bphi}$ denote the negative objective of $\hat{\maq}_2^{\rm B}$, i.e.,
\begin{align}
	Q_2\brc{\bphi} =  - Q_2^{\rm m} \brc{\bphi,\bar{\bff} } -  { \frac{\omega_k b_{0k} \brc{1  + \bar{\psi}_k} }{ B_{2Uk}^{\rm e} } } {A_{2k}^{ \rm e } \brc{\bphi}}.
\end{align}
By standard derivations, one can rewrite $Q_2\brc{\bphi}$ as 
\begin{align}
	Q_2\brc{\bphi} =  \bphi^\her \mQ \bphi + 2 \Re\set{ \bphi^\her \bvv } + C \label{eq:Q_2}
\end{align}
for some $C$ which is fixed in $\bphi$ and $\mQ$ and $\bvv$ as defined in Algorithm~\ref{alg:MM}. As the result, $\hat{\maq}_2^{\rm B}$ is rewritten as
\begin{subequations}
	\label{eq:Q222}
	\begin{align}
		&\min _{\bphi}  \bphi^\her \mQ \bphi + 2 \Re\set{ \bphi^\her \bvv } \\
		&\text{subject to } \abs{\phi_n} = 1, \forall n \in \dbc{ N }.
	\end{align}
\end{subequations}
In order to address this problem via the \ac{mm} method, we need to follow two steps:
\begin{enumerate}
\item Define a sequence of \textit{majorizations} for the objective in \eqref{eq:Q222}.
\item Determine the sequence of the solutions to the majorizations series.
\end{enumerate}
Before we start with the first step, let us define the concept of \textit{majorization}.
\begin{definition}[Majorization]
	Consider $f\brc{\cdot}: \setX \to \setR$. The function $m\brc{\cdot \vert \bxx_0}: \hat{\setX} \to \setR$ with $\bxx_0\in \setX$ and $\setX\subseteq \hat{\setX}$ majorizes $f\brc{\cdot}$ at $\bxx_0$, if the following constraints are satisfied:
	\begin{subequations}
		\begin{align}
			f\brc{\bxx} &\leq m\brc{\bxx \vert \bxx_0}, \forall \bxx \in \setX,\\
			f\brc{\bxx_0} &= m\brc{\bxx_0 \vert \bxx_0}.
		\end{align}
	\end{subequations}
\end{definition}

To find majorization functions of the objective in \eqref{eq:Q222}, we invoke the following lemma whose proof can be followed in \cite[Lemma~1]{song2015optimization}.
\begin{lemma}[Lemma~1 of \cite{song2015optimization}]
	\label{Lemma-Major}
	Let $\mQ \in \setC^{N \times N}$ and $\mM\in \setC^{N \times N}$ be Hermitian matrices satisfying $\mM\succeq \mQ$.  At any point $\bxx_0 \in \setC^{N}$, the quadratic function $f\brc{\bxx} = \bxx^\her \mQ \bxx$ is majorized by
	\begin{align}
		\Omega\brc{\bxx\vert\bxx_0} = \bxx^\her \mM \bxx +  2\left. \Re\set{ \bxx^\her \brc{ \mQ-\mM } \bxx_0 } \right. + \bxx_0^\her \brc{ \mM-\mQ } \bxx_0
	\end{align}
\end{lemma}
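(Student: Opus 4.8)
The plan is to verify directly the two defining properties of a majorization, since both collapse to elementary identities once we introduce the positive semidefinite difference $\mA := \mM - \mQ$, which satisfies $\mA \succeq \mathbf{0}$ precisely because $\mM \succeq \mQ$.

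First I would check the \emph{tangency} condition $f(\bxx_0) = \Omega(\bxx_0 \vert \bxx_0)$. Substituting $\bxx = \bxx_0$ into $\Omega(\bxx \vert \bxx_0) = \bxx^\her \mM \bxx + 2\Re\set{ \bxx^\her (\mQ - \mM) \bxx_0 } + \bxx_0^\her (\mM - \mQ) \bxx_0$ and using that $\bxx_0^\her (\mQ - \mM) \bxx_0$ is real — because $\mQ$ and $\mM$ are Hermitian, hence so is $\mQ - \mM$ — the middle term equals $2\, \bxx_0^\her (\mQ - \mM) \bxx_0$. The three contributions then telescope: $\bxx_0^\her \mM \bxx_0 + 2\bxx_0^\her (\mQ - \mM)\bxx_0 + \bxx_0^\her (\mM - \mQ)\bxx_0 = \bxx_0^\her \mQ \bxx_0 = f(\bxx_0)$.

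Next I would establish the \emph{upper-bound} condition $f(\bxx) \le \Omega(\bxx \vert \bxx_0)$ for all $\bxx \in \setC^N$. Forming the difference $\Omega(\bxx \vert \bxx_0) - f(\bxx)$ and grouping the term quadratic in $\bxx$, the term linear in $\bxx$, and the constant term, one recognizes a perfect Hermitian square: $\Omega(\bxx \vert \bxx_0) - f(\bxx) = \bxx^\her (\mM - \mQ) \bxx - 2\Re\set{ \bxx^\her (\mM - \mQ) \bxx_0 } + \bxx_0^\her (\mM - \mQ) \bxx_0 = (\bxx - \bxx_0)^\her \mA (\bxx - \bxx_0)$. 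Since $\mA \succeq \mathbf{0}$, this quantity is nonnegative, which is exactly the claimed inequality; combined with the tangency identity, this shows $\Omega(\cdot \vert \bxx_0)$ majorizes $f$ at $\bxx_0$, completing the proof.

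There is no real obstacle here: the argument is a routine algebraic verification. The only points requiring a little care are tracking the complex conjugates correctly, namely expanding $\Re\set{ \bxx^\her (\mQ - \mM) \bxx_0 } = \frac{1}{2}\bigl( \bxx^\her (\mQ - \mM) \bxx_0 + \bxx_0^\her (\mQ - \mM) \bxx \bigr)$, and confirming that the cross terms reassemble into the stated quadratic form in $\bxx - \bxx_0$ rather than in $\bxx$ alone.
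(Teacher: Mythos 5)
Your proof is correct: the difference $\Omega\brc{\bxx\vert\bxx_0} - f\brc{\bxx}$ does collapse to $\brc{\bxx-\bxx_0}^\her\brc{\mM-\mQ}\brc{\bxx-\bxx_0}\geq 0$, and the tangency check at $\bxx_0$ is right. The paper does not prove this lemma itself but merely cites Lemma~1 of \cite{song2015optimization}, whose argument is exactly this complete-the-square verification, so your write-up supplies the standard proof the paper delegates to the reference.
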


Let $\mM=\lambda_{\max} \mI_N $ in Lemma \ref{Lemma-Major}, where $\lambda_{\max}$ denotes the maximum eigenvalue of $\mQ$. Hence, at a feasible point $\bphi_0$, we can write
\begin{align}
	\bphi^\her \mQ \bphi + 2 \Re\set{ \bphi^\her \bvv } \leq \hat\Omega\brc{\bphi\vert \bphi_0},
\end{align}
where $\hat\Omega\brc{\bphi\vert \bphi_0} $ is given by
\begin{align}
	\hat\Omega\brc{\bphi\vert \bphi_0}  = \lambda_{\max} \brc{\norm{\bphi}^2 + \norm{ \bphi_0 }^2} +  2\left. \Re\set{ \bphi^\her  \brc{\mQ \bphi_0+\bvv - \lambda_{\max} \bphi_0 } } \right.  - \bphi_0^\her \mQ \bphi_0.
\end{align}
Noting that for any feasible point $\bphi$, we have $\norm{\bphi}^2 = N$, we can further write
\begin{align}
	\hat\Omega\brc{\bphi\vert \bphi_0}  = 2N \lambda_{\max} +  2\left. \Re\set{ \bphi^\her \brc{\mQ \bphi_0 +\bvv - \lambda_{\max} \bphi_0 } } \right.  - \bphi_0^\her \mQ \bphi_0.
\end{align}

Given the majorization function $\hat\Omega\brc{\bphi\vert \bphi_0} $, the \ac{mm} method starts from a feasible point ${\bphi^{ \brc{0} }}$ and constructs the sequence $\{\bphi^{ \brc{t} }\}$ for $t\in\setZ^+$ iteratively as 
\begin{subequations}
	\label{eq:phi_t1}
	\begin{align}
		\bphi^{ \brc{t+1} }  = &\argmin _{\bphi} \left. \hat\Omega\brc{\bphi\vert \bphi^{ \brc{t} } }\right. \\
		&\text{subject to }  \abs{\phi_n } =1,~\forall n \in \dbc{N}.
	\end{align}
\end{subequations}
The limit of this sequence, as the algorithm converges, is considered to be the solution. Unlike the original problem, the solution to \eqref{eq:phi_t1} is easily given by
\begin{align}
	\phi_n^{ \brc{t+1} }  = - \frac{\buu_n^\her \bphi^{ \brc{t} }  +\vv_n - \lambda_{\max} \phi_n^{ \brc{t} } }{ \abs{ \buu_n^\her \bphi^{ \brc{t} }  +\vv_n - \lambda_{\max} \phi_n^{ \brc{t} } } }
\end{align}
with $\buu_n^\her$ denoting the $n$-th row of the matrix $\mQ$. This concludes the derivation of Algorithm~\ref{alg:MM}.

\section{A BCD-Type Phase-Shift Update Algorithm}
\label{app:BCDPhi}
In this appendix, we tackle the non-convex unit modulus optimization $\hat{\maq}_2^{\rm B}$ via the \ac{bcd} method. Starting from an initial vector of phase-shifts, i.e., $\bphi^{\brc{0}}$, we alternately update each phase-shift in iteration $t$, i.e., each entry of $\bphi^{\brc{t}}$, by marginally optimizing it while treating the other phase-shifts as fixed variables whose values are calculated in iteration $t-1$. This means that in the $t$-th iteration, $\phi_{n}^{\brc{t}}$ is found as
\begin{align}
	\phi_{n}^{\brc{t}} = 	\argmin_{\abs{\varphi} = 1 } Q_2\brc{ \bxx^{\brc{t}}_n \brc{\varphi} } \label{BCS_phi}
\end{align}
where $Q_2\brc{\cdot}$ is given in \eqref{eq:Q_2}, and $\bxx^{\brc{t}}_n \brc{\cdot} $ is defined as
\begin{align}
	\bxx^{\brc{t}}_n \brc{\varphi} = \dbc{ \phi_{1}^{\brc{t-1}}, \ldots, \phi_{n-1}^{\brc{t-1}} , \varphi, \phi_{n+1}^{\brc{t-1}}, \ldots , \phi_{N}^{\brc{t-1}} }^\trp 
\end{align}
with $\phi_{n}^{\brc{t-1}}$ denoting the $n$-th entry of $\bphi^{\brc{t-1}}$. 

Defining the function $f^{\brc{t}} \brc{\varphi}= Q_2\brc{ \bxx^{\brc{t}}_n \brc{\varphi} }$,  we can use the Hermitian symmetry of $\mQ$ to show that
	\begin{align}
	f^{\brc{t}} \brc{\varphi} = \dbc{\mQ}_{nn} \abs{\varphi}^2 + 2\left. \Re \set{ a_n^{\brc{t}} \varphi^* } \right. +c_n^{\brc{t}} 
	\end{align}
	where $a_1^{\brc{t}}$ and $a_2^{\brc{t}}$ are given by
		\begin{align}
			a_n^{\brc{t}} &= \vv_n + \tilde{\buu}_n^\her \tilde{\bphi}_n^{\brc{t-1}} \label{eq:a_2}
		\end{align}
	with $\tilde{\buu}_n , \tilde{\bphi}_n^{\brc{t-1}} \in \setC^{N-1}$ being constructed from ${\buu}_n$ and ${\bphi}^{\brc{t-1}}$ by excluding their $n$-th entries, respectively, and $a_n^{\brc{t}}$ is a constant in terms of $\varphi$. Noting that $\abs{\varphi}=1$, \eqref{BCS_phi} reduces to
\begin{align}
	\phi_{n}^{\brc{t}} = 	\argmin_{\abs{\varphi} = 1 } \left. \Re \set{ a_n^{\brc{t}} \varphi^* } \right.
\end{align}
whose solution is given in a closed form as 
\begin{align}
	\phi_{n}^{\brc{t}} = - \frac{a_n^{\brc{t}} }{ \abs{a_n^{\brc{t}}} }. \label{eq:phi_t}
\end{align}
After multiple iterations, the solution of $\hat{\maq}_2^{\rm B}$ is approximated with the converged phase-shifts. The final algorithm is summarized in Algorithm~\ref{Alg:BCD-Method}.
	\begin{algorithm} 
	\caption{Phase-Shift Update via the \ac{bcd} Method}
\label{Alg:BCD-Method}
	\begin{algorithmic}[1]
		\INPUT{$\bar{\bff}$, $\bar{\bq}$, $\bar{\mpsi}$ and $\mW_0$}
\REQUIRE Set a feasible initial point $\bphi = \bphi^{(0)}$, and define
\begin{align*}
	\mar_{\rm M}^\ssr \brc{\bphi} =  Q_2^{\rm m} \brc{\bphi,\bar{\bff} }  + { \frac{\omega_k b_{0k} \brc{1 +  \bar{\psi}_k} }{ B_{2Uk}^{\rm e} } } {A_{2k}^{ \rm e } \brc{\bphi}} 
\end{align*}
\IF{$\mar_{\rm M}^\ssr \brc{\bphi} $ has not converged}{
	\STATE Calculate $\mU$ and $\bvv$ from \eqref{eq:Uv}
			\STATE Update $a_n^{\brc{t}}$ via \eqref{eq:a_2} for $n\in \dbc{N}$
			\STATE Update ${\phi}_n^{\brc{t}}$ via \eqref{eq:phi_t} for $n\in \dbc{N}$
			\STATE Set $\bphi = \bphi^{(t)}$, $t \leftarrow t+1$ and go back to line 1
		}
		\ENDIF
	\end{algorithmic} 
\end{algorithm}

Algorithm~\ref{Alg:BCD-Method} is compared with Algorithm~\ref{alg:MM} by performing a simple numerical experiment. This experiment considers a setting consistent to the one presented in Fig.~\ref{fig:setting}. In this setting, we set the number of eavesdroppers to $J=6$, the number of legitimate \acp{ut} to $K=4$, the number of \ac{bs} antennas to $M=8$ and the number of \ac{irs} elements to $N=16$. The transmit power is further set to $\log P_{\max} = 0$ dB and the channel parameters are set as described in Section~\ref{sec:Numerical}. For this setting, we apply the two-tiers algorithm, i.e., Algorithm~\ref{alg:2Tier} two times: once the inner loop, i.e., Algorithm~\ref{Alg:2}, is performed via the \ac{mm} algorithm and once with Algorithm~\ref{Alg:BCD-Method}. Fig.~\ref{fig:MMvsBCD1} shows the achievable secrecy sum-rate against the number of iterations in the main loop of Algorithm~\ref{alg:2Tier}. As the figure shows, the algorithms converge to the target rate almost identical with respect to the number of iterations. To compare the computational complexity of these algorithms, we further plot the achievable secrecy sum-rate against the runtime for these two algorithms. The result is shown in Fig.~\ref{fig:MMvsBCD2}. As the figure shows, using Algorithm~\ref{Alg:BCD-Method} leads  to a slightly faster convergence compared to the \ac{mm} algorithm.

 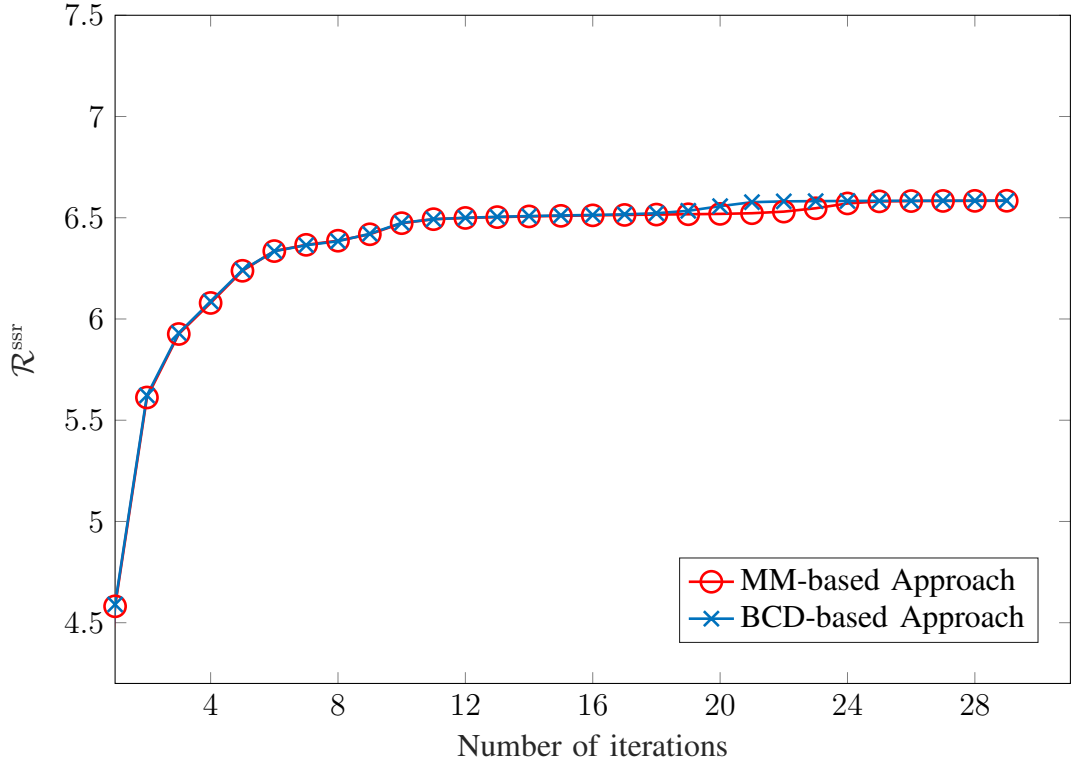
\begin{figure}
 	\centering
%
%
\definecolor{mycolor1}{rgb}{0.00000,0.44700,0.74100}%
\definecolor{mycolor2}{rgb}{0.85000,0.32500,0.09800}%
\definecolor{mycolor3}{rgb}{0.92900,0.69400,0.12500}%
\definecolor{mycolor4}{rgb}{0.49400,0.18400,0.55600}%
\begin{tikzpicture}
	
	\begin{axis}[%
		width=5in,
		height=3.5in,
		at={(1.23in,0.788in)},
		scale only axis,
xmin=1,
xmax=31,
xtick={{4},{8},{12},{16},{20},{24},{28}},
xticklabels={{$4$},{$8$},{$12$},{$16$},{$20$},{$24$},{$28$}},
xlabel style={font=\color{white!15!black}},
xlabel={{Number of iterations}},
ymin=4.2,
ymax=7.5,
ylabel style={font=\color{white!15!black}},
ylabel={$\mar^\ssr$},
axis background/.style={fill=white},
title style={font=\bfseries},
legend style={at={(axis cs:30,4.4)},anchor=south east,legend cell align=left, align=left, draw=white!15!black}
]
	\addplot [color=red, line width=1.0pt, mark size=4.0pt, mark=o, mark options={solid, red}]
	table[row sep=crcr]{%
		0	2.11331130475329\\
		1	4.58095335784947\\
		2	5.61231832863635\\
		3	5.92577059939003\\
		4	6.07914215638967\\
		5	6.23801324522637\\
		6	6.3353707629698\\
		7	6.36621517488945\\
		8	6.38653639741482\\
		9	6.41832095290117\\
		10	6.47315513725057\\
		11	6.49293724551668\\
		12	6.49900476619802\\
		13	6.50313203498932\\
		14	6.50650562537983\\
		15	6.50934668383735\\
		16	6.51177273394998\\
		17	6.51386076190504\\
		18	6.51566626850017\\
		19	6.51723163904855\\
		20	6.51901611283648\\
		21	6.5222998341499\\
		22	6.53021531226277\\
		23	6.54680802124206\\
		24	6.57064721217899\\
		25	6.5808360081241\\
		26	6.58233730325819\\
		27	6.58298666600155\\
		28	6.58344586770941\\
		29	6.58379931077801\\
	};
	\addlegendentry{MM-based Approach}
	
	\addplot [color=mycolor1, line width=1.0pt, mark size=4.0pt, mark=x, mark options={solid, mycolor1}]
	table[row sep=crcr]{%
		0	2.11331130475329\\
		1	4.59020075632412\\
		2	5.62255397101956\\
		3	5.93065399742702\\
		4	6.08672663995888\\
		5	6.24149196728257\\
		6	6.33479348171546\\
		7	6.36440741302403\\
		8	6.38410292145224\\
		9	6.42081651694658\\
		10	6.47491483709608\\
		11	6.49272629951664\\
		12	6.49971649364238\\
		13	6.50458220459505\\
		14	6.50828689865525\\
		15	6.51124537470364\\
		16	6.51370399425111\\
		17	6.51668016500105\\
		18	6.52225220941517\\
		19	6.53419706528051\\
		20	6.55712730941939\\
		21	6.57689087019316\\
		22	6.58085130929768\\
		23	6.58201490655863\\
		24	6.58279701488112\\
		25	6.58338843544051\\
		26	6.58385021384363\\
		27	6.58421691572448\\
		28	6.58451426107853\\
		29	6.58476255684789\\
	};
	\addlegendentry{BCD-based Approach}
	\end{axis}
\end{tikzpicture}%
 	\caption{Comparing the BCD-based phase tuning approach to the one which performs the \ac{mm} algorithm.}
 	\label{fig:MMvsBCD1}
 \end{figure}

 \begin{figure}
	\centering
%
%
\definecolor{mycolor1}{rgb}{0.00000,0.44700,0.74100}%
\definecolor{mycolor2}{rgb}{0.85000,0.32500,0.09800}%
\definecolor{mycolor3}{rgb}{0.92900,0.69400,0.12500}%
\definecolor{mycolor4}{rgb}{0.49400,0.18400,0.55600}%
\begin{tikzpicture}
	
	\begin{axis}[%
		width=5in,
		height=3.5in,
		at={(1.23in,0.788in)},
		scale only axis,
xmin=0.05,
xmax=4.45,
xlabel style={font=\color{white!15!black}},
xlabel={{Runtime} in [sec]},
ymin=1.5,
ymax=7.5,
ylabel style={font=\color{white!15!black}},
		ylabel={$\mar^\ssr$},
		axis background/.style={fill=white},
		title style={font=\bfseries},
		legend style={at={(axis cs:4.35,1.7)},anchor=south east,legend cell align=left, align=left, draw=white!15!black}
		]
		\addplot [color=red, line width=1.0pt, mark size=4.0pt, mark=o, mark options={solid, red}]
		table[row sep=crcr]{%
0	2.11331130475329\\
0.3542477536	4.58095335784947\\
0.59805547975	5.61231832863635\\
0.8094281414	5.92577059939003\\
1.01538856545	6.07914215638967\\
1.214494474	6.23801324522637\\
1.39503503195	6.3353707629698\\
1.55146685165	6.36621517488945\\
1.6950346805	6.38653639741482\\
1.8385540748	6.41832095290117\\
1.9808332225	6.47315513725057\\
2.1260642542	6.49293724551668\\
2.26240487855	6.49900476619802\\
2.41527203035	6.50313203498932\\
2.57710240835	6.50650562537983\\
2.73270627205	6.50934668383735\\
2.880245986	6.51177273394998\\
3.009911331	6.51386076190504\\
3.14210466245	6.51566626850017\\
3.2774896014	6.51723163904855\\
3.39827961025	6.51901611283648\\
3.5265785305	6.5222998341499\\
3.6580792923	6.53021531226277\\
3.7821603765	6.54680802124206\\
3.89382125315	6.57064721217899\\
4.0089571471	6.5808360081241\\
4.12091272645	6.58233730325819\\
4.2374459656	6.58298666600155\\
4.34966470985	6.58344586770941\\
4.45762585255	6.58379931077801\\
};
		\addlegendentry{MM-based Approach}
		
		\addplot [color=mycolor1, line width=1.0pt, mark size=4.0pt, mark=x, mark options={solid, mycolor1}]
		table[row sep=crcr]{%
0	2.11331130475329\\
0.28010623865	4.59020075632412\\
0.47359311345	5.62255397101956\\
0.66192386545	5.93065399742702\\
0.8650416809	6.08672663995888\\
1.0602307252	6.24149196728257\\
1.22580244045	6.33479348171546\\
1.40544512415	6.36440741302403\\
1.58107760405	6.38410292145224\\
1.73733115175	6.42081651694658\\
1.88736708485	6.47491483709608\\
2.0285839759	6.49272629951664\\
2.17246076085	6.49971649364238\\
2.3024051364	6.50458220459505\\
2.4287402881	6.50828689865525\\
2.5591480588	6.51124537470364\\
2.6867721715	6.51370399425111\\
2.8185811863	6.51668016500105\\
2.9414962476	6.52225220941517\\
3.0639656724	6.53419706528051\\
3.1933784338	6.55712730941939\\
3.3307594321	6.57689087019316\\
3.45747560665	6.58085130929768\\
3.5840415868	6.58201490655863\\
3.70973013165	6.58279701488112\\
3.8203145755	6.58338843544051\\
3.93112812355	6.58385021384363\\
4.04320410915	6.58421691572448\\
4.16077342655	6.58451426107853\\
4.27302587535	6.58476255684789\\
};
		\addlegendentry{BCD-based Approach}
	\end{axis}
\end{tikzpicture}%
 	\caption{Comparing the complexity of the  BCD-based and MM-based phase tuning approaches.}
 	\label{fig:MMvsBCD2}
\end{figure}
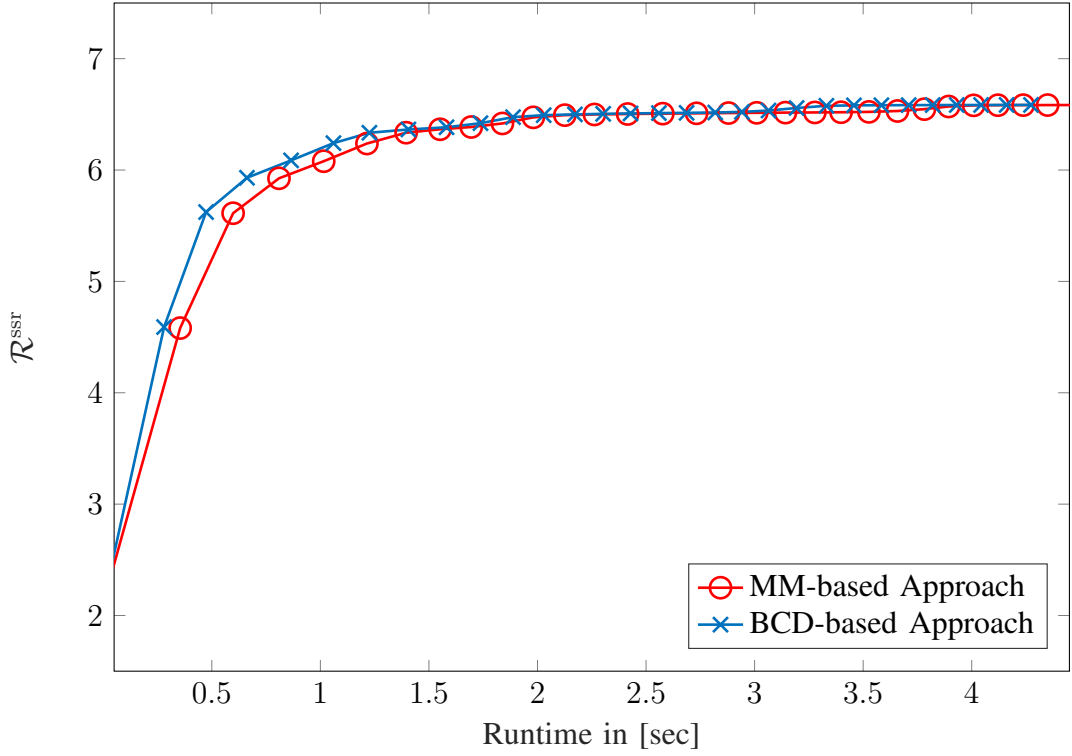

\section{Proof of Lemma~\ref{lem:Conv_1}}
\label{app:Proof_Conv_1}
Following the notation of Algorithm~\ref{Alg:1}, let $\bphi_0 = \bar{\bphi}$ and $\mb_0 = \bar{\mb}$ denote the vector of phase-shifts and auxiliary variables fixed as inputs of the algorithm, and let $\bar{\mW}$ and $\mW_{\rm new}$ represent the precoding matrix updated at the end of iteration $t$ and $t+1$, respectively. We define the following functions:
\begin{subequations}
	\begin{align}
		f_Q\brc{ \mbeta , \mgamma, \mW } &= Q_1^\rmm \brc{ \mW , \mbeta } + Q_1^\ee \brc{ \mW , \mgamma } + \log B_{1U}^\ee\\
		\mar_{\rm M}^{\ssr} \brc{\mW} &= \mar_\rmq^{\ssr} \brc{ \mW , \bar{\bphi} , \bar{\mb} }
	\end{align}
\end{subequations}
with $Q_1^\rmm \brc{ \mW , \mbeta }$,  $Q_1^\ee \brc{ \mW , \mgamma }$, and $B_{1U}^\ee$ as defined in Section~\ref{sec:First_Mar}. Moreover, let us define vectors 
\begin{subequations}
	\begin{align}
		\mbeta^{\rmF} \brc{\mW} &= \dbc{ \beta_1^\rmF \brc{\mW} , \ldots ,  \beta_K^\rmF \brc{\mW} },\\
		\mgamma^{\rmF} \brc{\mW} &= \dbc{ \gamma_1^\rmF \brc{\mW} , \ldots ,  \gamma_K^\rmF \brc{\mW} }
	\end{align}
\end{subequations}
for a given precoding matrix $\mW$, where 
\begin{subequations}
	\begin{align}
		\beta_k^\rmF \brc\mW &= \frac{ \displaystyle \sqrt{\omega_k \bar{b}_{k} \brc{1+ \frac{A_{1k}^{ \rm m } \brc{{\mW}}}{B_{1k}^{ \rm m } \brc{ {\mW} }} } } \left. \tilde{\mh}^\her_k \brc{\bar\bphi} \bw_k \right. }{
		A_{1k}^{ \rm m } \brc{\mW} + B_{1k}^{ \rm m }  \brc{\mW}
	},\\
		\gamma_k^\rmF \brc\mW &= \sqrt{ \frac{\displaystyle\omega_k \bar{b}_{k} \brc{1+ \frac{A_{1k}^{ \rm e } \brc{ {\mW} }}{B_{1k}^{ \rm e } \brc{ {\mW} }} } }{ B_{1U}^{\rm e} } } \sqrt{A_{1k}^{ \rm e } \brc{\mW}} .
	\end{align}
\end{subequations}
%

By substitution, it is straightforward to show that in iteration $t$ of Algorithm~\ref{Alg:1}, $\bar{\mbeta} = \mbeta^{\rmF} \brc{\bar\mW}$ and $\bar{\mgamma} = \mgamma^{\rmF} \brc{\bar\mW}$ and that 
\begin{align}
\mar_{\rm M}^{\ssr} \brc{\mW} = f_Q\brc{ \mbeta^\rmF \brc\mW , \mgamma^\rmF \brc\mW, \mW } . \label{Proom_Lem_1_0}
\end{align}

To start the proof, we note that for a fixed $\mW$, the marginal function $f_Q\brc{ \mbeta , \mgamma, \mW }$ is a concave function in terms of $\mbeta$ and $\mgamma$ whose maximum is at ${\mbeta} = \mbeta^{\rmF} \brc{\mW}$ and ${\mgamma} = \mgamma^{\rmF} \brc{\mW}$. As a result, for a given $\mW$, we have
\begin{subequations}
	\begin{align}
		f_Q\brc{ \bar\mbeta , \bar\mgamma, \mW } &\leq f_L\brc{ \mbeta^\rmF \brc\mW , \mgamma^\rmF \brc\mW, \mW }\\
		&= \mar_{\rm M}^{\ssr} \brc{\mW} . \label{Proom_Lem_1_2}
	\end{align}
\end{subequations}
Furthermore, the marginal function $f_Q\brc{ \bar\mbeta , \bar\mgamma, \mW }$ is a concave function in $\mW$. The quadratic problem $\maq_1^B$ indicates that the maximum of this marginal function is at $\mW =\mW_{\rm new}$, when we set $\mbeta=\bar{\mbeta}$ and $\mgamma=\bar{\mgamma}$. Thus, for a given $\mW$, we have
\begin{align}
	f_Q\brc{ \bar\mbeta , \bar\mgamma, \mW } &\leq f_Q\brc{ \bar\mbeta , \bar\mgamma, \mW_{\rm new} }. \label{Proom_Lem_1_3}
\end{align}

Considering the above inequalities, we can write
\begin{subequations}
	\begin{align}
	\mar_\rmq^{\ssr} \brc{ \bar\mW , \bar{\bphi} , \bar{\mb} } &= \mar_{\rm M}^{\ssr} \brc{\bar\mW}\\
	&\stackrel{\dagger}{=} f_Q\brc{ \bar\mbeta , \bar\mgamma, \bar\mW }\\
	&\stackrel{\star}{\leq} f_Q\brc{ \bar\mbeta , \bar\mgamma, \mW_{\rm new} }\\
	&\stackrel{\clubsuit}{\leq} f_Q\brc{  \mbeta^\rmF \brc{\mW_{\rm new} }, \mgamma^\rmF \brc{\mW_{\rm new} }, \mW_{\rm new} }\\
	&= \mar_{\rm M}^{\ssr} \brc{\bar\mW_{\rm new}}\\
	&= \mar_\rmq^{\ssr} \brc{ \bar\mW_{\rm new} , \bar{\bphi} , \bar{\mb} }
\end{align}
\end{subequations}
where $\dagger$ follows \eqref{Proom_Lem_1_0}, $\star$ comes from \eqref{Proom_Lem_1_3}, and $\clubsuit$ is concluded by \eqref{Proom_Lem_1_2}. 

\section{Proof of Lemma~\ref{lem:Conv_2}}
\label{app:Proof_Conv_2}
The proof follows the same approach as in Appendix~\ref{app:Proof_Conv_1}. Assuming the initialization $\mW_0 = \bar{\mW}$ and $\mb_0 = \bar{\mb}$, we start the proof by defining 
\begin{subequations}
	\begin{align}
		f_Q\brc{ \bff , \bpi, \bphi } &= Q_2^\rmm \brc{ \bphi , \bff } + Q_2^\ee \brc{ \bphi , \bpi } + \sum_{k=1}^K \omega_k \bar{b}_k \log B_{2Uk}^\ee ,\\
		\mar_{\rm M}^{\ssr} \brc{\bphi} &= \mar_\rmq^{\ssr} \brc{ \bar{\mW} , \bphi , \bar{\mb} }
	\end{align}
\end{subequations}
with $Q_2^\rmm \brc{ \bphi , \bff }$,  $Q_2^\ee \brc{ \bphi , \bpi }$ and $B_{2Uk}^\ee$ defined in Section~\ref{sec:Second_Mar}. We further define 
\begin{subequations}
	\begin{align}
		\bff^{\rmF} \brc{\bphi} &= \dbc{ f_1^\rmF \brc{\bphi} , \ldots ,  f_K^\rmF \brc{\bphi}  },\\
		\bpi^{\rmF} \brc{\bphi} &= \dbc{ \varpi_1^\rmF \brc{\bphi}, \ldots ,  \varpi_K^\rmF \brc{\bphi}  }
	\end{align}
\end{subequations}
for a given $\bphi$, where 
\begin{subequations}
	\begin{align}
		f_k^\rmF \brc\bphi &= \frac{ 
		\displaystyle
		\sqrt{\omega_k \bar{b}_{k} \brc{1 + 
				\frac{A_{2k}^{ \rm m } \brc{ \bphi }}{B_{2k}^{ \rm m } \brc{  \bphi }}
			} 
		} \brc{ \mh^\her_{\rd, k}\bar{\bw}_{k}  + \bphi^\her \mH^\her_k \bar{\bw}_{k} }
	}{
	\norm{ \mh^\her_{\rd, k}\bar{\mW} + \bphi^\her \mH^\her_k \bar{\mW} }^2 + \sigma_k^2
		},\\
		\varpi_k^\rmF \brc\bphi &= 
		\sqrt{
			\frac{ \displaystyle \omega_k \bar{b}_{k} \brc{1+
					\frac{A_{2k}^{ \rm e } \brc{ \bphi }}{B_{2k}^{ \rm e } \brc{ \bphi }}
				} }{ B_{2Uk}^{\rm e} } } 
			\sqrt{A_{2k}^{ \rm e } \brc{\bphi}
		}.
	\end{align}
\end{subequations}
By substitution, it is easily shown that for a given $\bphi$
\begin{align}
\mar_{\rm M}^{\ssr} \brc{\bphi}  =	f_Q\brc{ \bff^{\rmF} \brc{{\bphi}}, \bpi^{\rmF} \brc{{\bphi}} , \bphi }.
\end{align}

Let $\bar{\bphi}$ and $\bphi_{\rm new}$ indicate the phase-shift vectors updated at the end of iteration $t$ and $t+1$, respectively. The convergence of the \ac{mm} algorithm, i.e., Algorithm~\ref{alg:MM}, guarantees
\begin{align}
	f_Q\brc{ \bff^{\rmF} \brc{\bar{\bphi}}, \bpi^{\rmF} \brc{\bar{\bphi}} , \bar{\bphi} } \leq f_Q\brc{ \bff^{\rmF} \brc{{\bphi}}, \bpi^{\rmF} \brc{{\bphi}} , \bphi_{\rm new} }. \label{Proof_Lem_2_1}
\end{align}
This inequality holds as an identity, if Algorithm~\ref{alg:MM} iterates only for one iteration. Following concavity of the marginal function $f_Q\brc{ \bff , \bpi, \bphi }$ in terms of $\bff$ and $\bpi$ with maximum being at $\bff^{\rmF} \brc{\bphi}$ and $\bpi^{\rmF} \brc{\bphi}$, for a given $\bphi$, we can write that
\begin{align}
	f_Q\brc{ \bff^{\rmF} \brc{\bar{\bphi}}, \bpi^{\rmF} \brc{\bar{\bphi}} , \bphi } \leq f_Q\brc{ \bff^{\rmF} \brc{{\bphi}}, \bpi^{\rmF} \brc{{\bphi}} , \bphi }
\end{align}
for any feasible $\bphi$.

We hence can use the above inequalities and write
\begin{subequations}
	\begin{align}
		\mar_\rmq^{\ssr} \brc{ \bar{\mW} , \bar{\bphi }, \bar{\mb} }
		&= \mar_{\rm M}^{\ssr} \brc{\bar\bphi}  \\
		&=	f_Q\brc{ \bff^{\rmF} \brc{\bar{\bphi}}, \bpi^{\rmF} \brc{\bar{\bphi}} , \bar\bphi }\\
		&\leq	f_Q\brc{ \bff^{\rmF} \brc{\bar{\bphi}}, \bpi^{\rmF} \brc{\bar{\bphi}} , \bphi_{\rm new} }\\
		&\leq	f_Q\brc{ \bff^{\rmF} \brc{{\bphi}_{\rm new}}, \bpi^{\rmF} \brc{{\bphi}_{\rm new}} , \bphi_{\rm new} }\\
		&= \mar_{\rm M}^{\ssr} \brc{\bphi_{\rm new}}\\
		&= \mar_\rmq^{\ssr} \brc{ \bar{\mW} , \bphi_{\rm new} , \bar{\mb} }.
	\end{align}
\end{subequations}
Noting that \eqref{Proof_Lem_2_1} is the only constraint we need for the convergence proof, in terms of phase-tuning update algorithm, we conclude that the proof is valid for any alternative of Algorithm~\ref{alg:MM} which converges to its fixed-point in a \textit{non-decreasing fashion}.

\begin{acronym}
	\acro{mimo}[MIMO]{multiple-input multiple-output}
	\acro{mmwave}[mmW]{millimeter wave}
	\acro{tdd}[TDD]{time division duplexing}
	\acro{sinr}[SINR]{signal-to-interference-plus-noise ratio}
	\acro{csi}[CSI]{channel state information}
		\acro{ao}[AO]{alternating optimization}
	\acro{rhs}[r.h.s.]{right hand side}
	\acro{lhs}[l.h.s.]{left hand side}
	\acro{awgn}[AWGN]{additive white Gaussian noise}
	\acro{iid}[i.i.d.]{independent and identically distributed}
	\acro{ut}[UT]{user terminal}
	\acro{tas}[TAS]{transmit antenna selection}
	\acro{rf}[RF]{radio frequency}
	\acro{srzf}[SRZF]{secure regularized zero-forcing}
	\acro{irs}[IRS]{intelligent reflecting surface}
	\acro{mm}[MM]{majorization-maximization}
	\acro{mmse}[MMSE]{minimum mean square error}
	\acro{fp}[FP]{fractional programming}
	\acro{bs}[BS]{base station}
	\acro{bcd}[BCD]{block coordinate descent}
	\acro{qos}[QoS]{quality-of-service}
	\acro{miso}[MISO]{multiple-input single-output}
\end{acronym}

\bibliographystyle{IEEEtran}	
\bibliography{IEEEabrv,Bib}

\begin{thebibliography}{10}
\providecommand{\url}[1]{#1}
\csname url@samestyle\endcsname
\providecommand{\newblock}{\relax}
\providecommand{\bibinfo}[2]{#2}
\providecommand{\BIBentrySTDinterwordspacing}{\spaceskip=0pt\relax}
\providecommand{\BIBentryALTinterwordstretchfactor}{4}
\providecommand{\BIBentryALTinterwordspacing}{\spaceskip=\fontdimen2\font plus
\BIBentryALTinterwordstretchfactor\fontdimen3\font minus
  \fontdimen4\font\relax}
\providecommand{\BIBforeignlanguage}[2]{{%
\expandafter\ifx\csname l@#1\endcsname\relax
\typeout{** WARNING: IEEEtran.bst: No hyphenation pattern has been}%
\typeout{** loaded for the language `#1'. Using the pattern for}%
\typeout{** the default language instead.}%
\else
\language=\csname l@#1\endcsname
\fi
#2}}
\providecommand{\BIBdecl}{\relax}
\BIBdecl

\bibitem{saad2019vision}
W.~Saad, M.~Bennis, and M.~Chen, ``A vision of {6G} wireless systems:
  Applications, trends, technologies, and open research problems,'' \emph{IEEE
  Network}, vol.~34, no.~3, pp. 134--142, Jun. 2020.

\bibitem{yaacoub2020key}
E.~Yaacoub and M.-S. Alouini, ``A key {6G} challenge and
  opportunity—{C}onnecting the base of the pyramid: A survey on rural
  connectivity,'' \emph{Proc. IEEE}, vol. 108, no.~4, pp. 533 --582, Apr. 2020.

\bibitem{letaief2019roadmap}
K.~B. Letaief, W.~Chen, Y.~Shi, J.~Zhang, and Y.-J.~A. Zhang, ``The roadmap to
  {6G}: {AI} empowered wireless networks,'' \emph{IEEE Commun. Mag.}, vol.~57,
  no.~8, pp. 84--90, Aug. 2019.

\bibitem{larsson2014massive}
E.~G. Larsson, O.~Edfors, F.~Tufvesson, and T.~L. Marzetta, ``Massive {MIMO}
  for next generation wireless systems,'' \emph{IEEE Commun. Mag.}, vol.~52,
  no.~2, pp. 186--195, Feb. 2014.

\bibitem{hoydis2013massive}
J.~Hoydis, S.~Ten~Brink, and M.~Debbah, ``Massive {MIMO} in the {UL}/{DL} of
  cellular networks: How many antennas do we need?'' \emph{IEEE J. sel. Areas
  Commun.}, vol.~31, no.~2, pp. 160--171, Jan. 2013.

\bibitem{yang20196g}
P.~Yang, Y.~Xiao, M.~Xiao, and S.~Li, ``{6G} wireless communications: Vision
  and potential techniques,'' \emph{IEEE Network}, vol.~33, no.~4, pp. 70--75,
  Jul. 2019.

\bibitem{wang2018millimeter}
X.~Wang, L.~Kong, F.~Kong, F.~Qiu, M.~Xia, S.~Arnon, and G.~Chen, ``Millimeter
  wave communication: A comprehensive survey,'' \emph{IEEE Commun. Surv. \&
  Tut.}, vol.~20, no.~3, pp. 1616--1653, Jun. 2018.

\bibitem{rappaport2013millimeter}
T.~S. Rappaport, S.~Sun, R.~Mayzus, H.~Zhao, Y.~Azar, K.~Wang, G.~N. Wong,
  J.~K. Schulz, M.~Samimi, and F.~Gutierrez, ``Millimeter wave mobile
  communications for {5G} cellular: It will work!'' \emph{IEEE Access}, vol.~1,
  pp. 335--349, May 2013.

\bibitem{wang20206g}
C.-X. Wang, J.~Huang, H.~Wang, X.~Gao, X.~You, and Y.~Hao, ``{6G} wireless
  channel measurements and models: Trends and challenges,'' \emph{IEEE Veh.
  Techn. Mag.}, vol.~15, no.~4, pp. 22--32, Dec. 2020.

\bibitem{an2017achieving}
J.~An, K.~Yang, J.~Wu, N.~Ye, S.~Guo, and Z.~Liao, ``Achieving sustainable
  ultra-dense heterogeneous networks for 5{G},'' \emph{IEEE Commun. Mag.},
  vol.~55, no.~12, pp. 84--90, Dec. 2017.

\bibitem{niu2017fast}
C.~Niu, Y.~Li, R.~Q. Hu, and F.~Ye, ``Fast and efficient radio resource
  allocation in dynamic ultra-dense heterogeneous networks,'' \emph{IEEE
  Access}, vol.~5, pp. 1911--1924, Feb. 2017.

\bibitem{zhang2020envisioning}
S.~Zhang, J.~Liu, H.~Guo, M.~Qi, and N.~Kato, ``Envisioning device-to-device
  communications in 6{G},'' \emph{IEEE Network}, vol.~34, no.~3, pp. 86--91,
  Jun. 2020.

\bibitem{malkowsky2017world}
S.~Malkowsky, J.~Vieira, L.~Liu, P.~Harris, K.~Nieman, N.~Kundargi, I.~C. Wong,
  F.~Tufvesson, V.~{\"O}wall, and O.~Edfors, ``The world’s first real-time
  testbed for massive {MIMO}: Design, implementation, and validation,''
  \emph{IEEE Access}, vol.~5, pp. 9073--9088, May 2017.

\bibitem{huo2019enabling}
Y.~Huo, X.~Dong, W.~Xu, and M.~Yuen, ``Enabling multi-functional 5{G} and
  beyond user equipment: A survey and tutorial,'' \emph{IEEE Access}, vol.~7,
  pp. 116\,975--117\,008, Aug. 2019.

\bibitem{wu2020intelligent}
Q.~Wu, S.~Zhang, B.~Zheng, C.~You, and R.~Zhang, ``Intelligent reflecting
  surface aided wireless communications: A tutorial,'' \emph{arXiv preprint
  arXiv:2007.02759}, Jul. 2020.

\bibitem{pan2020multicell}
C.~Pan, H.~Ren, K.~Wang, W.~Xu, M.~Elkashlan, A.~Nallanathan, and L.~Hanzo,
  ``Multicell {MIMO} communications relying on intelligent reflecting
  surfaces,'' \emph{IEEE Trans. on Wireless Commun.}, vol.~19, no.~8, pp.
  5218--5233, Aug. 2020.

\bibitem{ozdogan2019intelligent}
{\"O}.~{\"O}zdogan, E.~Bj{\"o}rnson, and E.~G. Larsson, ``Intelligent
  reflecting surfaces: Physics, propagation, and pathloss modeling,''
  \emph{IEEE Wireless Commun. Letters}, vol.~9, no.~5, pp. 581--585, Dec. 2019.

\bibitem{bjornson2019intelligentRelay}
E.~Bj{\"o}rnson, {\"O}.~{\"O}zdogan, and E.~G. Larsson, ``Intelligent
  reflecting surface versus decode-and-forward: How large surfaces are needed
  to beat relaying?'' \emph{IEEE Wireless Communications Letters}, vol.~9,
  no.~2, pp. 244--248, Feb. 2019.

\bibitem{hu2018beyond}
S.~Hu, F.~Rusek, and O.~Edfors, ``Beyond massive {MIMO}: The potential of data
  transmission with large intelligent surfaces,'' \emph{IEEE Trans. Signal
  Process.}, vol.~66, no.~10, pp. 2746--2758, May 2018.

\bibitem{bjornson2019massive}
E.~Bj{\"o}rnson, L.~Sanguinetti, H.~Wymeersch, J.~Hoydis, and T.~L. Marzetta,
  ``Massive {MIMO} is a reality—{W}hat is next?: Five promising research
  directions for antenna arrays,'' \emph{Digital Signal Process.}, vol.~94, pp.
  3--20, Nov. 2019.

\bibitem{wu2019intelligent}
Q.~Wu and R.~Zhang, ``Intelligent reflecting surface enhanced wireless network
  via joint active and passive beamforming,'' \emph{IEEE Trans. Wireless
  Commun.}, vol.~18, no.~11, pp. 5394--5409, Nov. 2019.

\bibitem{bjornson2020power}
E.~Bj{\"o}rnson and L.~Sanguinetti, ``Power scaling laws and near-field
  behaviors of massive {MIMO} and intelligent reflecting surfaces,'' \emph{IEEE
  Open J. Commun. Soc.}, vol.~1, pp. 1306--1324, Sep. 2020.

\bibitem{yue2020analysis}
D.-W. Yue, H.~H. Nguyen, and Y.~Sun, ``Analysis of intelligent reflecting
  surface-assisted mm{W}ave doubly massive-{MIMO} communications,'' \emph{arXiv
  preprint arXiv:2003.00282}, Mar. 2020.

\bibitem{jamali2019intelligent}
V.~Jamali, A.~M. Tulino, G.~Fischer, R.~M{\"u}ller, and R.~Schober,
  ``Intelligent reflecting and transmitting surface aided millimeter wave
  massive {MIMO},'' \emph{arXiv preprint arXiv:1902.07670}, 2019.

\bibitem{bereyhi2019papr}
A.~Bereyhi, V.~Jamali, R.~R. M{\"u}ller, G.~Fischer, R.~Schober, and A.~M.
  Tulino, ``{PAPR}-limited precoding in massive {MIMO} systems with reflect-and
  transmit-array antennas,'' in \emph{in Proc. 53rd Asilomar Conf. on Signals,
  Systems, and Computers}.\hskip 1em plus 0.5em minus 0.4em\relax IEEE, Pacific
  Grove, CA, USA, Nov., 2019, pp. 1690--1694.

\bibitem{bereyhi2020single}
A.~Bereyhi, V.~Jamali, R.~R. M{\"u}ller, A.~M. Tulino, G.~Fischer, and
  R.~Schober, ``A single-{RF} architecture for multiuser massive {MIMO} via
  reflecting surfaces,'' in \emph{in Proc. IEEE International Conf. on
  Acoustics, Speech and Signal Process.}\hskip 1em plus 0.5em minus 0.4em\relax
  IEEE, Barcelona, Spain, May, 2020, pp. 8688--8692.

\bibitem{karasik2021single}
R.~Karasik, O.~Simeone, M.~Di~Renzo, and S.~Shamai, ``Single-{RF} multi-user
  communication through reconfigurable intelligent surfaces: An
  information-theoretic analysis,'' \emph{arXiv preprint arXiv:2101.07556},
  Jan. 2021.

\bibitem{dong2020enhancing}
L.~Dong and H.-M. Wang, ``Enhancing secure {MIMO} transmission via intelligent
  reflecting surface,'' \emph{IEEE Trans. Wireless Commun.}, vol.~19, no.~11,
  pp. 7543--7556, Nov. 2020.

\bibitem{lu2020robust}
X.~Lu, W.~Yang, X.~Guan, Q.~Wu, and Y.~Cai, ``Robust and secure beamforming for
  intelligent reflecting surface aided {mmWave} {MISO} systems,'' \emph{IEEE
  Wireless Communications Letters}, vol.~9, no.~12, pp. 2068--2072, Dec. 2020.

\bibitem{hong2020artificial}
S.~Hong, C.~Pan, H.~Ren, K.~Wang, and A.~Nallanathan, ``Artificial-noise-aided
  secure {MIMO} wireless communications via intelligent reflecting surface,''
  \emph{IEEE Transactions on Communications}, vol.~68, no.~12, pp. 7851--7866,
  Dec. 2020.

\bibitem{cao2019intelligent}
Y.~Cao and T.~Lv, ``Intelligent reflecting surface aided multi-user
  millimeter-wave communications for coverage enhancement,'' \emph{arXiv
  preprint arXiv:1910.02398}, 2019.

\bibitem{huang2019reconfigurable}
C.~Huang, A.~Zappone, G.~C. Alexandropoulos, M.~Debbah, and C.~Yuen,
  ``Reconfigurable intelligent surfaces for energy efficiency in wireless
  communication,'' \emph{IEEE Trans. Wireless Commun.}, vol.~18, no.~8, pp.
  4157--4170, Aug. 2019.

\bibitem{liu2020reconfigurable}
Y.~Liu, X.~Liu, X.~Mu, T.~Hou, J.~Xu, Z.~Qin, M.~Di~Renzo, and N.~Al-Dhahir,
  ``Reconfigurable intelligent surfaces: Principles and opportunities,''
  \emph{arXiv preprint arXiv:2007.03435}, 2020.

\bibitem{wu2019towards}
Q.~Wu and R.~Zhang, ``Towards smart and reconfigurable environment: Intelligent
  reflecting surface aided wireless network,'' \emph{IEEE Communications
  Magazine}, vol.~58, no.~1, pp. 106--112, Jan. 2019.

\bibitem{shen2019secrecy}
H.~Shen, W.~Xu, S.~Gong, Z.~He, and C.~Zhao, ``Secrecy rate maximization for
  intelligent reflecting surface assisted multi-antenna communications,''
  \emph{IEEE Commun. Letters}, vol.~23, no.~9, pp. 1488--1492, Sep. 2019.

\bibitem{cui2019secure}
M.~Cui, G.~Zhang, and R.~Zhang, ``Secure wireless communication via intelligent
  reflecting surface,'' \emph{IEEE Wireless Commun. Letters}, vol.~8, no.~5,
  pp. 1410--1414, May 2019.

\bibitem{wang2020intelligent}
H.-M. Wang, J.~Bai, and L.~Dong, ``Intelligent reflecting surfaces assisted
  secure transmission without eavesdropper's {CSI},'' \emph{IEEE Signal
  Process. Letters}, vol.~27, pp. 1300--1304, Jul. 2020.

\bibitem{yang2020secrecy}
L.~Yang, J.~Yang, W.~Xie, M.~O. Hasna, T.~Tsiftsis, and M.~Di~Renzo, ``Secrecy
  performance analysis of {RIS}-aided wireless communication systems,''
  \emph{IEEE Trans. Veh. Technol.}, vol.~69, no.~10, pp. 12\,296--12\,300, Oct.
  2020.

\bibitem{song2020truly}
Y.~Song, M.~R. Khandaker, F.~Tariq, and K.-K. Wong, ``Truly intelligent
  reflecting surface-aided secure communication using deep learning,''
  \emph{arXiv preprint arXiv:2004.03056}, 2020.

\bibitem{guan2020intelligent}
X.~Guan, Q.~Wu, and R.~Zhang, ``Intelligent reflecting surface assisted secrecy
  communication: Is artificial noise helpful or not?'' \emph{IEEE Wireless
  Commun. Letters}, vol.~9, no.~6, pp. 778--782, Sep. 2020.

\bibitem{xu2019resource}
D.~Xu, X.~Yu, Y.~Sun, D.~W.~K. Ng, and R.~Schober, ``Resource allocation for
  secure {IRS}-assisted multiuser {MISO} systems,'' in \emph{2019 IEEE Globecom
  Workshops}, Waikoloa, HI, USA, Dec., 2019, pp. 1--6.

\bibitem{bereyhi2020secure}
A.~Bereyhi, S.~Asaad, R.~R. M{\"u}ller, R.~F. Schaefer, and H.~V. Poor,
  ``Secure transmission in {IRS}-assisted {MIMO} systems with active
  eavesdroppers,'' \emph{arXiv preprint arXiv:2010.07989}, 2020.

\bibitem{bereyhi2018robustness}
A.~Bereyhi, S.~Asaad, R.~R. Muller, R.~F. Schaefer, and A.~M. Rabiei, ``On
  robustness of massive {MIMO} systems against passive eavesdropping under
  antenna selection,'' in \emph{Proc. of IEEE Global Commun. Conf.}, Abu Dhabi,
  UAE, Dec., 2018, pp. 1--7.

\bibitem{bereyhi2019robustness}
A.~Bereyhi, S.~Asaad, R.~R. Müller, R.~F. Schaefer, G.~Fischer, and
  H.~Vincent~Poor, ``Securing massive {MIMO} systems: Secrecy for free with
  low-complexity architectures,'' \emph{IEEE Trans. Wireless Commun.}, vol.
  Early Access, pp. 1--15, Apr. 2021.

\bibitem{kapetanovic2015physical}
D.~Kapetanovic, G.~Zheng, and F.~Rusek, ``Physical layer security for massive
  {MIMO}: An overview on passive eavesdropping and active attacks,'' \emph{IEEE
  Commun. Mag.}, vol.~53, no.~6, pp. 21--27, Jun. 2015.

\bibitem{wang2020energy}
Q.~Wang, F.~Zhou, R.~Q. Hu, and Y.~Qian, ``Energy-efficient beamforming and
  cooperative jamming in {IRS}-assisted {MISO} networks,'' in \emph{IEEE Int.
  Conf. Commun.}, Jun. Virtual Conf., 2020, pp. 1--7.

\bibitem{chen2019intelligent}
J.~Chen, Y.-C. Liang, Y.~Pei, and H.~Guo, ``Intelligent reflecting surface: A
  programmable wireless environment for physical layer security,'' \emph{IEEE
  Access}, vol.~7, pp. 82\,599--82\,612, 2019.

\bibitem{yu2020robust}
X.~Yu, D.~Xu, Y.~Sun, D.~W.~K. Ng, and R.~Schober, ``Robust and secure wireless
  communications via intelligent reflecting surfaces,'' \emph{IEEE J. Sel.
  Areas Commun.}, vol.~38, no.~11, pp. 2637--2652, Nov., 2020.

\bibitem{shen2018FP}
K.~{Shen} and W.~{Yu}, ``Fractional programming for communication systems --
  {P}art {I}: Power control and beamforming,'' \emph{IEEE Trans. Signal
  Process.}, vol.~66, no.~10, pp. 2616--2630, May 2018.

\bibitem{shen2018FP2}
------, ``Fractional programming for communication systems -- part {II}: Uplink
  scheduling via matching,'' \emph{IEEE Trans. Signal Process.}, vol.~66,
  no.~10, pp. 2631--2644, May 2018.

\bibitem{guo2020weighted}
H.~Guo, Y.-C. Liang, J.~Chen, and E.~G. Larsson, ``Weighted sum-rate
  maximization for reconfigurable intelligent surface aided wireless
  networks,'' \emph{IEEE Trans. Wireless Commun.}, vol.~19, no.~5, pp.
  3064--3076, May 2020.

\bibitem{wang2019channel}
Z.~Wang, L.~Liu, and S.~Cui, ``Channel estimation for intelligent reflecting
  surface assisted multiuser communications: Framework, algorithms, and
  analysis,'' \emph{arXiv preprint arXiv:1912.11783}, Apr. 2020.

\bibitem{he2019cascaded}
Z.-Q. He and X.~Yuan, ``Cascaded channel estimation for large intelligent
  metasurface assisted massive {MIMO},'' \emph{IEEE Wireless Commun. Letters},
  vol.~9, no.~2, pp. 210--214, Feb. 2020.

\bibitem{Liu2010multiple}
R.~{Liu}, T.~{Liu}, H.~V. {Poor}, and S.~{Shamai}, ``Multiple-input
  multiple-output {Gaussian} broadcast channels with confidential messages,''
  \emph{IEEE Trans. Inf. Theory}, vol.~56, no.~9, pp. 4215--4227, Sep. 2010.

\bibitem{Liu2013new}
------, ``New results on multiple-input multiple-output broadcast channels with
  confidential messages,'' \emph{IEEE Trans Inf. Theory}, vol.~59, no.~3, pp.
  1346--1359, Mar. 2013.

\bibitem{PhysRevB.94.075142}
\BIBentryALTinterwordspacing
V.~S. Asadchy, M.~Albooyeh, S.~N. Tcvetkova, A.~D\'{\i}az-Rubio, Y.~Ra'di, and
  S.~A. Tretyakov, ``Perfect control of reflection and refraction using
  spatially dispersive metasurfaces,'' \emph{Phys. Rev. B}, vol.~94, p. 075142,
  Aug. 2016. [Online]. Available:
  \url{https://link.aps.org/doi/10.1103/PhysRevB.94.075142}
\BIBentrySTDinterwordspacing

\bibitem{oggier2011secrecy}
F.~Oggier and B.~Hassibi, ``The secrecy capacity of the {MIMO} wiretap
  channel,'' \emph{IEEE Trans. Inf. Theory}, vol.~57, no.~8, pp. 4961--4972,
  Aug. 2011.

\bibitem{sun2017majorization}
Y.~Sun, P.~Babu, and D.~P. Palomar, ``Majorization-minimization algorithms in
  signal processing, communications, and machine learning,'' \emph{IEEE Trans.
  Signal Process.}, vol.~65, no.~3, pp. 794--816, Feb. 2017.

\bibitem{bertsekas1998nonlinear}
D.~P.~a. Bertsekas, \emph{Nonlinear Programming}, 2nd~ed.\hskip 1em plus 0.5em
  minus 0.4em\relax Athena Scientific Belmont, MA, USA: Athena Scientific,
  1999.

\bibitem{shi2018spectral}
Q.~Shi and M.~Hong, ``Spectral efficiency optimization for millimeter wave
  multiuser {MIMO} systems,'' \emph{IEEE J. Sel. Topics in Signal Process.},
  vol.~12, no.~3, pp. 455--468, Jun. 2018.

\bibitem{cvx}
M.~Grant and S.~Boyd, ``{CVX}: Matlab software for disciplined convex
  programming, version 2.1,'' \url{http://cvxr.com/cvx}, Mar. 2014.

\bibitem{gb08}
------, ``Graph implementations for nonsmooth convex programs,'' in
  \emph{Recent Advances in Learning and Control}, ser. Lecture Notes in Control
  and Information Sciences, V.~Blondel, S.~Boyd, and H.~Kimura, Eds.\hskip 1em
  plus 0.5em minus 0.4em\relax Springer-Verlag Limited, 2008, pp. 95--110,
  \url{http://stanford.edu/~boyd/graph_dcp.html}.

\bibitem{asaad2019secure}
S.~Asaad, A.~Bereyhi, R.~R. M{\"u}ller, and R.~F. Schaefer, ``Secure
  regularized zero forcing for multiuser {MIMOME} channels,'' IEEE, pp.
  1108--1113, Pacific Grove, CA, USA, Nov., 2019.

\bibitem{ma2010semidefinite}
W.-K.~K. Ma, ``Semidefinite relaxation of quadratic optimization problems and
  applications,'' \emph{IEEE Signal Process. Mag.}, vol. 1053, no. 5888/10,
  2010.

\bibitem{absil2009optimization}
P.-A. Absil, R.~Mahony, and R.~Sepulchre, \emph{Optimization algorithms on
  matrix manifolds}.\hskip 1em plus 0.5em minus 0.4em\relax Princeton
  University Press, 2009.

\bibitem{song2015optimization}
J.~Song, P.~Babu, and D.~P. Palomar, ``Optimization methods for designing
  sequences with low autocorrelation sidelobes,'' \emph{IEEE Trans. Signal
  Process.}, vol.~63, no.~15, pp. 3998--4009, Aug. 2015.

\bibitem{you2021reconfigurable}
L.~You, J.~Xiong, Y.~Huang, D.~W.~K. Ng, C.~Pan, W.~Wang, and X.~Gao,
  ``Reconfigurable intelligent surfaces-assisted multiuser {MIMO} uplink
  transmission with partial {CSI},'' \emph{IEEE Transactions on Wireless
  Communications}, vol.~20, no.~9, pp. 5613--5627, Sep. 2021.

\bibitem{hunter2004tutorial}
D.~R. Hunter and K.~Lange, ``A tutorial on {MM} algorithms,'' \emph{The
  American Statistician}, vol.~58, no.~1, pp. 30--37, Feb. 2004.

\bibitem{stoica2004cyclic}
P.~Stoica and Y.~Selen, ``Cyclic minimizers, majorization techniques, and the
  expectation-maximization algorithm: {A} refresher,'' \emph{IEEE Signal
  Process. Mag.}, vol.~21, no.~1, pp. 112--114, Jan. 2004.

\bibitem{razaviyayn2013unified}
M.~Razaviyayn, M.~Hong, and Z.-Q. Luo, ``A unified convergence analysis of
  block successive minimization methods for nonsmooth optimization,''
  \emph{SIAM J. Opt.}, vol.~23, no.~2, pp. 1126--1153, Jun. 2013.

\end{thebibliography}
\end{document}